\tikzset{->, auto, >=latex', label distance=-.25cm, shorten <=-3pt,
  shorten >=-3pt}
\tikzset{every node={font=\footnotesize}}
\newdefinition{remark}{Remark}
\newdefinition{definition}{Definition}
\newdefinition{example}{Example}
\newdefinition{problem}{Problem}
\newtheorem{theorem}{Theorem}
\newtheorem{lemma}[theorem]{Lemma}
\newtheorem{proposition}[theorem]{Proposition}
\newtheorem{corollary}[theorem]{Corollary}
\newproof{proof}{Proof}
\newcommand*\N{{\textup{\textsf{N}}}}
\newcommand*\twotwo{{\textup{\textsf{2\texttt{+}2}}}}
\newcommand*\NN{{\textup{\textsf{N\hspace{-.65ex}N}}}}
\newcommand*\NPLUS{{\textup{\textsf{M}}}}
\newcommand*\NMINUS{{\textup{\textsf{W}}}}
\newcommand*\TC{{\textup{\textsf{3C}}}}
\newcommand*\LN{{\textup{\textsf{LN}}}}
\newcommand*\closureop[1]{\mathord{#1}}
\newcommand*\closure[2]{\closureop{#1}#2}
\newcommand*\intpt{\text{\tiny {\Circle}}}
\newcommand*\inpt{\tiny {\ensuremath \RIGHTCIRCLE}}
\newcommand*\outpt{\tiny {\ensuremath \LEFTCIRCLE}}
\newcommand*\rintpt{{\raisebox{1pt}{\textup{\intpt}}}}
\newcommand*\rinpt{\mathop{\raisebox{1pt}{\inpt}}}
\newcommand*\routpt{\mathop{\raisebox{1pt}{\outpt}}}
\newcommand*\riopt{\mathop{\raisebox{1pt}{\inpt\hspace{-1.6pt}\outpt}}}
\newcommand*\rtone{\mathop{\raisebox{.5pt}{\tiny $1$}}}
\newcommand*\rti{\mathop{\raisebox{.5pt}{\tiny $i$}}}
\newcommand*\rtj{\mathop{\raisebox{.5pt}{\tiny $j$}}}
\newcommand*\ie{\textit{i.e.},\xspace}
\DeclareRobustCommand{\single}[1]{%
  \ifthenelse{\equal{#1}{4}}{%
    \mcal{S}%
  }{%
    \ifthenelse{\equal{#1}{1}}{%
      \mcal T%
    }{%
      \errorUndefinedArgument%
    }}}
\newcommand*\incomp{\mathrel{\|}}
\newcommand*\up{\mathord{\uparrow}}
\newcommand*\down{\mathord{\downarrow}}
\newcommand*\intord{\dashrightarrow}
\newcommand*\mcal[1]{\mathcal{#1}}
\newcommand*\from{\leftarrow}
\newcommand*\To{\Rightarrow}
\newcommand*\glue{\mathbin{\triangleright}}
\newcommand*\op{\textup{op}}
\let\po\vec
\newcommand*\poo[1]{\overrightarrow{#1}}
\newcommand*\pomser{\mathbin{*}}
\newcommand*\Nat{\mathbbm{N}}
\newcommand*\Real{\mathbbm{R}}
\newcommand*\subsu{\preceq}
\newcommand*\id{\textup{\textsf{id}}}
\newcommand*\dom{\textit{dom}}
\newcommand*\cod{\textit{cod}}
\newcommand*\cat[1]{\text{\textup{\textsf{#1}}}}
\newcommand*\Pos{\cat{Pos}}
\newcommand*\iPos{\cat{iPos}}
\begin{document}

\journal{Information and Computation}

\begin{frontmatter}

\title{Posets with Interfaces as a Model for Concurrency}

\author[1]{Uli Fahrenberg\corref{cor1}}
\ead[1]{uli@lrde.epita.fr}

\author[2]{Christian Johansen} 
\ead[2]{christian@johansenresearch.info}

\author[3]{Georg Struth} 
\ead[3]{g.struth@sheffield.ac.uk}

\author[4]{Krzysztof Ziemia{\'n}ski}
\ead[4]{ziemians@mimuw.edu.pl}

\address[1]{EPITA Research and Development Laboratory (LRDE), France}
\address[2]{Norwegian University of Science and Technology, Norway} 
\address[3]{University of Sheffield, UK }
\address[4]{University of Warsaw, Poland}

\cortext[cor1]{Corresponding author.}

\date{}

\begin{abstract}
  We introduce posets with interfaces (iposets) and generalise their
  standard serial composition to a new gluing composition.  In the
  partial order semantics of concurrency, interfaces and gluing allow
  modelling events that extend in time and across components.
  Alternatively, taking a decompositional view, interfaces allow
  cutting through events, while serial composition may only cut
  through edges of a poset.  We show that iposets under gluing
  composition form a category, which generalises the monoid of posets
  under serial composition up to isomorphism. They form a $2$-category
  when a subsumption order and a lax tensor in the form of a
  non-commutative parallel composition are added, which generalises
  the interchange monoids used for modelling series-parallel
  posets. We also study the gluing-parallel hierarchy of iposets,
  which generalises the standard series-parallel one.  The class of
  gluing-parallel iposets contains that of series-parallel posets and
  the class of interval orders, which are well studied in concurrency
  theory, too.  We also show that it is strictly contained in the
  class of all iposets by identifying several forbidden substructures.
 
  \begin{keyword}
    Poset; interval order; series-parallel poset; concurrency theory;
    iposet; gluing-parallel iposet
  \end{keyword}
\end{abstract}

\end{frontmatter}

\section{Introduction}

Our general motivation for studying posets with interfaces (iposets)
comes from concurrent Kleene algebra
\cite{DBLP:journals/jlp/HoareMSW11}. There, such structures have been
proposed as semantics for concurrent programs, because the
compositionality inherent to standard partial order models limits
their applicability (see \cite{DBLP:books/mc/21/Struth21} for a
survey). Our particular conceptual choices of interfaces and
operations on iposets are motivated by languages of higher-dimensional
automata \cite{Hdalang}. Our gluing composition is meant to match the
gluing of such automata along faces, our non-commutative parallel
composition to correspond to their tensor product. Higher-dimensional
automata generalise many other models of concurrency
\cite{DBLP:journals/tcs/Glabbeek06}, which makes the study of their
languages relevant to concurrent Kleene algebra.

Nevertheless, this article is not about Kleene algebras or
higher-dimensional automata per se. Instead we study the basic
order-theoretic and algebraic properties of iposets, explore their
categorical structure and investigate their alternation hierarchy
under parallel and gluing composition.

Our starting point are partial order semantics of concurrency
\cite{DBLP:books/sp/Vogler92}, where points of finite posets model
events of concurrent systems and partial order relations model
temporal precedences or causal dependencies between events, as well as
events that occur in parallel or independently. Events are often
labelled with the actions of a concurrent system. One then tends to
forget individual events and model only structural properties of
labelled posets up to isomorphism. Isomorphism classes of labelled
posets are known as \emph{partial words} or \emph{pomsets}
\cite{DBLP:journals/ipl/Winkowski77, DBLP:journals/fuin/Grabowski81,
  Pratt86pomsets, DBLP:journals/tcs/Gischer88}. Pomsets are usually
equipped with two compositions. Intuitively, a parallel composition
lays out one pomset above another. A serial one lays out one pomset to
the left of another and extends their orders so that each event in the
left pomset precedes every event in the right one.  The
series-parallel pomsets are then generated by the empty pomset and all
one-event pomsets, and closed under serial and parallel
compositions. It is a strict subclass of the class of all pomsets,
because it excludes precisely those pomsets that contain an induced
subposet of the shape
\begin{equation*}
  \N=\! \vcenter{\hbox{%
      \begin{tikzpicture}[y=.5cm]
        \node (0) at (0,0) {\intpt};
        \node (1) at (0,-1) {\intpt};
        \node (2) at (1,0) {\intpt};
        \node (3) at (1,-1) {\intpt};
        \path (0) edge (2) (1) edge (2) (1) edge (3);
      \end{tikzpicture}}} .
\end{equation*}
Finally, a subsumption order compares pomsets with the same
set of events in terms of their precedence relations
\cite{DBLP:journals/fuin/Grabowski81, DBLP:journals/tcs/Gischer88}.
For so-called \emph{concurrent monoids}, that is, ordered double
monoids with shared units in which the serial composition $\cdot$ and
the commutative parallel composition $\times$ interact via the lax
interchange law
\begin{equation*}
  (x_1\times x_2)\cdot (y_1\times y_2)\le (x_1\cdot y_1)\times
  (x_2\cdot y_2),
\end{equation*}
the series-parallel pomsets over the set $A$ of labels and with 
subsumption order $\le$ are then freely generated by $A$ in this class of
algebras~\cite{DBLP:journals/tcs/Gischer88,DBLP:journals/mscs/BloomE97}.

Here we adapt these notions and results to iposets. As we are mainly
interested in the order structure, we ignore labels and tacitly work
with isomorphism classes of finite posets.

One motivation for our use of interfaces comes from modeling events
with duration or structure (by contrast to interleaving models of
concurrency where they are instantaneous and indivisible).  In the
standard partial order semantics of concurrency we may interpret the
order on events as temporal precedence.  A concurrent system can then
be decomposed into two parallel parts if and only if these are
temporally disconnected.  Similarly, it can be serially decomposed if
and only if all events in the first component precede all events in
the second.  In iposets, more generally, we allow serial
decompositions to \emph{cut through events} as well, so that the
extend from the first component into the second. Such events then form
the target interface of the first component and the source interface
of the second.  See \cite{Pratt03trans_cancel, conf/pda/Mattern88} for
further discussion and physical interpretations.

To match this decompositional view with algebraic operations, the
source interface of an iposet is formed by some of its minimal events
(unless it is empty) and the target interface by some of its maximal
ones. Our gluing composition acts like standard serial composition
outside of interfaces, but gluing interface events together. The
particular matching of interface events is determined by numbering
them, and this and the resulting renumbering makes parallel
composition non-commutative. The gluing composition has many units,
whereas the parallel composition has a single one---the empty poset.
Yet both operations are associative. The standard serial and parallel
compositions of posets are recovered when iposets have empty
interfaces.

Using these basic algebraic properties we show that iposets with
gluing and parallel compositions and a suitable generalisation of the
subsumption order form a strict 2-category with parallel composition
as a lax tensor. This yields a lax interchange law between gluing
and parallel composition.  The notion of strict 2-categories with lax
tensors is new, but closely related to standard 2-categorical
structures. It generalises the concurrent monoids that capture the
equational theory of series-parallel pomsets.

By analogy to series-parallel posets, we also define a hierarchy of
gluing-parallel iposets and show that it does not collapse. We
identify forbidden substructures in order to show that it is a proper
subclass of the class of all iposets. We relate it with the hierarchy
of series-parallel pomsets as well as with interval orders. All
series-parallel posets are in the gluing-parallel hierarchy. The
interval orders, which allow $\N$-shaped posets, are captured at its
first alternation level.  Interval orders arise naturally in geometric
realisation of higher-dimensional automata or in situations where
events in concurrent systems extend in time
\cite{conf/pda/Mattern88,DBLP:journals/acta/GlabbeekG01,
  Pratt03trans_cancel}, such as weak and transactional memory systems
\cite{DBLP:journals/toplas/HerlihyW90}.  A precise geometric
characterisation of gluing-parallel iposets in terms of forbidden
substructures remains open.  Further, the precise algebraic setting in
which they would be freely generated, analogous to series-parallel
posets being free concurrent monoids, is open.  (The $2$-category
outlined above does not suffice: certain iposets satisfy a strong
interchange law.)

This article is based on a previous conference paper
\cite{DBLP:conf/RelMiCS/FahrenbergJST20}.  The description of iposets
in terms of strict 2-categories in Definition \ref{de:laxtensor} and
Theorem \ref{T:ipos2cat} is one of them.  A sufficient condition for
the existence of gluing decompositions in Lemma
\ref{le:criterion-for-gluing} is another one.  We also make the
hierarchies in Section \ref{se:hierarchy} more precise and add several
new forbidden substructures in Proposition \ref{pr:forbidden2}. An
erroneous claim about freeness in a particular algebraic class
\cite[Thm.~19]{DBLP:conf/RelMiCS/FahrenbergJST20} is refuted in
Example \ref{ex:antilevi}.

\section{Posets}
\label{se:posets}
\label{S:interval-orders}

We assume basic knowledge of order theory, see \cite{DaveyP12} for
details.  We restrict our attention to finite posets with strict
orderings, as is usual in the partial order semantics of
concurrency. Throughout this paper, a poset $(P,\mathord<)$ is thus a
finite set $P$ equipped with an irreflexive transitive binary relation
$<$ (asymmetry of $<$ follows).  In Hasse diagrams, we put greater
elements to the right of smaller ones.  In the opposite poset
$P^\op=(P,>)$ of a poset $(P,<)$, the order is reversed.

We write $[n]=\{1, \dotsc, n\}$, for $n\ge 1$, both for the set with
elements $1$ to $n$ and for the discrete poset on these points where
all points are incomparable. Additionally, $[0]=\emptyset$.  We write
$[\po n]=\{1<\dotsm<n\}$ for the linear order on $[n]$. In particular,
$[\po 1]=[1]$.

A function $f:P\to Q$ between posets $(P,{<}_P)$ and $(Q,{<}_Q)$ is
order-preserving if $f( x)<_Q f( y)$ whenever $x<_P y$, and
order-reflecting if $x<_P y$ whenever $f( x)<_Q f( y)$.  A poset
monomorphism is an order-preserving and order-reflecting injection. A
poset $P$ is an induced subposet of $Q$ if there exists a monomorphism
$P\hookrightarrow Q$.  Poset isomorphisms are bijective monomorphisms;
we write $P\cong Q$ if $P$ and $Q$ are isomorphic.

We usually consider posets, and operations on them, up to isomorphism.
Intuitively this means that we are not interested in the identity of
the events in a poset $(P, <)$ but rather in their order structure
given by precedence relation $<$.  The set of isomorphism classes of
posets is denoted by $\Pos$.

\begin{remark}
  By contrast to $\le$-embeddings,
  $<$-preserving and reflecting functions need not be injective: the only
  $<$-preserving function from the V-shape three-point poset to
  $[\po 2]$ is $<$-reflecting (and surjective), but has no
  inverse. Every $<$-preserving function is nevertheless injective on
  comparable points.
\end{remark}

In concurrency theory, posets are often equipped with a serial and a
parallel composition \cite{DBLP:journals/ipl/Winkowski77}. Both are
based on the disjoint union (coproduct) of sets, defined as
$X\sqcup Y = \{(x,1)\mid x\in X\}\cup \{(y,2)\mid y \in Y\}$.

\begin{definition}
  Let $(P_1,{<}_1)$ and $(P_2,{<}_2)$ be posets.
  \begin{enumerate}
  \item Their \emph{parallel composition} $P_1\otimes P_2$ is the
    coproduct with $P_1\sqcup P_2$ as carrier set and order defined
    as
    \begin{equation*}
      (p,i) < (q,j) \ \Leftrightarrow\  i=j \land p<_i q,\qquad i,j\in
      \{1,2\}.
    \end{equation*}
  \item Their \emph{serial composition} $P_1\pomser P_2$ is the
    ordinal sum, which again has the disjoint union as carrier set,
    but order defined as
    \begin{equation*}
      (p,i) < (q,j) \ \Leftrightarrow \ (i=j \land p<_i q) \lor i<j,
      \qquad i,j\in \{1,2\}.
    \end{equation*}
  \end{enumerate}
\end{definition}

Intuitively, $P_1\otimes P_2$ puts the Hasse diagram of $P_1$ above
that of $P_2$, whereas $P_1\pomser P_2$ puts the Hasse diagram of
$P_1$ to the left of that of $P_2$ and adds arrows from each element
of $P_1$ to each element of $P_2$.  It is clear that both operations
are well defined on isomorphism classe. They are associative and have
the empty poset as their unit, up to isomorphism. Parallel composition
is commutative while serial composition is not.  Isomorphism classes
of posets thus form a monoid with respect to serial composition and a
commutative monoid with respect to concurrent composition. These
monoids share their unit (the empty pomset).

\begin{figure}[tbp]
  \centering
  \begin{tikzpicture}[x=1.3cm, y=.8cm] 
    \begin{scope}
      \begin{scope}
        \path[use as bounding box] (0,0) to (1,-1);
        \node (1) at (0,0) {\intpt};
        \node (2) at (1,0) {\intpt};
        \node (3) at (0,-1) {\intpt};
        \node (4) at (1,-1) {\intpt};
        \path (1) edge (2);
        \path (3) edge (2);
        \path (3) edge (4);
        \node at (1.5,-.5) {$\otimes$};
      \end{scope}
      \begin{scope}[shift={(2,0)}]
        \path[use as bounding box] (0,0) to (1,-1);
        \node (1) at (0,0) {\intpt};
        \node (2) at (1,0) {\intpt};
        \node (3) at (0,-1) {\intpt};
        \node (4) at (1,-1) {\intpt};
        \path (1) edge (2);
        \path (3) edge (2);
        \path (3) edge (4);
        \node at (1.5,-.5) {$=$};
      \end{scope}
      \begin{scope}[shift={(4.5,-.1)}, yscale=.8]
        \begin{scope}[shift={(0,-1)}]
          \path[use as bounding box] (0,0) to (1,-1);
          \node (1) at (0,0) {\intpt};
          \node (2) at (1,0) {\intpt};
          \node (3) at (0,-1) {\intpt};
          \node (4) at (1,-1) {\intpt};
          \path (1) edge (2);
          \path (3) edge (2);
          \path (3) edge (4);
        \end{scope}
        \begin{scope}[shift={(0,1)}]
          \path[use as bounding box] (0,0) to (1,-1);
          \node (1) at (0,0) {\intpt};
          \node (2) at (1,0) {\intpt};
          \node (3) at (0,-1) {\intpt};
          \node (4) at (1,-1) {\intpt};
          \path (1) edge (2);
          \path (3) edge (2);
          \path (3) edge (4);
        \end{scope}
      \end{scope}
    \end{scope}
    \begin{scope}[shift={(0,-3)}]
      \begin{scope}
        \path[use as bounding box] (0,0) to (1,-1);
        \node (1) at (0,0) {\intpt};
        \node (2) at (1,0) {\intpt};
        \node (3) at (0,-1) {\intpt};
        \node (4) at (1,-1) {\intpt};
        \path (1) edge (2);
        \path (3) edge (2);
        \path (3) edge (4);
        \node at (1.5,-.5) {$\pomser$};
      \end{scope}
      \begin{scope}[shift={(2,0)}]
        \path[use as bounding box] (0,0) to (1,-1);
        \node (1) at (0,0) {\intpt};
        \node (2) at (1,0) {\intpt};
        \node (3) at (0,-1) {\intpt};
        \node (4) at (1,-1) {\intpt};
        \path (1) edge (2);
        \path (3) edge (2);
        \path (3) edge (4);
        \node at (1.5,-.5) {$=$};
      \end{scope}
      \begin{scope}[shift={(4,0)}]
        \node (1) at (0,0) {\intpt};
        \node (2) at (1,0) {\intpt};
        \node (3) at (2,0) {\intpt};
        \node (4) at (3,0) {\intpt};
        \node (5) at (0,-1) {\intpt};
        \node (6) at (1,-1) {\intpt};
        \node (7) at (2,-1) {\intpt};
        \node (8) at (3,-1) {\intpt};
        \foreach \i/\j in {1/2,2/3,2/7,3/4,5/2,5/6,6/3,6/7,7/4,7/8}
        \path (\i) edge (\j);
      \end{scope}
    \end{scope}
  \end{tikzpicture}
  \caption{Parallel and serial compositions of posets $\N$.}
  \label{fi:exparser}
\end{figure}
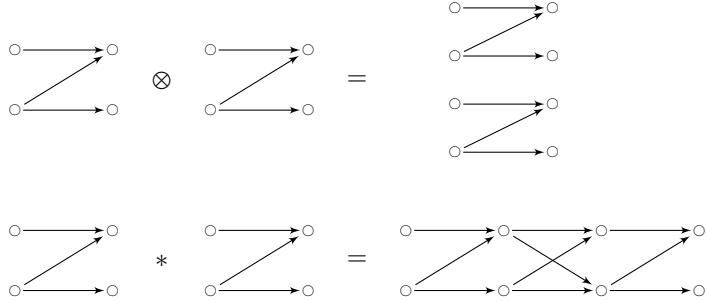

\begin{example}
  For any $n,m\ge 0$, it holds that $[n]\otimes[m]\cong[n+m]$ and
  $[\poo n]\pomser[\poo m]\cong[\poo{n+m}]$; thus
  $[n]\cong[1]\otimes\dotsm\otimes[1]$ and
  $[\poo n]\cong[1]\pomser\dotsm\pomser[1]$ ($n$ times each).  Figure
  \ref{fi:exparser} shows further examples.
\end{example}

A poset is \emph{series-parallel} (an \emph{sp-poset}) if it is either
empty or generated from the singleton poset by finitely many serial
and parallel compositions.  As mentioned in the introduction,
sp-posets are precisely those posets that do not contain $\N$ as an
induced subposet.

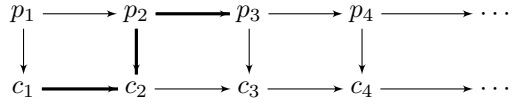
\begin{figure}[tbp]
  \centering
  \begin{tikzpicture}[x=1.5cm, shorten <=-1pt, shorten >=-1pt]
    \foreach \x in {1,2,3,4} \node (p\x) at (\x,0) {$p_\x$};
    \foreach \x in {1,2,3,4} \node (c\x) at (\x,-1) {$c_\x$};
    \node (p5) at (5.2,0) {$\cdots$};
    \node (c5) at (5.2,-1) {$\cdots$};
    \path[-, very thick] (c1) edge (c2) (p2) edge (c2) (p2) edge (p3);
    \foreach \x in {1,2,3,4} \path (p\x) edge (c\x);
    \foreach \i/\j in {1/2,2/3,3/4,4/5} \path (p\i) edge (p\j);
    \foreach \i/\j in {1/2,2/3,3/4,4/5} \path (c\i) edge (c\j);
  \end{tikzpicture}
  \caption{Producer-consumer pomset with an induced subposet $\N$
    indicated by bold arrows.}
  \label{fi:prodcon}
\end{figure}
This makes sp-posets unsuitable for many applications: even simple
producer-consumer systems generate $\N$'s
\cite{DBLP:journals/tcs/LodayaW00} and their structure cannot be
captured by sp-posets, see Figure~\ref{fi:prodcon}.

Interval orders \cite{Wiener14, journals/mpsy/Fishburn70} form another
class of posets relevant to concurrent and distributed computing
\cite{DBLP:conf/apn/Janicki18}. Intuitively, they are isomorphic to
sets of intervals on the real line that are ordered whenever they do
not overlap.  Interval orders can therefore capture events that extend
in time.

\begin{definition}
  An \emph{interval order} is a poset $(P,<)$ such that $w< y$ and
  $x< z$ imply $w< z$ or $x< y$, for all $w,x,y,z\in P$.
\end{definition}

(Transitivity of $<$ follows from asymmetry and the property above.)
Geometrically, there is once again a single forbidden substructure:
interval orders are precisely those posets that do not contain an
induced subposet of the form
\begin{equation*}
  \twotwo
  = [\poo{2}]\otimes[\poo{2}]
  =\! \vcenter{\hbox{%
      \begin{tikzpicture}[y=.5cm]
        \node (0) at (0,0) {\intpt};
        \node (1) at (0,-1) {\intpt};
        \node (2) at (1,0) {\intpt};
        \node (3) at (1,-1) {\intpt};
        \path (0) edge (2) (1) edge (3);
      \end{tikzpicture}}}.
\end{equation*}

The intuition outlined above is captured by Fishburn's representation
theorem \cite{journals/mpsy/Fishburn70, book/Fishburn85}. A poset $P$
is an interval order if and only if it has an \emph{interval
  representation}: a map $I:P\to 2^\Real$ assigning to each $x\in P$ a
closed real interval $I(x)=[b(x),e(x)]$, with $b(x)\le e(x)$, such
that $y<_P z$ if and only if $e(y)<_\Real b(z)$, for all $y,z\in P$.

Each interval order admits an interval representation with a minimal
number of endpoints. Such minimal representations are in bijective
correspondence with the \emph{closed interval traces} of concurrency
theory~\cite{DBLP:journals/iandc/JanickiY17}, as shown
in~\cite{DBLP:conf/RelMiCS/FahrenbergJST20}. Closed interval sequences
are finite sequences of $b(x)$ and $e(x)$ with $x$ ranging over some
finite set, where each $b(x)$ and $e(x)$ occurs exactly once and each
$e(x)$ after the corresponding $b(x)$. Closed interval traces are
equivalence classes of such sequences modulo the relations
$b(x)b(y)\approx b(y)b(x)$ and $e(x)e(y)\approx e(y)e(x)$.  They are
order-theoretic analogues of \emph{ST-traces} of Petri nets
\cite{DBLP:conf/ifip2/Glabbeek90, DBLP:books/sp/Vogler92}.

We show in Section~\ref{se:hierarchy} that interval orders appear at
the first level of the alternation hierarchy of the gluing-parallel
posets with interfaces introduced in the following sections.

\section{Posets with Interfaces}
\label{se:iposets}

We now define posets with interfaces, their gluing and parallel
composition and the units of these operations. We show that posets
with interfaces, up to isomorphism, form a category with respect to
gluing composition and that the presence of interfaces makes our
parallel composition non-commutative.

\begin{definition}
  A \emph{poset with interfaces} (\emph{iposet}) is a poset $(P,<)$
  equipped with two injective morphisms
  \begin{equation*}
    [n]\overset{s}{\longrightarrow}
    P\overset{t}{\longleftarrow}[m],\qquad n,m\ge 0,
  \end{equation*}
  such that the elements in the image $s([n])$ are minimal and those
  in $t([m])$ maximal in $P$.
\end{definition}

The injections $[n]\overset{s}{\to} P \overset{t}{\leftarrow} [m]$ represent the
\emph{source} and the \emph{target interface} of $P$, respectively.
We write $S_P$ instead of $s_P([n])$ and $T_P$ instead of $t_P([m])$
for sets of interface elements and drop indices if convenient. We also
write $(s,P,t):n\to m$ instead of
$[ n]\overset{s}{\longrightarrow} P\overset{t}{\longleftarrow}[ m]$;
even $P:n\to m$ when the interfaces are clear.  The \emph{opposite} of
an iposet $(s,P,t):n\to m$ is the iposet $(t,P^\op,s):m\to n$, also
denoted by $P^\op$, where the order has been reversed and the source
and target interfaces have been swapped.

Figure \ref{fi:iposets} shows examples of iposets. Interface elements
are represented as half-circles to indicate the incomplete nature of
the corresponding events.

\begin{definition}
  A \emph{subsumption} of iposets $(s_1,P_1,t_1),(s_2,P_2,t_2):n\to m$
  is an order-reflecting bijection $f:P_1\to P_2$ that preserves
  interfaces: $f(x)<_2 f(y)$ implies $x<_1 y$ for all $x,y\in P_1$,
  $f\circ s_1= s_2$, and $f\circ t_1= t_2$.
  \begin{equation*}
    \begin{tikzcd}[row sep=tiny]
      & P_1\arrow[dd,"f"] &\\
      \left[n\right]\arrow[ur, "s_1"]\arrow[dr, "s_2" '] &&
      \left[m\right]\arrow[ul, "t_1" '] \arrow[dl, "t_2"] \\
      & P_2 &
    \end{tikzcd}
  \end{equation*}
  An iposet \emph{isomorphism} is an order-preserving subsumption.
\end{definition}

We write $P_1\subsu P_2$ if there exists a subsumption
$P_1\to P_2$ and $P_1\cong P_2$ if there exists an iposet
isomorphism $P_1\to P_2$. The relation $\subsu$ is a preorder
and $\mathord{\cong} = \mathord{\preceq} \cap \mathord{\succeq}$
(proving this requires some thought~\cite{journals/tcs/Esik02}).

\begin{remark}
  The subsumption relation $\subsu$ for posets is well studied
  \cite{DBLP:journals/fuin/Grabowski81, DBLP:journals/tcs/Gischer88}.
  In the literature, $P_1\subsu P_2$ is usually defined by existence
  of an order-preserving bijection $P_2\to P_1$. This is equivalent to
  our definition.
\end{remark}

Like for posets above, we usually consider iposets up to isomorphism.
The subsumption relation $\preceq$ is a partial order on isomorphism
classes. Intuitively, $P_1\preceq P_2$ holds if $P_1$ and $P_2$ have
the same points and interfaces and $P_1$ is at least as ordered as
$P_2$.

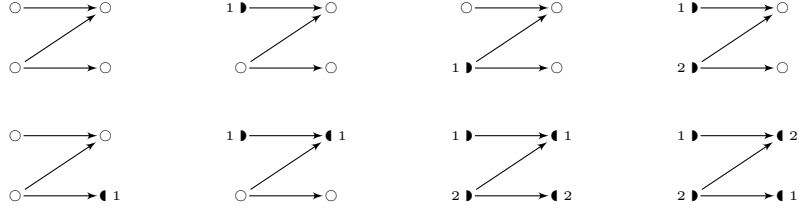
\begin{figure}[tbp]
  \centering
  \begin{tikzpicture}[x=1.2cm, y=.8cm]
    \begin{scope}
      \begin{scope}
        \path[use as bounding box] (0,0) to (1,-1);
        \node (1) at (0,0) {\intpt};
        \node (2) at (1,0) {\intpt};
        \node (3) at (0,-1) {\intpt};
        \node (4) at (1,-1) {\intpt};
        \path (1) edge (2);
        \path (3) edge (2);
        \path (3) edge (4);
      \end{scope}
      \begin{scope}[xshift=3cm]
        \path[use as bounding box] (0,0) to (1,-1);
        \node [label=left:{\tiny $1$}] (1) at (0,0) {\inpt};
        \node (2) at (1,0) {\intpt};
        \node (3) at (0,-1) {\intpt};
        \node (4) at (1,-1) {\intpt};
        \path (1) edge (2);
        \path (3) edge (2);
        \path (3) edge (4);
      \end{scope}
      \begin{scope}[xshift=6cm]
        \path[use as bounding box] (0,0) to (1,-1);
        \node (1) at (0,0) {\intpt};
        \node (2) at (1,0) {\intpt};
        \node [label=left:{\tiny $1$}] (3) at (0,-1) {\inpt};
        \node (4) at (1,-1) {\intpt};
        \path (1) edge (2);
        \path (3) edge (2);
        \path (3) edge (4);
      \end{scope}
      \begin{scope}[xshift=9cm]
        \path[use as bounding box] (0,0) to (1,-1);
        \node [label=left:{\tiny $1$}] (1) at (0,0) {\inpt};
        \node (2) at (1,0) {\intpt};
        \node [label=left:{\tiny $2$}] (3) at (0,-1) {\inpt};
        \node (4) at (1,-1) {\intpt};
        \path (1) edge (2);
        \path (3) edge (2);
        \path (3) edge (4);
      \end{scope}
    \end{scope}
    \begin{scope}[yshift=-1.7cm]
      \begin{scope}
        \path[use as bounding box] (0,0) to (1,-1);
        \node (1) at (0,0) {\intpt};
        \node (2) at (1,0) {\intpt};
        \node (3) at (0,-1) {\intpt};
        \node [label=right:{\tiny $1$}] (4) at (1,-1) {\outpt};
        \path (1) edge (2);
        \path (3) edge (2);
        \path (3) edge (4);
      \end{scope}
      \begin{scope}[xshift=3cm]
        \path[use as bounding box] (0,0) to (1,-1);
        \node [label=left:{\tiny $1$}] (1) at (0,0) {\inpt};
        \node [label=right:{\tiny $1$}] (2) at (1,0) {\outpt};
        \node (3) at (0,-1) {\intpt};
        \node (4) at (1,-1) {\intpt};
        \path (1) edge (2);
        \path (3) edge (2);
        \path (3) edge (4);
      \end{scope}
      \begin{scope}[xshift=6cm]
        \path[use as bounding box] (0,0) to (1,-1);
        \node [label=left:{\tiny $1$}]  (1) at (0,0) {\inpt};
        \node [label=right:{\tiny $1$}] (2) at (1,0) {\outpt};
        \node [label=left:{\tiny $2$}] (3) at (0,-1) {\inpt};
        \node [label=right:{\tiny $2$}] (4) at (1,-1) {\outpt};
        \path (1) edge (2);
        \path (3) edge (2);
        \path (3) edge (4);
      \end{scope}
      \begin{scope}[xshift=9cm]
        \path[use as bounding box] (0,0) to (1,-1);
        \node [label=left:{\tiny $1$}]  (1) at (0,0) {\inpt};
        \node [label=right:{\tiny $2$}] (2) at (1,0) {\outpt};
        \node [label=left:{\tiny $2$}] (3) at (0,-1) {\inpt};
        \node [label=right:{\tiny $1$}] (4) at (1,-1) {\outpt};
        \path (1) edge (2);
        \path (3) edge (2);
        \path (3) edge (4);
      \end{scope}
    \end{scope}
  \end{tikzpicture}
  \caption{Eight of 25 different iposets based on poset $\N$.}
  \label{fi:iposets}
\end{figure}

Next we introduce a partial sequential gluing composition on iposets,
defined whenever the interfaces on ends agree.  We also adapt the
standard parallel composition of posets to iposets.  The latter
requires the isomorphisms $\phi_{ n, m}:[ n+ m]\to[ n]\otimes[ m]$
given by
\begin{equation*}
  \phi_{ n, m}( i)=
  \begin{cases}
    (i,1) &\text{if } i\le n, \\
    (i-n,2) &\text{if } i> n;
  \end{cases}
\end{equation*}
we also tacitly use parallel composition of poset morphisms, which is
defined in the obvious way.

\begin{definition}
  \label{de:parglue}
  Let $(s_1,P_1,t_1):n_1\to m_1$ and $(s_2,P_2 ,t_2):n_2\to m_2$ be
  iposets.
  \begin{enumerate}
  \item Their \emph{parallel composition} is the iposet
    $P_1\otimes P_2 = (s,P_1\otimes P_2,t):n_1+n_2\to m_1+m_2$ with
    $s=(s_1\otimes s_2)\circ \phi_{ n_1, n_2}$ and
    $t=(t_1\otimes t_2)\circ \phi_{ m_1, m_2}$.
  \item For $m_1=n_2$, their \emph{gluing composition} is the iposet
    $P_1\glue P_2 = (s_1,P,t_2):n_1\to m_2$,
    where the carrier set is the quotient
    \begin{equation*}
     P =(P_1\sqcup P_2)/\{(t_1(k),1)= (s_2(k),2)\}_{k\in [m_1]}
    \end{equation*}
    and the order is defined, for $i,j\in \{1,2\}$, as
    \begin{equation*}
      (p,i)<(q,j) \Leftrightarrow (i=j \land p<_i q) \lor (i<j \land
      p\notin T_{P_1} \land q\notin S_{P_2}).
    \end{equation*}
  \end{enumerate}
\end{definition}

Intuitively, parallel composition of iposets puts the Hasse diagrams
of the underlying posets on top of each other while renumbering
interfaces.  Gluing composition $P_1\glue P_2$ puts the Hasse diagram
of $P_1$ to the left of that of $P_2$, whenever the target interface
of $P_1$ and the source interface of $P_2$ match. It then glues
corresponding interface points and adds arrows from all points in
$P_1$ that are not in its target interface to all points in $P_2$ that
are not in its source interface.  As explained in the introduction, it
thus glues events of $P_1$ that do not finish in $P_1$ with those
events of $P_2$ that do not start in $P_2$.

It is clear that $\otimes$ and $\glue$ are well defined on isomorphism
classes: they are associative up to isomorphism.  Figures
\ref{fi:ndecomp} and \ref{fi:iposetcomp} show examples of gluing
(de)compositions of iposets, including the $\N$. The half-circles in
source and target interfaces are glued to full circles in these
diagrams.

\begin{figure}[tbp]
  \centering
  \begin{tikzpicture}[x=1.3cm, y=.8cm]
   \begin{scope}
      \node (1) at (0,0) {\intpt};
      \node (2) at (1,0) {\intpt};
      \node (3) at (0,-1) {\intpt};
      \node (4) at (1,-1) {\intpt};
      \path (1) edge (2);
      \path (3) edge (2);
      \path (3) edge (4);
      \node at (1.5,-.5) {$=$};
    \end{scope}
    \begin{scope}[shift={(2,0)}]
      \node [label=right:{\tiny 1}] (1) at (0,0) {\outpt};
      \node (2) at (0,-1) {\intpt}; 
      \node at (.5,-.5) {$\glue$};
    \end{scope}
    \begin{scope}[shift={(3,0)}]
      \node [label=left:{\tiny 1}] (1) at (0,0) {\inpt};
      \node [label=right:{\tiny 1}] (2) at (0,-1) {\outpt}; 
      \node at (.5,-.5) {$\glue$};
    \end{scope}
    \begin{scope}[shift={(4,0)}]
      \node (1) at (0,0) {\intpt};
      \node [label=left:{\tiny 1}] (2) at (0,-1) {\inpt}; 
    \end{scope}
  \end{tikzpicture}
  \caption{Decomposition of poset $\N$.}
  \label{fi:ndecomp}
\end{figure}

\begin{figure}[tbp]
  \centering
  \begin{tikzpicture}[x=1.3cm, y=.8cm]
    \begin{scope}[shift={(0,-2)}]
      \begin{scope}
        \path[use as bounding box] (0,0) to (1,-1);
        \node (1) at (0,0) {\intpt};
        \node [label=right:{\tiny $1$}] (2) at (1,0) {\outpt};
        \node (3) at (0,-1) {\intpt};
        \node (4) at (1,-1) {\intpt};
        \path (1) edge (2);
        \path (3) edge (2);
        \path (3) edge (4);
        \node at (1.5,-.5) {$\glue$};
      \end{scope}
      \begin{scope}[shift={(2,0)}]
        \path[use as bounding box] (0,0) to (1,-1);
        \node [label=left:{\tiny $1$}] (1) at (0,0) {\inpt};
        \node (2) at (1,0) {\intpt};
        \node (3) at (0,-1) {\intpt};
        \node (4) at (1,-1) {\intpt};
        \path (1) edge (2);
        \path (3) edge (2);
        \path (3) edge (4);
        \node at (1.5,-.5) {$=$};
      \end{scope}
      \begin{scope}[shift={(4,0)}]
        \node (1) at (0,0) {\intpt};
        \node (23) at (1.5,0) {\intpt};
        \node (4) at (3,0) {\intpt};
        \node (5) at (0,-1) {\intpt};
        \node (6) at (1,-1) {\intpt};
        \node (7) at (2,-1) {\intpt};
        \node (8) at (3,-1) {\intpt};
        \foreach \i/\j in {1/23,1/7,23/4,5/23,5/6,6/7,7/4,7/8}
        \path (\i) edge (\j);
      \end{scope}
    \end{scope}
    \begin{scope}[shift={(0,-4)}]
      \begin{scope}
        \path[use as bounding box] (0,0) to (1,-1);
        \node (1) at (0,0) {\intpt};
        \node [label=right:{\tiny $1$}] (2) at (1,0) {\outpt};
        \node (3) at (0,-1) {\intpt};
        \node (4) at (1,-1) {\intpt};
        \path (1) edge (2);
        \path (3) edge (2);
        \path (3) edge (4);
        \node at (1.5,-.5) {$\glue$};
      \end{scope}
      \begin{scope}[shift={(2,0)}]
        \path[use as bounding box] (0,0) to (1,-1);
        \node (1) at (0,0) {\intpt};
        \node (2) at (1,0) {\intpt};
        \node  [label=left:{\tiny $1$}] (3) at (0,-1) {\inpt};
        \node (4) at (1,-1) {\intpt};
        \path (1) edge (2);
        \path (3) edge (2);
        \path (3) edge (4);
        \node at (1.5,-.5) {$=$};
      \end{scope}
      \begin{scope}[shift={(4,0)}]
        \node (1) at (0,0) {\intpt};
        \node (23) at (1.5,0) {\intpt};
        \node (3) at (2,0) {};
        \node (4) at (3,0) {\intpt};
        \node (5) at (0,-1) {\intpt};
        \node (6) at (1,-1) {\intpt};
        \node (7) at (2,-1) {\intpt};
        \node (8) at (3,-1) {\intpt};
        \foreach \i/\j in {1/23,1/7,23/4,5/23,5/6,6/4,6/7,23/8,7/8}
        \path (\i) edge (\j);
      \end{scope}
    \end{scope}
    \begin{scope}[shift={(0,-6)}]
      \begin{scope}
        \path[use as bounding box] (0,0) to (1,-1);
        \node (1) at (0,0) {\intpt};
        \node  [label=right:{\tiny $1$}] (2) at (1,0) {\outpt};
        \node (3) at (0,-1) {\intpt};
        \node  [label=right:{\tiny $2$}] (4) at (1,-1) {\outpt};
        \path (1) edge (2);
        \path (3) edge (2);
        \path (3) edge (4);
        \node at (1.5,-.5) {$\glue$};
      \end{scope}
      \begin{scope}[shift={(2,0)}]
        \path[use as bounding box] (0,0) to (1,-1);
        \node  [label=left:{\tiny $1$}]  (1) at (0,0) {\inpt};
        \node (2) at (1,0) {\intpt};
        \node  [label=left:{\tiny $2$}] (3) at (0,-1) {\inpt};
        \node (4) at (1,-1) {\intpt};
        \path (1) edge (2);
        \path (3) edge (2);
        \path (3) edge (4);
        \node at (1.5,-.5) {$=$};
      \end{scope}
      \begin{scope}[shift={(4,0)}]
        \node (1) at (0,0) {\intpt};
        \node (23) at (1.5,0) {\intpt};
        \node (4) at (3,0) {\intpt};
        \node (5) at (0,-1) {\intpt};
        \node (67) at (1.5,-1) {\intpt};
        \node (8) at (3,-1) {\intpt};
        \foreach \i/\j in {1/23,23/4,5/67,5/23,67/8,67/4,1/8}
        \path (\i) edge (\j);
      \end{scope}
    \end{scope}
    \begin{scope}[shift={(0,-8)}]
      \begin{scope}
        \path[use as bounding box] (0,0) to (1,-1);
        \node (1) at (0,0) {\intpt};
        \node  [label=right:{\tiny $1$}] (2) at (1,0) {\outpt};
        \node (3) at (0,-1) {\intpt};
        \node  [label=right:{\tiny $2$}] (4) at (1,-1) {\outpt};
        \path (1) edge (2);
        \path (3) edge (2);
        \path (3) edge (4);
        \node at (1.5,-.5) {$\glue$};
      \end{scope}
      \begin{scope}[shift={(2,0)}]
        \path[use as bounding box] (0,0) to (1,-1);
        \node  [label=left:{\tiny $2$}]  (1) at (0,0) {\inpt};
        \node (2) at (1,0) {\intpt};
        \node  [label=left:{\tiny $1$}] (3) at (0,-1) {\inpt};
        \node (4) at (1,-1) {\intpt};
        \path (1) edge (2);
        \path (3) edge (2);
        \path (3) edge (4);
        \node at (1.5,-.5) {$=$};
      \end{scope}
      \begin{scope}[shift={(4,0)}]
        \node (1) at (0,0) {\intpt};
        \node (23) at (1.5,0) {\intpt};
        \node (4) at (3,0) {\intpt};
        \node (5) at (0,-1) {\intpt};
        \node (67) at (1.5,-1) {\intpt};
        \node (8) at (3,-1) {\intpt};
        \foreach \i/\j in {1/23,23/4,5/67,5/23,67/8,23/8}
        \path (\i) edge (\j);
      \end{scope}
    \end{scope}
  \end{tikzpicture}
  \caption{Gluings of different $\N$s with interfaces.}
  \label{fi:iposetcomp}
\end{figure}

\begin{proposition}
  \label{pr:ipos-comp}
  Iposets form a category $\iPos$ with natural numbers as objects,
  isomorphism classes of iposets $( s, P, t)\!: n\to m$ as morphisms,
  the \emph{identity iposets}
  $\id_n=( \id_{[n]}, [n], \id_{[n]})\!: n\to n$ as identity
  morphisms, $\glue$ as composition and $\dom,\cod:\iPos\to \Nat$
  sending $P:n\to m$ to $n$ and $m$, respectively, as domain and
  codomain maps.
\end{proposition}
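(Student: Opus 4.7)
The statement asserts that $\iPos$ satisfies the category axioms, so I must verify three things: that the gluing $\glue$ is a well-defined binary operation on iposets with matching arities; that the identity iposets $\id_n$ act as two-sided units; and that $\glue$ is associative up to iposet isomorphism.

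I would start with well-definedness. The quotient in Definition~\ref{de:parglue} makes sense because $t_1$ and $s_2$ are injections on sets of matching cardinality, pairing interface elements bijectively. The defined relation $<$ on $P_1\glue P_2$ is manifestly irreflexive; transitivity is the only substantive check. A case analysis on positions $i,j,k\in\{1,2\}$ for $(p,i)<(q,j)<(r,k)$ combines the individual transitivities of $<_1$ and $<_2$ with the observation that any cross-arrow has source outside $T_{P_1}$ (so by maximality it has no $<_1$-predecessors that could be composed with it) and target outside $S_{P_2}$ (dually). From this it follows easily that $s_1$ lands in the minima and $t_2$ in the maxima of $P_1\glue P_2$, and that the construction descends to isomorphism classes.

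Next I would handle the unit laws. The isomorphism $\id_n\glue P\cong P$ holds because $T_{\id_n}=[n]$ is everything, making the clause ``$p\notin T_{\id_n}$'' vacuous so that no cross-arrows are introduced; the quotient then identifies each $k\in[n]$ with $s_P(k)$, collapsing the carrier onto $P$ with its original order, source $s_P$, and target $t_P$. The symmetric law $P\glue\id_m\cong P$ follows by duality.

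The main obstacle is associativity. My plan is to identify both $(P_1\glue P_2)\glue P_3$ and $P_1\glue(P_2\glue P_3)$ with the common triple quotient $X=(P_1\sqcup P_2\sqcup P_3)/{\sim}$, where $\sim$ is generated by $t_1(k)\sim s_2(k)$ and $t_2(k)\sim s_3(k)$. Since these two families of identifications are independent, the carrier bijections are immediate, and because identifications chain only between adjacent $P_i$'s, each class in $X$ intersects a contiguous subset of the $P_i$'s, giving six class types. The key step is to show that in both bracketings $x<y$ holds iff there exist representatives $a\in x\cap P_i$, $b\in y\cap P_j$ with $i\le j$ such that either $i=j$ and $a<_i b$, or $i<j$ with $a$ the rightmost representative of $x$ (equivalently $a\notin T_{P_i}$) and $b$ the leftmost of $y$ (equivalently $b\notin S_{P_j}$). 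Verifying this intrinsic condition on $X$ from each bracketing is a fiddly case analysis on class types; the critical lemma is that a class lies in $T_{P_1\glue P_2}$ precisely when its $P_2$-representative (if any) is itself in $T_{P_2}$, with the dual statement for $S_{P_2\glue P_3}$, and this makes the cross-gluing conditions in the two bracketings coincide. Both bracketings inherit the source map $s_1$ and target map $t_3$, so the carrier bijection automatically respects interfaces.
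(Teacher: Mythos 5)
Your proposal is correct and follows the same route as the paper, whose own proof merely asserts that the unit, domain and codomain axioms are trivial and that associativity ``follows directly from unfolding Definition~\ref{de:parglue}(2)''; you have simply carried out that unfolding in full (transitivity of the glued order via maximality of $T_{P_1}$ and minimality of $S_{P_2}$, the unit laws, and the identification of both bracketings with the triple quotient). The details check out, including the key point that a class of the composite lies in $T_{P_1\glue P_2}$ exactly when it has a $P_2$-representative lying in $T_{P_2}$.
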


\begin{proof}
  Checking the standard unit, domain and codomain axioms for
  categories is trivial; associativity of gluing composition (up to
  isomorphism) follows directly from unfolding Definition
  \ref{de:parglue}(2). \qed
\end{proof}

Posets may be regarded as iposets with empty interfaces. Thus, as
sets, $\Pos \cong \iPos(0,0)$, where $\iPos(0,0)$ denotes the $\iPos$
morphisms at object $0$. We write gluing composition in diagrammatical
order.

 serial poset composition for posets, $\glue$ is not commutative:
\begin{equation*}
  \routpt\! \rtone \;\glue\; \rtone \! \rinpt
  \ =\  \rintpt \ \ne
  \vcenter{\hbox{%
      \begin{tikzpicture}
        \node [label=left:{\tiny $1$}] (1) at (0,0) {\inpt};
        \node [label=right:{\tiny $1$}] (2) at (1,0) {\outpt};
        \path (1) edge (2);
      \end{tikzpicture}}}
  = \ \rtone\! \rinpt
  \;\glue\; \routpt\! \rtone
\end{equation*}
Parallel composition has the empty iposet as its unit.  As the next
example shows, it need not commute.

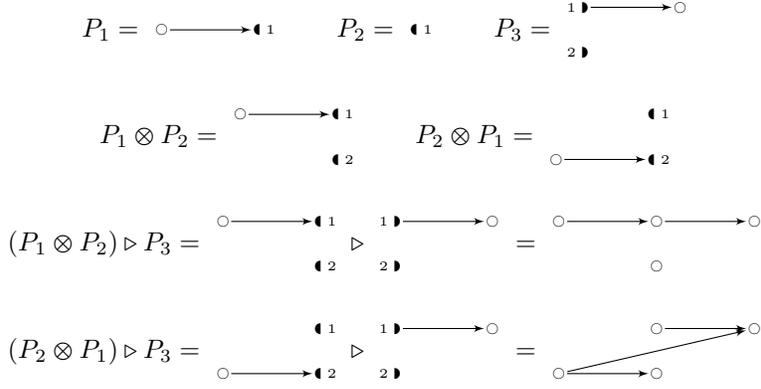
\begin{figure}
  \begin{gather*}
    P_1 = \vcenter{\hbox{%
        \begin{tikzpicture}[x=1.3cm, y=.6cm]
          \node (1) at (0,0) {\intpt};
          \node [label=right:{\tiny $1$}] (2) at (1,0) {\outpt};
          \path (1) edge (2);
        \end{tikzpicture}}}
    \qquad%
    P_2 = \vcenter{\hbox{%
        \begin{tikzpicture}[x=1.3cm, y=.6cm]
          \node [label=right:{\tiny $1$}] (2) at (0,0) {\outpt};
        \end{tikzpicture}}}
    \qquad%
    P_3 = \vcenter{\hbox{%
        \begin{tikzpicture}[x=1.3cm, y=.6cm]
          \node [label=left:{\tiny $1$}] (1) at (0,0) {\inpt};
          \node (2) at (1,0) {\intpt};
          \node [label=left:{\tiny $2$}] (3) at (0,-1) {\inpt};
          \path (1) edge (2);
        \end{tikzpicture}}}
    \\[2ex]
    P_1\otimes P_2 = \vcenter{\hbox{%
        \begin{tikzpicture}[x=1.3cm, y=.6cm]
          \node (1) at (0,0) {\intpt};
          \node [label=right:{\tiny $1$}] (2) at (1,0) {\outpt};
          \node [label=right:{\tiny $2$}] (3) at (1,-1) {\outpt};
          \path (1) edge (2);
        \end{tikzpicture}}}
    \qquad%
    P_2\otimes P_1 = \vcenter{\hbox{%
        \begin{tikzpicture}[x=1.3cm, y=.6cm]
          \node [label=right:{\tiny $1$}] (3) at (1,0) {\outpt};
          \node (1) at (0,-1) {\intpt};
          \node [label=right:{\tiny $2$}] (2) at (1,-1) {\outpt};
          \path (1) edge (2);
        \end{tikzpicture}}}
    \\[2ex]
    (P_1\otimes P_2)\glue P_3 = \vcenter{\hbox{%
        \begin{tikzpicture}[x=1.3cm, y=.6cm]
          \node (1) at (0,0) {\intpt};
          \node [label=right:{\tiny $1$}] (2) at (1,0) {\outpt};
          \node [label=right:{\tiny $2$}] (3) at (1,-1) {\outpt};
          \path (1) edge (2);
        \end{tikzpicture}}}
    \glue \vcenter{\hbox{%
        \begin{tikzpicture}[x=1.3cm, y=.6cm]
          \node [label=left:{\tiny $1$}] (1) at (0,0) {\inpt};
          \node (2) at (1,0) {\intpt};
          \node [label=left:{\tiny $2$}] (3) at (0,-1) {\inpt};
          \path (1) edge (2);
        \end{tikzpicture}}}
    = \vcenter{\hbox{%
        \begin{tikzpicture}[x=1.3cm, y=.6cm]
          \node (1) at (0,0) {\intpt};
          \node (2) at (1,0) {\intpt};
          \node (3) at (2,0) {\intpt};
          \node (4) at (1,-1) {\intpt};
          \path (1) edge (2);
          \path (2) edge (3);
        \end{tikzpicture}}}
    \\[2ex]
    (P_2\otimes P_1)\glue P_3 = \vcenter{\hbox{%
        \begin{tikzpicture}[x=1.3cm, y=.6cm]
          \node [label=right:{\tiny $1$}] (3) at (1,0) {\outpt};
          \node (1) at (0,-1) {\intpt};
          \node [label=right:{\tiny $2$}] (2) at (1,-1) {\outpt};
          \path (1) edge (2);
        \end{tikzpicture}}}
    \glue \vcenter{\hbox{%
        \begin{tikzpicture}[x=1.3cm, y=.6cm]
          \node [label=left:{\tiny $1$}] (1) at (0,0) {\inpt};
          \node (2) at (1,0) {\intpt};
          \node [label=left:{\tiny $2$}] (3) at (0,-1) {\inpt};
          \path (1) edge (2);
        \end{tikzpicture}}}
    = \vcenter{\hbox{%
        \begin{tikzpicture}[x=1.3cm, y=.6cm]
          \node (1) at (1,0) {\intpt};
          \node (2) at (2,0) {\intpt};
          \node (3) at (0,-1) {\intpt};
          \node (4) at (1,-1) {\intpt};
          \path (1) edge (2);
          \path (3) edge (4);
          \path (3) edge (2);
        \end{tikzpicture}}}
  \end{gather*}
  \caption{Non-isomorphic gluings of symmetric parallel compositions
    in Example \ref{ex:otimesnotsymm}.}
  \label{fi:otimesnotsymm}
\end{figure}

\begin{example}
  \label{ex:otimesnotsymm}
  Figure \ref{fi:otimesnotsymm} shows iposets $P_1$ and $P_2$ and
  their parallel products $P_1\otimes P_2$ and $P_2\otimes P_1$.  The
  latter are non-isomorphic as iposets because of the different
  labelling of the interfaces. Yet their underlying posets are
  isomorphic.  However, when they are glued with iposet $P_3$
  in the figure, the resulting posets $(P_1\otimes P_2)\glue P_3$ and
  $(P_2\otimes P_1)\glue P_3$ are non-isomorphic \emph{as posets}.
  Non-commutativity of parallel composition is therefore not a
  technical artefact of our definitions but inherent to the formalism.
\end{example}

Interfaces can be renumbered using special iposets called
\emph{symmetries}, as explained in Section \ref{se:variants}.  Lemma
\ref{le:symmetries} in that section shows that parallel composition of
iposets is commutative up to such symmetries; but following Example
\ref{ex:otimesnotsymm} above, enforcing such commutativity would make
gluing composition lose its congruence property, which is undesirable.
Further, Lemma \ref{le:gp-cons} below shows that gluing-parallel
iposets, the class of iposets we are mainly interested in, satisfy an
interface consistency property that rules out all non-trivial
symmetries.  Proposition \ref{P:otimes-comm} and Lemma
\ref{le:parcom-nointf} below describe precisely those parallel
compositions that commute.

\section{Interchange}
\label{S:algebra}

This and the next sections study the interaction of gluing and
parallel composition.  For posets, serial and parallel composition
interact through the lax interchange law
\begin{equation*}
  (P\otimes  P') \glue (Q\otimes Q') \subsu (P\glue
  Q)\otimes (P'\glue Q').
\end{equation*}
It equips the set of isomorphism classes of posets with a concurrent
monoid structure \cite{DBLP:journals/jlp/HoareMSW11}. However, iposets
with gluing composition form a category rather than a monoid; the
interaction with parallel composition thus requires a (strict)
$2$-category.  Readers unfamiliar with $2$-categories can skip the
following Section \ref{se:intercat} which is not relevant to the rest
of the paper.

A natural question is whether $\otimes$ yields a monoidal structure on
$\iPos$.  Yet the answer is negative: gluing and parallel composition
need not satisfy the interchange law a tensor$\otimes$ would require:

\begin{align*}
  (\rintpt  \;\otimes\; \rintpt) \glue (\rintpt
  \;\otimes\; \rintpt)%
  =\! \vcenter{\hbox{%
      \begin{tikzpicture}[y=.5cm]
        \node (0) at (0,0) {\intpt};
        \node (1) at (0,-1) {\intpt};
        \node (2) at (1,0) {\intpt};
        \node (3) at (1,-1) {\intpt};
        \path (0) edge (2) (1) edge (3) (0) edge (3) (1) edge (2);
      \end{tikzpicture}}}
 \ne\! \vcenter{\hbox{%
      \begin{tikzpicture}[y=.5cm]
        \node (0) at (0,0) {\intpt};
        \node (1) at (0,-1) {\intpt};
        \node (2) at (1,0) {\intpt};
        \node (3) at (1,-1) {\intpt};
        \path (0) edge (2) (1) edge (3);
      \end{tikzpicture}}}%
     = (\rintpt \;\glue\; \rintpt) \otimes (\rintpt \;\glue\;
  \rintpt).
\end{align*}

\begin{remark}
This differs from gluing compositions where interfaces of iposets are
defined by \emph{all} minimal and maximal elements
\cite{DBLP:journals/ipl/Winkowski77}.  It also differs from previous
serial compositions with interfaces \cite{DBLP:conf/birthday/FioreC13,
  DBLP:journals/corr/Mimram15}, where interfaces disappear and no
additional order is introduced.  The first case gives rise to strict
monoidal categories with a partially defined tensor, the other to
plain strict monoidal categories and, more specifically, PROPs.
\end{remark}

The following proposition introduces a lax interchange law for
iposets.

\begin{proposition}[Lax interchange]
  \label{P:lax-interchange}
  Let $P$, $P'$, $Q$ and $Q'$ be iposets such that $\cod(P)=\dom(Q)$ and
  $\cod(P')=\dom(Q')$.  Then $\cod(P\otimes P')=\dom(Q\otimes Q')$ and
  \begin{equation}
    \label{eq:lax-interchange}
    (P\otimes  P') \glue (Q\otimes Q') \subsu (P\glue
    Q)\otimes (P'\glue Q').
  \end{equation}
\end{proposition}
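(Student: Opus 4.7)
The plan is to exhibit the subsumption as the canonical bijection between the carrier sets of $L=(P\otimes P') \glue (Q\otimes Q')$ and $R=(P\glue Q)\otimes (P'\glue Q')$, then verify interface preservation and order-reflection by case analysis. First, the codomain equation is immediate: both $\cod(P\otimes P')$ and $\dom(Q\otimes Q')$ equal $\cod(P)+\cod(P')$, using $\cod(P)=\dom(Q)$ and $\cod(P')=\dom(Q')$. Second, observe that the carriers of $L$ and $R$ are both canonically the quotient $U$ of $P\sqcup P'\sqcup Q\sqcup Q'$ by the relations $t_P(k)\sim s_Q(k)$ and $t_{P'}(k)\sim s_{Q'}(k)$: the left side first takes $\sqcup$ inside each of $P\otimes P'$ and $Q\otimes Q'$ and then quotients, while the right side quotients inside each row first and then takes $\sqcup$, giving isomorphic quotients. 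Let $f:L\to R$ be the identity on $U$.

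Interface preservation is a direct unfolding of $\phi$. Both $s_L$ and $s_R$ send $i\in[n_P]$ to $s_P(i)$ and $i\in\{n_P+1,\dots,n_P+n_{P'}\}$ to $s_{P'}(i-n_P)$, because $s_{P\glue Q}=s_P$ and $s_{(P\otimes P')\glue(Q\otimes Q')}=s_{P\otimes P'}$; the analogous statement holds for targets. Hence $f\circ s_L=s_R$ and $f\circ t_L=t_R$.

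For order-reflection, I would enumerate pairs $(x,y)\in U\times U$ by which of $P,P',Q,Q'$ contains representatives. The within-row cases ($\{x,y\}\subseteq P\cup Q$ or $\{x,y\}\subseteq P'\cup Q'$) give identical relations on both sides: the internal orders of $P,P',Q,Q'$ are copied verbatim, and the cross relation $p<q$ supplied by $\glue$ in $R$ (requiring $p\notin T_P$, $q\notin S_Q$) is exactly the cross relation supplied by the outer $\glue$ in $L$ (which requires $p\notin T_{P\otimes P'}$, i.e.\ $p\notin T_P$, and symmetrically $q\notin S_Q$). In the across-row cases ($x\in P\cup Q$, $y\in P'\cup Q'$, or vice versa), $x$ and $y$ live in distinct parallel components of $R$, so $f(x)\not<_R f(y)$ and the implication is vacuous; hence nothing needs to be checked in these cases.

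The only delicate point, which I expect to be the main technical obstacle, is the treatment of glued interface points $[t_P(k)]=[s_Q(k)]$ (and similarly for $P',Q'$): such a point has two representatives, sitting simultaneously in $T_{P\otimes P'}$ and in $S_{Q\otimes Q'}$ on the left, and likewise sitting in both $P$ and $Q$ of the $P\glue Q$ factor on the right. One must verify that the relations obtained using either representative agree, which follows from the fact that $T_P$ is maximal in $P$ and $S_Q$ minimal in $Q$, so no within-$P$ relation of the form $t_P(k)<_P y$ and no within-$Q$ relation $y<_Q s_Q(k)$ can exist to create a mismatch. This completes the verification that $f$ is an order-reflecting, interface-preserving bijection, so $L\subsu R$. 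The inclusion is in general strict: for $p\in P\setminus T_P$ and $q'\in Q'\setminus S_{Q'}$ the outer $\glue$ on the left forces $p<_L q'$, whereas on the right $p$ and $q'$ lie in different parallel components and are incomparable.
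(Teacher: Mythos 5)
Your proposal is correct and follows essentially the same route as the paper's proof: identify the carriers of $(P\otimes P')\glue(Q\otimes Q')$ and $(P\glue Q)\otimes(P'\glue Q')$ via commuting the quotient with the disjoint unions, take $f$ to be the identity, and check order-reflection by cases on which of $P,P',Q,Q'$ the points lie in. Your additional checks (interface preservation, consistency at glued points, and the remark that the subsumption is generally strict) are details the paper leaves implicit, but they do not change the argument.
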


\begin{proof}
  The first claim holds because
  \begin{equation*}
    \cod(P\otimes P')=\cod(P)\otimes \cod(P')=\dom(Q)\otimes
    \dom(Q')=\dom(Q\otimes Q').
  \end{equation*}
  Set $L = (P\otimes P') \glue (Q\otimes Q')$ and
  $R = (P\glue Q)\otimes (P'\glue Q')$.  For the underlying sets, the
  definition of $\otimes$ implies that
  \begin{align*}
    R &= ( P\sqcup Q)_{/ t_P(i)= s_Q(i)}\sqcup( P'\sqcup Q')_{/
      t_{P'}(j)= s_{Q'}(j)} \\
    &= ( P\sqcup Q\sqcup P'\sqcup Q')_{/ t_P(i)= s_Q(i), t_{P'}(j)=
      s_{Q'}(j)} \\
    &= ( P\sqcup P'\sqcup Q\sqcup Q')_{/ t_{P\otimes P'}(k)=
      s_{Q\otimes Q'}(k)} = L.
  \end{align*}
  Both posets thus have the same carrier set, and we may choose $f:L\to R$
  to be the identity.  It remains to show that $f$ reflects the order:
  each arrow in $R$ must be in $L$.

  Suppose $x<_{R} y$, that is, $x<_{P\glue Q} y$ or
  $x<_{P'\glue Q'} y$.  In the first case, if $x<_P y$ or $x<_Q y$,
  then $x<_{P\otimes P'} y$ or $x<_{Q\otimes Q'} y$ and therefore
  $x<_{L} y$; and if $x\in P\setminus T_P$ and $y\in Q\setminus S_Q$,
  then $x\in P\sqcup P'\setminus T_{ P\otimes P'}$ and
  $y\in Q\sqcup Q'\setminus S_{ Q\otimes Q'}$ and therefore
  $x<_{L} y$, too. The second case is symmetric, and $x<_{L} y$ holds.\qed
\end{proof}

\section{The $2$-Category of Iposets}
\label{se:intercat}

We now introduce the $2$-categorical generalisation of concurrent
monoids which properly characterises our algebraic setting.  Readers
unfamiliar with $2$-categories can skip this section.

First, recall that the subsumption morphism used in the proof of
Proposition \ref{P:lax-interchange} is the identity on underlying
sets; in anticipation of the $2$-category structure introduced below
we used 2-cell notation for subsumptions and isomorphisms of iposets,
$f:P_1\To P_2$ instead of $f:P_1\to P_2$, in this section only.

\begin{corollary}
  \label{co:laxint}
  In the setting of Proposition \ref{P:lax-interchange}, we have the
  subsumption
  \begin{equation*}
    \id:(P\otimes  P') \glue (Q\otimes Q') \To (P\glue
    Q)\otimes (P'\glue Q').
  \end{equation*}
\end{corollary}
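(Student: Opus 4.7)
The plan is to observe that this corollary is essentially a restatement of the proof technique of Proposition \ref{P:lax-interchange} rather than a separate argument. In that proof, the subsumption witnessing $(P\otimes P') \glue (Q\otimes Q') \subsu (P\glue Q)\otimes (P'\glue Q')$ was concretely chosen to be the identity on a common carrier set, after identifying $L=(P\otimes P') \glue (Q\otimes Q')$ and $R=(P\glue Q)\otimes (P'\glue Q')$ as the same quotient of $P\sqcup P'\sqcup Q\sqcup Q'$. Hence the only thing left to check is that this identity map qualifies as a subsumption in the full sense of the definition, namely that it also preserves source and target interfaces.

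Concretely, I would verify the two interface triangles from the commutative diagram in the definition of subsumption. The source interface of $L$ comes from the source of $P\otimes P'$, which by Definition \ref{de:parglue}(1) is $(s_P\otimes s_{P'})\circ\phi_{n_P,n_{P'}}$ followed by the canonical injection into the quotient. The source interface of $R$ comes from $(s_{P\glue Q}\otimes s_{P'\glue Q'})\circ\phi_{n_P,n_{P'}}$, and by Definition \ref{de:parglue}(2) we have $s_{P\glue Q}=s_P$ and $s_{P'\glue Q'}=s_{P'}$ under the same canonical injection. Thus both source maps are literally the same function into the shared carrier set, so the identity composes correctly with them. The target interfaces match by the symmetric argument using $t_{P\glue Q}=t_Q$ and $t_{P'\glue Q'}=t_{Q'}$ together with $\phi_{m_Q,m_{Q'}}$.

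Since order reflection of $\id\colon L\to R$ was already established in the proof of Proposition \ref{P:lax-interchange}, combining this with the interface checks above gives the required subsumption $\id\colon L \To R$. There is no real obstacle: the only subtlety is bookkeeping the identifications of carrier sets and interface maps so that ``identity'' literally makes sense on the nose, and this is exactly what the proof of Proposition \ref{P:lax-interchange} already set up. The switch to 2-cell notation $\To$ is a notational matter motivated by the 2-categorical framing of the next section and carries no additional mathematical content.
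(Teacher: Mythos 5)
Your proposal is correct and takes essentially the same route as the paper, which offers no separate proof for this corollary precisely because the subsumption constructed in the proof of Proposition \ref{P:lax-interchange} was already the identity on the common carrier set. Your additional explicit check of the interface triangles (using $s_{P\glue Q}=s_P$, $t_{P\glue Q}=t_Q$ and the analogous identities for the primed iposets) is sound bookkeeping that the paper leaves implicit.
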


Next we extend gluing composition to subsumptions.

\begin{lemma}
  \label{de:comp-subsu}
  Let $f: P\To P'$ and $g: Q\To Q'$ be subsumptions
  such that $\cod(P)=\dom(Q)$ and $\cod(P')=\dom(Q')$.  Then
  $h= f\glue g: P\glue Q\To P'\glue Q'$ defined by
  \begin{equation*}
    h( x)=
    \begin{cases}
      f( x) &\text{if } x\in P, \\
      g( x) &\text{if } x\in Q
    \end{cases}
  \end{equation*}
  is well defined and a subsumption.
\end{lemma}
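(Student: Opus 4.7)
The plan is to verify, in order, (i) well-definedness of $h$ on the quotient, (ii) that it is a bijection preserving interfaces, and (iii) that it reflects the order; the last item is the real content of the lemma.

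For well-definedness, recall that $P\glue Q$ is the quotient of $P\sqcup Q$ by the identifications $(t_P(k),1)=(s_Q(k),2)$ for $k\in[m]$, and similarly for $P'\glue Q'$. Since $f$ and $g$ are subsumptions, $f\circ t_P=t_{P'}$ and $g\circ s_Q=s_{Q'}$, so
\begin{equation*}
  h(t_P(k),1)=(f(t_P(k)),1)=(t_{P'}(k),1)\equiv(s_{Q'}(k),2)=(g(s_Q(k)),2)=h(s_Q(k),2).
\end{equation*}
Hence the map $f\sqcup g:P\sqcup Q\to P'\sqcup Q'$ sends equivalent points to equivalent points and descends to a well-defined map $h$ of the quotients. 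Because $f$ and $g$ are bijections and the identifications on each side correspond bijectively under $f\sqcup g$, the induced map $h$ is a bijection. Interface preservation follows immediately: $s_{P\glue Q}=s_P$ and $t_{P\glue Q}=t_Q$, so $h\circ s_{P\glue Q}=f\circ s_P=s_{P'}=s_{P'\glue Q'}$, and dually for targets.

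It remains to show that $h$ reflects the order. Assume $h(x)<_{P'\glue Q'}h(y)$ and unpack by Definition~\ref{de:parglue}(2), picking representatives in $P'\sqcup Q'$. There are three cases. If both representatives lie in $P'$ with $f(x)<_{P'}f(y)$, then $f$ being order-reflecting gives $x<_P y$, hence $x<_{P\glue Q}y$. The case where both lie in $Q'$ is symmetric via $g$. In the mixed case, $h(x)=f(x)\in P'\setminus T_{P'}$ and $h(y)=g(y)\in Q'\setminus S_{Q'}$; since $f$ preserves interfaces and is a bijection, the restriction $f\!\restriction_{T_P}:T_P\to T_{P'}$ is a bijection, so $f(x)\notin T_{P'}$ forces $x\notin T_P$, and analogously $y\notin S_Q$. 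The third clause of Definition~\ref{de:parglue}(2) then yields $x<_{P\glue Q}y$.

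The only subtle point is the handling of the quotient: one must be careful that the order on $P\glue Q$ is well defined on equivalence classes and that the case analysis exhausts all ways a strict inequality can arise in $P'\glue Q'$. Beyond that the argument is routine—each clause of the gluing order is reflected by the corresponding reflecting property of $f$ or $g$, together with the fact that subsumptions restrict to bijections on interfaces and hence on their complements.
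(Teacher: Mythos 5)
Your proof is correct and follows the same route as the paper, which simply asserts that well-definedness follows from interface preservation and that $h$ is "trivially a subsumption"; you have filled in the routine details (descent to the quotient, bijectivity, interface preservation, and the three-case order-reflection argument) that the paper leaves implicit. The key observations you make explicit — that $f\circ t_P=t_{P'}$ and $g\circ s_Q=s_{Q'}$ make the identifications compatible, and that interface preservation plus bijectivity lets the mixed case of the gluing order be reflected — are exactly what the paper's one-line proof is relying on.
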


\begin{proof}
  Well-definedness of $h$ follows from the fact that subsumptions
  preserve interfaces, and $h$ is trivially a subsumption. \qed
\end{proof}

We define a notion of lax tensor for (strict) $2$-categories as a
specialisation of the notion of lax functor between bicategories, see
for example \cite[Sect.~4.1]{book/JohnsonY21}.  Recall that
$2$-categories are formed by objects ($0$-cells), morphisms
($1$-cells) and $2$-cells. Their axioms generalise those of
categories.  We write ``\,$;$\,'', in diagrammatical order, for
composition in a general $2$-category and ``$\To$'' for 2-cells.

\begin{definition}
  \label{de:laxtensor}
  A \emph{lax tensor} on a strict $2$-category $\mcal C$ is an
  operation
  $\mathord{\otimes}: \mcal C_{0, 1}\times \mcal C_{0, 1}\to \mcal
  C_{0, 1}$ on the $(0, 1)$-restriction of $\mcal C$, together with an
  object $\iota\in \mcal C_0$ and $2$-cells
  \begin{equation*}
    I_{f_1,g_1,f_2,g_2}: (f_1\otimes f_2); (g_1\otimes g_2)\To
    (f_1;g_1)\otimes(f_2;g_2)
  \end{equation*}
  for all morphisms $f_1:a_1\to b_1$, $g_1:b_1\to c_1$,
  $f_2:a_2\to b_2$, $g_2:b_2\to c_2$, such that
  \begin{enumerate}
  \item $f\otimes g: a\otimes c\to b\otimes d$ for all morphisms
    $f: a\to b$ and $g: c\to d$;
  \item $\id_a\otimes \id_b = \id_{a\otimes b}$ for all objects $a$
    and $b$;
  \item \label{en:laxtensor.unitass} $\iota\otimes a = a\otimes \iota = a$ and
    $a\otimes (b\otimes c) = (a\otimes b)\otimes c$ for all objects
    $a$, $b$, $c$;
  \item $\id_\iota\otimes f = f\otimes \id_\iota = f$ and
    $f\otimes (g\otimes h) = (f\otimes g)\otimes h$ for all morphisms
    $f$, $g$, $h$;
  \item as a mapping from tuples $(f_1,g_1,f_2,g_2)$ of morphisms to
    $2$-cells, $I$ is natural, \ie the diagram
    \begin{equation*}
    \begin{tikzcd}[row sep=large, column sep=large]
      \mcal C(a_1, b_1)\times \mcal C(a_2, b_2)\times \mcal
          C(b_1, c_1)\times \mcal C(b_2, c_2) \arrow[d, "\textstyle \mathord{\otimes}
          \times \mathord{\otimes}" ']\arrow[r,
          "\textstyle \mathord{;}\times\mathord{;}"] & \mcal C(a_1, c_1)\times
          \mcal C(a_2, c_2)\arrow[d, "\textstyle \otimes" ]\\
      \mcal C(a_1\otimes a_2, b_1\otimes b_2)\times \mcal
          C(b_1\otimes b_2, c_1\otimes c_2) \arrow[r, "\textstyle ;"
          ']\arrow[Rightarrow, ur, "\textstyle I" '] & \mcal C(a_1\otimes a_2, c_1\otimes c_2)
        \end{tikzcd}
    \end{equation*}
    commutes;
  \item \label{en:laxtensor.laxass} and $I$ satisfies lax associativity:
    \begin{equation*}
         \begin{tikzcd}[row sep=normal, column sep=small]
       (f_1\otimes f_2); (g_1\otimes g_2); (h_1\otimes
       h_2)\arrow[Rightarrow, dd, "\textstyle I_{f_1,g_1,f_2,g_2};
          \id_{h_1\otimes h_2}" ']\arrow[Rightarrow, drr, "\textstyle \id_{f_1\otimes f_2};
          I_{g_1,h_1,g_2,h_2}"] && \\
      &&  (f_1\otimes f_2); ((g_1;h_1)\otimes
      (g_2;h_2))\arrow[Rightarrow, dd, "\textstyle I_{f_1,g_1;h_1,f_2,g_2;h_2}"]\\
       ((f_1;g_1)\otimes (f_2;g_2)); (h_1\otimes
       h_2)\arrow[Rightarrow, drr, "\textstyle I_{f_1;g_1,h_1,f_2;g_2,h_2}" ']&&\\
      && (f_1;g_1;h_1)\otimes (f_2;g_2;h_2)\quad
    \end{tikzcd}
    \end{equation*}
  \end{enumerate}
\end{definition}

A lax tensor on $\mcal C$ is thus precisely a lax and strictly unital
functor $\mcal C\times \mcal C\to \mcal C$.  The ``interchanger''
$2$-cells $I_{f_1,g_1,f_2,g_2}$ replace the identities in the strict
interchange law, but are not required to be invertible.

We also need the following generalisation of acyclic or loop-free
categories \cite{journals/tac/Haucourt06, book/BridsonH99}.

\begin{definition}
  A $2$-category $\mcal C$ is \emph{$2$-acyclic} if it holds for all
  $f, g\in \mcal C_1$ that $\mcal C_2(f, g)\ne \emptyset$ and $\mcal
  C_2(g, f)\ne \emptyset$ imply $f=g$ and $\mcal C_2(f, g)=\{\id_f\}$.
\end{definition}

\begin{theorem}
  \label{T:ipos2cat}
  With the data given below, iposets form the $2$-acyclic strict
  $2$-category $\iPos$ with $\otimes$ as a lax tensor:
  \begin{itemize}
  \item objects: natural numbers;
  \item morphisms $n\to m$: iposets $(s,P,t):n\to m$;
  \item composition: $\glue:(n\to m)\times(m\to k)\to(n\to k)$;
  \item identities: $\id_n: n\to n$;
  \item $2$-cells $(n\to m)\To(n\to m)$: subsumptions;
  \item vertical $2$-composition: function composition of subsumptions;
  \item vertical $2$-identities: $\id:(n\to m)\To(n\to m)$;
  \item horizontal $2$-composition: gluing composition of subsumptions;
  \item horizontal $2$-identities: $\id: \id_n\To \id_n$.
  \end{itemize}
\end{theorem}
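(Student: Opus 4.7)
The plan is to verify in sequence: (a) the strict 2-category axioms, (b) 2-acyclicity, and (c) the lax-tensor axioms of Definition \ref{de:laxtensor}, building on Proposition \ref{pr:ipos-comp} for the 1-category structure and on Corollary \ref{co:laxint} for the interchanger 2-cells. Throughout, iposets are identified up to isomorphism, matching the convention already in use in Proposition \ref{pr:ipos-comp}, and hom-categories are treated as thin (at most one 2-cell between any pair), witnessing the preorder $\subsu$.

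For the 2-category structure, I first check that each $\iPos(n, m)$ is a category under vertical composition of subsumptions: function composition preserves order-reflection, bijectivity, and interface commutation, and identity functions are subsumptions. Horizontal composition of subsumptions is supplied by Lemma \ref{de:comp-subsu}, with horizontal identities given by identities on identity iposets. Functoriality of $\glue$ on 2-cells --- that is, $(f \glue g) \circ (f' \glue g') = (f \circ f') \glue (g \circ g')$ and $\id_P \glue \id_Q = \id_{P \glue Q}$ --- is immediate from the piecewise definition of $f \glue g$ in Lemma \ref{de:comp-subsu}. The strict interchange law between vertical and horizontal 2-composition reduces to the same piecewise observation on carrier sets.

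For 2-acyclicity, suppose that subsumptions $P \To Q$ and $Q \To P$ exist. Then $P \subsu Q$ and $Q \subsu P$, so $P \cong Q$ by $\mathord{\cong} = \mathord{\preceq} \cap \mathord{\succeq}$; this gives $P = Q$ at the level of iso classes. Since the hom-categories are thin, the self-hom $\{P \To P\}$ then collapses to the singleton $\{\id_P\}$.

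For the lax tensor, I take $\iota = 0$ and let $I_{f_1, g_1, f_2, g_2}$ be the identity-on-underlying-sets subsumption supplied by Corollary \ref{co:laxint}. Axioms (1) and (2) of Definition \ref{de:laxtensor} read off from Definition \ref{de:parglue}: $\otimes$ sends iposets $n_i \to m_i$ to an iposet $n_1 + n_2 \to m_1 + m_2$, and $\id_n \otimes \id_m$ equals $\id_{n + m}$ because both sides are the discrete poset on $n + m$ points with the obvious interface maps. Strict unit and associativity on objects (axiom 3) hold because $(\Nat, +, 0)$ is a monoid; on morphisms (axiom 4) they hold because $\otimes$ on iposets is associative with the empty iposet as strict unit, as noted after Definition \ref{de:parglue}. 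Naturality (axiom 5) and lax associativity (axiom 6) of $I$ are trivial: every component of $I$ is an identity on carrier sets, while $\glue$ and $\otimes$ act on carrier sets via quotients of disjoint unions, so the relevant diagrams commute strictly. The main bookkeeping hazard --- more of a notational matter than a conceptual one --- is to consistently distinguish literal iposet equality from iso-class equality in the ``strict'' clauses; working with iso classes throughout, as in Proposition \ref{pr:ipos-comp}, dissolves the issue.
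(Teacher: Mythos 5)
Your proof is correct and follows essentially the same route as the paper: the $1$-categorical structure from Proposition \ref{pr:ipos-comp}, horizontal composition of $2$-cells from Lemma \ref{de:comp-subsu}, the interchangers from Corollary \ref{co:laxint}, and routine verification of the remaining axioms. You are in fact slightly more careful than the paper on one point: your explicit thinness convention for the hom-categories is what makes the singleton condition $\mcal C_2(f,f)=\{\id_f\}$ in the definition of $2$-acyclicity go through (otherwise nontrivial poset automorphisms would be distinct $2$-cells $P\To P$), a detail the paper's proof passes over.
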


\begin{proof}
  The only properties of $\iPos$ that remain to be shown are
  associativity of horizontal $2$-composition and the fact that
  all $\id: \id_n\To \id_n$ are horizontal $2$-identities. Both are
  trivial.  Also $2$-acyclicity is clear given that $P\subsu Q$ and
  $Q\subsu P$ imply $P\cong Q$.

  To show that $\otimes$ is a lax tensor, Corollary \ref{co:laxint}
  implies that the interchanger $2$-cells
  $I_{P,Q,P',Q'}: (P\otimes P') \glue (Q\otimes Q') \To (P\glue
  Q)\otimes (P'\glue Q')$ are the identities on the underlying sets.
  Naturality and lax associativity are then immediate. \qed
\end{proof}

The $2$-category $\iPos$ generalises the concurrent
monoids mentioned in the introduction. The following proposition makes
this relationship precise.

\begin{proposition}
  Let $\mcal C$ be a $2$-acyclic strict $2$-category with lax tensor
  such that $\mcal C_0$ contains precisely one object.  With order
  defined by existence of\/ $2$-cells, $\mcal C_1$ forms a concurrent
  monoid.
\end{proposition}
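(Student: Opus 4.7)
The plan is to extract each ingredient of a concurrent monoid from the axioms of a $2$-acyclic strict $2$-category with lax tensor under the hypothesis $\mcal C_0=\{\iota\}$. Since every morphism has domain and codomain $\iota$, the underlying set is the single hom-set $\mcal C_1=\mcal C(\iota,\iota)$. By item (3) of Definition \ref{de:laxtensor} we have $\iota\otimes\iota=\iota$, so $\otimes$ restricts to a binary operation on $\mcal C_1$; items (3) and (4) make it strictly associative with two-sided unit $\id_\iota$, which is also the identity for categorical composition $;$. Hence we obtain a double monoid on $\mcal C_1$ with shared unit.

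I would then define $f\le g$ iff $\mcal C_2(f,g)\ne\emptyset$. Reflexivity is witnessed by vertical identity $2$-cells $\id_f$, transitivity by vertical $2$-composition, and antisymmetry is exactly $2$-acyclicity, so $\le$ is a partial order. Monotonicity of $;$ is standard horizontal $2$-composition, which produces $\alpha;\beta:f;g\To f';g'$ from $\alpha:f\To f'$ and $\beta:g\To g'$. Monotonicity of $\otimes$ comes from the naturality of the interchangers $I$ in item (5), which, applied with identity $2$-cells in one slot, yields a $2$-cell $f\otimes g\To f'\otimes g'$. The lax interchange law is then immediate: $I_{f,g,f',g'}$ is itself a $2$-cell $(f\otimes f');(g\otimes g')\To(f;g)\otimes(f';g')$, witnessing $(f\otimes f');(g\otimes g')\le(f;g)\otimes(f';g')$.

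The subtle point I expect to be the main obstacle is commutativity of $\otimes$ on $\mcal C_1$, should it be part of the intended reading of \emph{concurrent monoid}. Instantiating the interchangers $I_{f,\id,\id,g}$ and $I_{\id,g,f,\id}$ and simplifying via the unit laws of item (4) yields $f;g\le f\otimes g$ and $f;g\le g\otimes f$, with symmetric bounds $g;f\le f\otimes g$ and $g;f\le g\otimes f$, but no direct subsumption between $f\otimes g$ and $g\otimes f$ is produced this way. Either a further chain of subsumptions assembled from lax associativity (item (6)) combined with $2$-acyclicity forces equality, or the proposition is to be read in the weaker sense of an ordered double monoid with shared units satisfying the lax interchange law, without insisting on commutativity of the parallel operation.
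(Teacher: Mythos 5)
Your proposal follows essentially the same route as the paper's own proof: identify $\mcal C_1=\mcal C(\iota,\iota)$ as the carrier, get associativity and the shared unit for $;$ from the category axioms and for $\otimes$ from items (3)--(4) of Definition~\ref{de:laxtensor}, define $f\le g$ by $\mcal C_2(f,g)\ne\emptyset$ with antisymmetry supplied by $2$-acyclicity, and read off the lax interchange law from the existence of the interchanger $2$-cells $I_{f,g,f',g'}$. You are in fact more thorough than the paper, which does not explicitly verify monotonicity of $;$ and $\otimes$ with respect to $\le$ (your appeal to horizontal composition and to naturality of $I$ is the right way to fill that in). As for the ``subtle point'' you flag: your instinct is sound, but you should resolve it rather than leave it open. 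Commutativity of $\otimes$ is not derivable from the lax tensor axioms --- nothing in Definition~\ref{de:laxtensor} breaks the asymmetry between the two slots, and the chain of subsumptions you sketch only yields $f;g\le f\otimes g$ and $f;g\le g\otimes f$, which $2$-acyclicity cannot convert into $f\otimes g=g\otimes f$ since neither bounds the other. The paper's own proof is silent on commutativity as well, so the proposition must be read with ``concurrent monoid'' in the weaker sense of an ordered double monoid with shared unit satisfying lax interchange, without commutativity of the parallel operation; this is consistent with the paper's recurring theme that $\otimes$ on iposets is genuinely non-commutative. With that reading fixed, your argument is complete and correct.
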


\begin{proof}
  The $\otimes$-unit $\iota\in \mcal C_0$ is the unique object, and
  $;$ and $\otimes$ are operations on
  $\mcal C_1=\mcal C_1(\iota, \iota)$.  As $\mcal C_{0, 1}$ is a
  category, $;$ is associative and has $\iota$ as unit; by item
  \ref{en:laxtensor.unitass} of Definition \ref{de:laxtensor}, the
  same is true for $\otimes$.  Let $\le$ be the preorder on
  $\mcal C_1$ given by $f\le g$ if $\mcal C_2(f, g)\ne \emptyset$,
  then $\le$ is a partial order as $\mcal C_2$ is $2$-acyclic, and
  $(f\otimes g); (h\otimes k)\le (f; h)\otimes (g; k)$ for all
  $f, g, h, k\in \mcal C_1$ because of item \ref{en:laxtensor.laxass}
  of Definition \ref{de:laxtensor}. \qed
\end{proof}

In the setting of Theorem \ref{T:ipos2cat}, iposet isomorphisms are
$2$-cells. They should rightly be called $2$-isomorphisms; but we will
keep the simpler terminology.

\section{Starters, Terminators and Symmetries}
\label{se:variants}

This section and the next prepare our treatment of iposets generated
from singletons in Section \ref{S:gp}.  Here we discuss properties of
discrete iposets, which are important for defining non-trivial
(de)compositions of iposets and for understanding symmetries.

An iposet is \emph{discrete} if its order is empty.
A discrete iposet $(s,P,t):n\to m$ is a \emph{starter} if $t:[m]\to P$
is bijective. By opposition, it is a \emph{terminator} if $s:[n]\to P$
is bijective.

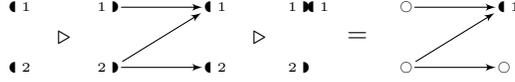
\begin{figure}[tbp]
  \centering
  \begin{tikzpicture}[y=.8cm, x=1.3cm]
    \begin{scope}
      \node [label=right:{\tiny $1$}] at (0,0) {\outpt};
      \node [label=right:{\tiny $2$}] at (0,-1) {\outpt};
      \node at (.5,-.5) {$\glue$};
    \end{scope}
    \begin{scope}[shift={(1,0)}]
      \node [label=left:{\tiny $1$}] (1) at (0,0) {\inpt};
      \node [label=right:{\tiny $1$}] (2) at (1,0) {\outpt};
      \node [label=left:{\tiny $2$}] (3) at (0,-1) {\inpt};
      \node [label=right:{\tiny $2$}] (4) at (1,-1) {\outpt};
      \path (1) edge (2);
      \path (3) edge (2);
      \path (3) edge (4);
      \node at (1.5,-.5) {$\glue$};
    \end{scope}
    \begin{scope}[shift={(3,0)}]
      \node [label=left:{\tiny $1$}] at (-.05,0) {\inpt};
      \node [label=left:{\tiny $2$}] at (-.05,-1) {\inpt};
      \node [label=right:{\tiny $1$}] at (.05,0) {\outpt};
      \node at (.5,-.5) {$=$};
    \end{scope}
    \begin{scope}[shift={(4,0)}]
      \node (1) at (0,0) {\intpt};
      \node [label=right:{\tiny $1$}] (2) at (1,0) {\outpt};
      \node (3) at (0,-1) {\intpt};
      \node (4) at (1,-1) {\intpt};
      \path (1) edge (2);
      \path (3) edge (2);
      \path (3) edge (4);
    \end{scope}
  \end{tikzpicture}
  \caption{Switching off interfaces by composition with a starter and
    a terminator.}
  \label{fi:switchoff}
\end{figure}

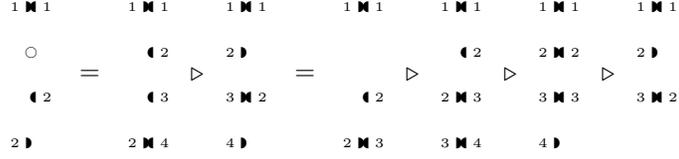
\begin{figure}[tbp]
  \centering
  \begin{tikzpicture}[y=.6cm, x=1.3cm]
    \begin{scope}
      \node [label=left:{\tiny $1$}] at (-.05,0) {\inpt};
      \node [label=right:{\tiny $1$}] at (.05,0) {\outpt};
      \node at (0,-1) {\intpt};
      \node [label=right:{\tiny $2$}] at (.05,-2) {\outpt};
      \node [label=left:{\tiny $2$}] at (-.05,-3) {\inpt};
      \node at (.6,-1.5) {$=$};
    \end{scope}
    \begin{scope}[shift={(1.2,0)}]
      \node [label=left:{\tiny $1$}] at (-.05,0) {\inpt};
      \node [label=right:{\tiny $1$}] at (.05,0) {\outpt};
      \node [label=right:{\tiny $2$}] at (.05,-1) {\outpt};
      \node [label=right:{\tiny $3$}] at (.05,-2) {\outpt};
      \node [label=left:{\tiny $2$}] at (-.05,-3) {\inpt};
      \node [label=right:{\tiny $4$}] at (.05,-3) {\outpt};
      \node at (.5,-1.5) {$\glue$};
    \end{scope}
    \begin{scope}[shift={(2.2,0)}]
      \node [label=left:{\tiny $1$}] at (-.05,0) {\inpt};
      \node [label=right:{\tiny $1$}] at (.05,0) {\outpt};
      \node [label=left:{\tiny $2$}] at (-.05,-1) {\inpt};
      \node [label=left:{\tiny $3$}] at (-.05,-2) {\inpt};
      \node [label=right:{\tiny $2$}] at (.05,-2) {\outpt};
      \node [label=left:{\tiny $4$}] at (-.05,-3) {\inpt};
      \node at (.6,-1.5) {$=$};
    \end{scope}
    \begin{scope}[shift={(3.4,0)}]
      \node [label=left:{\tiny $1$}] at (-.05,0) {\inpt};
      \node [label=right:{\tiny $1$}] at (.05,0) {\outpt};
      \node [label=right:{\tiny $2$}] at (.05,-2) {\outpt};
      \node [label=left:{\tiny $2$}] at (-.05,-3) {\inpt};
      \node [label=right:{\tiny $3$}] at (.05,-3) {\outpt};
      \node at (.5,-1.5) {$\glue$};
    \end{scope}
    \begin{scope}[shift={(4.4,0)}]
      \node [label=left:{\tiny $1$}] at (-.05,0) {\inpt};
      \node [label=right:{\tiny $1$}] at (.05,0) {\outpt};
      \node [label=right:{\tiny $2$}] at (.05,-1) {\outpt};
      \node [label=left:{\tiny $2$}] at (-.05,-2) {\inpt};
      \node [label=right:{\tiny $3$}] at (.05,-2) {\outpt};
      \node [label=left:{\tiny $3$}] at (-.05,-3) {\inpt};
      \node [label=right:{\tiny $4$}] at (.05,-3) {\outpt};
      \node at (.5,-1.5) {$\glue$};
    \end{scope}
    \begin{scope}[shift={(5.4,0)}]
      \node [label=left:{\tiny $1$}] at (-.05,0) {\inpt};
      \node [label=right:{\tiny $1$}] at (.05,0) {\outpt};
      \node [label=left:{\tiny $2$}] at (-.05,-1) {\inpt};
      \node [label=right:{\tiny $2$}] at (.05,-1) {\outpt};
      \node [label=left:{\tiny $3$}] at (-.05,-2) {\inpt};
      \node [label=right:{\tiny $3$}] at (.05,-2) {\outpt};
      \node [label=left:{\tiny $4$}] at (-.05,-3) {\inpt};
      \node at (.5,-1.5) {$\glue$};
    \end{scope}
    \begin{scope}[shift={(6.4,0)}]
      \node [label=left:{\tiny $1$}] at (-.05,0) {\inpt};
      \node [label=right:{\tiny $1$}] at (.05,0) {\outpt};
      \node [label=left:{\tiny $2$}] at (-.05,-1) {\inpt};
      \node [label=left:{\tiny $3$}] at (-.05,-2) {\inpt};
      \node [label=right:{\tiny $2$}] at (.05,-2) {\outpt};
    \end{scope}
  \end{tikzpicture}
  \caption{Decomposition of discrete iposet into starters and
    terminators.}
  \label{fi:discrete}
\end{figure}

Starters and terminators are useful for starting and terminating
individual events in gluing compositions: in $Q=S\glue P\glue T$ with
starter $S$ and terminator $T$, the poset of $Q$ equals that of $P$,
but parts of the interfaces of $P$ may have been switched off by $S$
and $T$.
As an example, Figure \ref{fi:switchoff} shows how a
starter is used to start the minimal events of an iposet $\N$ in which
they are already running, and a terminator is used to terminate one
unfinished event of the $\N$.

Every discrete iposet can obviously be obtained by gluing a starter
with a terminator, see also Figure \ref{fi:discrete}.

\begin{lemma}
  \label{le:trivial-deglue}
  Let $P:n\to m$ and $Q:m\to \ell$ be iposets.  Then $P\glue Q\cong Q$
  as posets if and only if $P$ is  a starter, and $P\glue Q\cong P$ as
  posets if and only if $Q$ is a terminator.
\end{lemma}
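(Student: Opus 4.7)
The plan is to combine a simple cardinality count with the observation that the interface constraints force discreteness. I would first record that
\[
|P\glue Q| \;=\; |P|+|Q|-m,
\]
since the carrier of $P\glue Q$ is obtained from $P\sqcup Q$ by identifying the $m$ distinct points of $T_P$ with the $m$ distinct points of $S_Q$ (one from each side). This single formula drives both implications.

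For the ``if'' direction of the first equivalence, I would assume $P$ is a starter, so $t_P$ is bijective (hence $T_P=P$) and $P$ is discrete. Then $P\setminus T_P=\emptyset$, so the natural projection $P\sqcup Q\to P\glue Q$ restricts to a bijection $Q\to P\glue Q$ on carriers. Since $P$ is discrete there are no internal relations in $P$, and the crossing relations $(p,1)<(q,2)$ introduced by Definition~\ref{de:parglue}(2) require $p\notin T_P$, which is vacuous; hence the order on $P\glue Q$ coincides with $<_Q$ under this bijection, giving $P\glue Q\cong Q$ as posets.

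For the ``only if'' direction, I would assume $P\glue Q\cong Q$ as posets. Cardinalities force $|P|=m$; since $t_P:[m]\to P$ is injective with image $T_P$, this gives $T_P=P$. But by the iposet axiom every element of $T_P$ is maximal in $P$, so every element of $P$ is maximal, which forces $<_P=\emptyset$. Thus $P$ is discrete with $t_P$ bijective, i.e., a starter. The second equivalence is entirely symmetric: $|P\glue Q|=|P|$ forces $S_Q=Q$, and source-interface elements being minimal forces $<_Q=\emptyset$; conversely, if $Q$ is a terminator then the quotient collapses all of $Q$ into $T_P$ and no crossing relations are added.

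I do not anticipate any real obstacle; the only thing worth being careful about is not confusing ``isomorphism of posets'' with ``isomorphism of iposets'' (interfaces play no role in the conclusion, only in the hypothesis on $P$ or $Q$), and the subtle point that $T_P=P$ is what converts ``maximal interface elements'' into ``all elements maximal'' and hence into discreteness.
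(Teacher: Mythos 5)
Your proof is correct and follows essentially the same route as the paper's: the forward direction is a direct verification from the definition of $\glue$, and the converse is the cardinality argument showing that $t_P$ must be surjective, hence bijective. You are in fact slightly more explicit than the paper, which leaves implicit the step that $T_P=P$ together with maximality of target-interface elements forces $P$ to be discrete (as required by the definition of a starter).
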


\begin{proof}
  We only show the first claim; the second one follows by opposition.
  If $P$ is a starter, then obviously $P\glue Q\cong Q$ as posets.
  Conversely, if $P\glue Q\cong Q$ as posets, then
  $(P\sqcup Q)_{/t_P(i)= s_Q(i)}=Q$.  Thus $t_P$ must be surjective
  and therefore bijective. \qed
\end{proof}

In light of Lemma \ref{le:trivial-deglue} we may now define when a
gluing (de)composi\-tion is non-trivial.  This is easy for
\emph{posets} where $P\pomser Q$ is trivial precisely when $P$ or $Q$
is empty; but the presence of interfaces complicates matters.

\begin{definition}
  A gluing product $P\glue Q$ is \emph{trivial} if $P$ is a starter or
  $Q$ a terminator.
\end{definition}

Every identity is obviously a starter and a terminator, but not every
iposet that is a starter and a terminator is an identity.
A \emph{symmetry} is a discrete iposet $(s,P,t):n\to n$ that is both a
starter and a terminator. Thus $s$ and $t$ are both bijections.  All
points of $P$ are in the starting and terminating interfaces, but
$t^{-1}\circ s:[n]\to [n]$ is  merely a permutation, not necessarily an
identity. Intuitively, symmetries permute the  numbers that label
interfaces. The following lemma explains their name. 

\begin{lemma}
  \label{le:symm-symm}
  For any $n\ge 0$, the automorphisms on $[n]$ are precisely the
  symmetries $n\to n$ in $\iPos$.
\end{lemma}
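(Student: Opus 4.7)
The plan is to set up a bijection between permutations $\pi$ of $[n]$ and iposet-isomorphism classes of symmetries $\sigma:n\to n$, via the maps $\Phi:(s,P,t)\mapsto t^{-1}\circ s$ and $\Psi:\pi\mapsto(\pi,[n],\id_{[n]})$.

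First I would verify that $\Phi$ is well defined on isomorphism classes. By the definition of a symmetry both $s$ and $t$ are bijections $[n]\to P$, so $t^{-1}\circ s$ is a permutation of $[n]$. If $f:(s_1,P_1,t_1)\to(s_2,P_2,t_2)$ is an iposet isomorphism, the interface equations $f\circ s_1=s_2$ and $f\circ t_1=t_2$ pin down $f=s_2\circ s_1^{-1}=t_2\circ t_1^{-1}$, which rearranges to $t_2^{-1}\circ s_2=t_1^{-1}\circ s_1$. Hence iposet-isomorphic symmetries yield the same permutation.

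Next I would check $\Phi\circ\Psi=\id$: the iposet $\Psi(\pi)=(\pi,[n],\id_{[n]})$ is manifestly discrete with $s,t$ bijective, so it is a symmetry, and $\Phi(\Psi(\pi))=\id^{-1}\circ\pi=\pi$. For $\Psi\circ\Phi=\id$ on isomorphism classes, given any symmetry $(s,P,t)$ I would exhibit the bijection $f:=t^{-1}:P\to[n]$ as an iposet isomorphism $(s,P,t)\to(t^{-1}\circ s,[n],\id_{[n]})$. The order-preservation and order-reflection conditions hold vacuously because both underlying posets are discrete, and the interface equations reduce to $f\circ t=\id_{[n]}$ and $f\circ s=t^{-1}\circ s$, both immediate.

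There is no substantial obstacle; the only care needed is bookkeeping the order of composition so that $\Phi$ and $\Psi$ remain mutually inverse. The essence is that a discrete iposet with bijective source and target interfaces is captured, up to isomorphism, by the single permutation $t^{-1}\circ s$, and the identification with the automorphisms of $[n]$ is simply this permutation.
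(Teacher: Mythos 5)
Your proposal proves a different statement from the one the paper intends. You read ``the automorphisms on $[n]$'' as the poset automorphisms of the discrete poset $[n]$, i.e.\ the permutations of $\{1,\dotsc,n\}$, and you then classify symmetries up to iposet isomorphism by the permutation $t^{-1}\circ s$. That classification is correct as far as it goes (your maps $\Phi$ and $\Psi$ are indeed mutually inverse, and $t^{-1}$ does give the required iposet isomorphism onto $(\pi,[n],\id_{[n]})$), but it is not what the lemma asserts. In the paper, ``automorphism on $[n]$'' means an \emph{invertible morphism $n\to n$ in the category $\iPos$}, i.e.\ an iposet $(s,P,t):n\to n$ admitting left and right inverses under gluing composition $\glue$. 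This is how the lemma is used later: in the proof that all automorphisms of the subcategory of interface consistent iposets are identities, one needs to know that every $\glue$-invertible morphism is a symmetry.

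The substantive direction of the lemma --- that any $(s,P,t):n\to n$ with a left inverse $Q_1$ and a right inverse $Q_2$ under $\glue$ must be a symmetry --- is entirely absent from your argument. You never consider gluing composition or inverses at all, so nothing in your proof rules out, say, a two-point iposet $1\to 1$ being invertible. The paper's proof handles this by a counting argument: since $|Q_1|\ge|T_{Q_1}|=n$ and $|P|\ge|S_P|=n$, the identity $n=|Q_1\glue P|=|Q_1|+|P|-n$ forces $|P|=n$, so $s$ (and hence $t$) is bijective and $P$ is discrete, i.e.\ a terminator; the symmetric argument with $Q_2$ makes it a starter, hence a symmetry. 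You would need to supply this (or an equivalent) argument for your proof to establish the lemma.
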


\begin{proof}
  Any symmetry $n\to n$ is obviously an isomorphism on $[n]$.  For the
  converse, let $(s, P, t): n\to n$ be an isomorphism and
  $(s_1, Q_1, t_1): n\to n$ and $(s_2, Q_2, t_2): n\to n$ its left and
  right inverses.  The numbers of points in $Q_1$ and $P$ satisfy
  $|Q_1|\ge |T_{Q_1}|=n$ and $|P|\ge |S_P|=n$, and then
  $n=|Q_1\glue P|=|Q_1|+|P|-n$ implies $|P|=n$, so that $P$ must be a
  terminator.  A symmetric argument using $Q_2$ shows that $P$ is a
  starter, hence a symmetry. \qed
\end{proof}

The next lemma shows that parallel composition of iposets commutes up
to symmetry.

\begin{lemma}
  \label{le:symmetries}
  For any iposets $P_1:n_1\to m_1$ and $P_2:n_2\to m_2$ there are
  symmetries $\sigma$ on $n_1+n_2$ and $\tau$ on $m_1+m_2$ such that
  $P_1\otimes P_2\cong \sigma\glue(P_2\otimes P_1)\glue \tau$.
\end{lemma}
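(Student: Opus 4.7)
The plan is to observe that $P_1\otimes P_2$ and $P_2\otimes P_1$ have the same underlying poset (both are the disjoint union $P_1\sqcup P_2$ with no cross-comparisons), so they differ only in the numbering of their source and target interfaces; the symmetries $\sigma$ and $\tau$ will merely re-index these interfaces to match.

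First I would record that, up to the canonical poset isomorphism swapping the two summands, the underlying posets of $P_1\otimes P_2$ and $P_2\otimes P_1$ coincide. Fix this common carrier and compare the two source embeddings. The source of $P_1\otimes P_2$ sends $i\le n_1$ to the $i$-th source point of $P_1$ and $n_1{+}j$ (for $j\le n_2$) to the $j$-th source point of $P_2$, whereas the source of $P_2\otimes P_1$ sends $i\le n_2$ to the $i$-th source point of $P_2$ and $n_2{+}j$ to the $j$-th source point of $P_1$. Hence the two sources differ by the block-swap permutation $\rho$ on $[n_1{+}n_2]$ given by $\rho(i)=n_2{+}i$ for $i\le n_1$ and $\rho(n_1{+}j)=j$ for $j\le n_2$. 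Similarly, a block-swap $\rho'$ on $[m_1{+}m_2]$ reconciles the two target embeddings.

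Next I would define the symmetries. Take $\sigma:n_1{+}n_2\to n_1{+}n_2$ to be the discrete iposet on $[n_1{+}n_2]$ with $s_\sigma=\id$ and $t_\sigma=\rho^{-1}$, so that the induced permutation $t_\sigma^{-1}\circ s_\sigma$ equals $\rho$; by Lemma \ref{le:symm-symm} this is a symmetry. Analogously let $\tau:m_1{+}m_2\to m_1{+}m_2$ have $t_\tau=\id$ and $s_\tau=\rho'$. Applying Lemma \ref{le:trivial-deglue} twice, the gluings $\sigma\glue(P_2\otimes P_1)$ and $(P_2\otimes P_1)\glue\tau$ do not change the underlying poset, since $\sigma$ is a starter and $\tau$ a terminator; so $\sigma\glue(P_2\otimes P_1)\glue\tau$ has the same carrier and order as $P_2\otimes P_1$, which in turn matches that of $P_1\otimes P_2$.

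Finally I would chase the interface embeddings through the gluing. Unfolding Definition \ref{de:parglue}(2), the source of $\sigma\glue X$ is $s_X\circ(t_\sigma^{-1}\circ s_\sigma)=s_X\circ\rho$, and symmetrically the target of $X\glue\tau$ is $t_X\circ(s_\tau^{-1}\circ t_\tau)^{-1}=t_X\circ \rho'$. Substituting $X=P_2\otimes P_1$ and using the identities from step one recovers exactly the source and target of $P_1\otimes P_2$. This yields the desired iposet isomorphism. The only real obstacle is keeping the four permutations straight (and in particular remembering that gluing a symmetry with source/target pair $(s,t)$ acts on the interface of the other factor by the composite $t^{-1}\circ s$, not just by $s$ or $t$ alone); once the conventions are fixed, the verification is purely mechanical.
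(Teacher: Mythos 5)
Your strategy is the same as the paper's: the published proof consists of nothing more than exhibiting the two block-swap permutations and invoking Lemma \ref{le:symm-symm}, so your explicit verification of how the interfaces transform is exactly the content that the paper leaves implicit. Most of it is right, but there is a concrete error in the target half. Unfolding Definition \ref{de:parglue}(2) for $X\glue\tau$ with $\tau=(s_\tau,[m],t_\tau)$ a symmetry, the point $t_\tau(i)$ of $\tau$'s carrier is identified with the point $t_X(s_\tau^{-1}(t_\tau(i)))$ of $X$, so the target of $X\glue\tau$, read in $X$-coordinates, is $t_X\circ s_\tau^{-1}\circ t_\tau=t_X\circ\pi_\tau^{-1}$, where $\pi_\tau=t_\tau^{-1}\circ s_\tau$ is the permutation of $\tau$ --- not $t_X\circ(s_\tau^{-1}\circ t_\tau)^{-1}=t_X\circ\pi_\tau$ as you write. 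Your formula for the source of $\sigma\glue X$, namely $s_X\circ t_\sigma^{-1}\circ s_\sigma=s_X\circ\pi_\sigma$, is correct: pre-gluing acts by $\pi$ but post-gluing acts by $\pi^{-1}$, and your claim that both act by $t^{-1}\circ s$ gets the duality backwards.

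As a consequence, with your choice $s_\tau=\rho'$, $t_\tau=\id$ the target of $(P_2\otimes P_1)\glue\tau$ comes out as $t_{P_2\otimes P_1}\circ\rho'^{-1}$ instead of the required $t_{P_2\otimes P_1}\circ\rho'$, and a block swap is not an involution unless the two blocks have equal length (for $m_1=2$, $m_2=1$ your $\rho'$ is a $3$-cycle). So the displayed $\tau$ fails whenever $m_1\ne m_2$. The repair is immediate: take $\pi_\tau=\rho'^{-1}$, \ie $s_\tau=\rho'^{-1}$ and $t_\tau=\id$ (equivalently $s_\tau=\id$ and $t_\tau=\rho'$). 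This inversion is precisely why, in the paper's proof, $\sigma$ shifts the first block by $n_2$ while $\tau$ shifts the first block by $m_1$: the two permutations are mutually ``reversed'' block swaps rather than the formally analogous ones. With that one correction, the rest of your argument --- the common underlying poset, Lemma \ref{le:trivial-deglue} showing that gluing with a starter and a terminator leaves the poset unchanged, and the interface chase --- goes through.
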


\begin{proof}
  In light of Lemma \ref{le:symm-symm}, the symmetries may be defined by
  \begin{equation*}
    \sigma(i)=
    \begin{cases}
      i+n_2 &\text{for } i\le n_1,\\
      i-n_1 &\text{for } i>n_1
    \end{cases}
    \qquad\text{ and }\qquad
    \tau(i)=
    \begin{cases}
      i+m_1 &\text{for } i\le m_2,\\
      i-m_2 &\text{for } i>m_2.
    \end{cases}
  \end{equation*}

  \vspace*{-6ex}
  \qed
\end{proof}

Non-identity symmetries may be removed from our setting by imposing
interface consistency, which we define next.

The interfaces of an iposet $(s,P,t):n\to m$ induce implicit extra
orderings on some of the points of $P$ that are independent of the
order on $P$. They are defined by $x\intord_s y$ if $x,y\in S_P$ and
$s^{-1}(x) <_\Nat s^{-1}(y)$, and $x\intord_t y$ if $x,y\in T_P$ and
$t^{-1}(x) <_\Nat t^{-1}(y)$.  Here $<_\Nat$ is the natural ordering
on $[n]$ and $[m]$.

\begin{definition}\label{D:int-cons}
  Iposet $(s,P,t):n\to m$ is \emph{interface consistent} if
  \begin{equation*}
    s^{-1}(x)<_\Nat s^{-1}(y) \,\Leftrightarrow\, t^{-1}(x)<_\Nat t^{-1}(y)\qquad \text{ for
      all } x,y\in S_P\cap T_P.
    \end{equation*}
\end{definition}

The orders $\intord_s$ and $\intord_t$ of interface consistent iposets
can therefore be combined into a partial order
$\intord\ =\ \intord_s\cup \intord_t$ on $P$. The interface consistent
symmetries are precisely the identities, as all points of a symmetry
are in $S_P\cap T_P$. Further, interface consistency is preserved by
gluing composition, parallel composition and subsumption.  Interface
consistent iposets thus form a subcategory of $\iPos$.  By Lemma
\ref{le:symm-symm}, all its automorphisms are identities.

We will see in Lemma \ref{le:gp-cons} that all iposets generated from
singletons using finitary gluing and parallel compositions are
interface consistent.  On the other hand, there are interface
consistent iposets which are not gluing-parallel, see Example
\ref{ex:non-gp-posets-parallel} below.

\section{A Criterion for Gluing Decompositions}
\label{sec:criterion}

In this section we supply a criterion for the existence of gluing
decompositions in iposets.  When thinking of a decomposition
$P=Q\glue R$ as synchronic cut through $P$, then some events are in
the past, already terminated, some in the present, currently running,
and some in the future, yet to be started.  This is captured in the
following definition.

\begin{definition}\label{D:char-fun}
  Let $Q:n\to m$ and $R:m\to k$ be iposets. The \emph{characteristic
    function} of the decomposition $Q\glue R$ is
  $\phi_{Q\glue R}:Q\glue R\to\{0,{*},1\}$ defined by
  \begin{equation*}
    \phi_{Q\glue R}(x) =
    \begin{cases}
      1 & \text{for } x\in Q\setminus T_Q, \\
      * & \text{for }x\in T_Q=S_R, \\
      0 & \text{for }x\in R\setminus S_R.
    \end{cases}
  \end{equation*}
\end{definition}

Hence we label past events with respect to the decomposition with $1$,
present events with $*$, and future events with $0$.

\begin{lemma}
  \label{le:characteristic-function}
  The characteristic function $\phi=\phi_{Q\glue R}$ satisfies the
  following:
  \begin{enumerate}[label=(\Alph*)]
  \item\label{le:characteristic-functionA}
    If $(\phi(x),\phi(y))=(1,0)$, then $x<y$.
  \item\label{le:characteristic-functionB}
    If $(\phi(x),\phi(y))\in\{(1,*),(*,0),(1,0)\}$, then $y\not<x$.
  \item\label{le:characteristic-functionC}
    If $(\phi(x),\phi(y))=(*,*)$, then $x\not<y$ and $y\not<x$.
  \item\label{le:characteristic-functionD}
    If $x<y$ and $\phi(y)\ne 0$, then $\phi(x)=1$.
    If $x<y$ and $\phi(x)\ne 1$, then $\phi(y)=0$.	
  \end{enumerate}
  If the decomposition $P=Q\glue R$ is non-trivial, then there exist
  $x,y\in P$ such that $x$ is minimal, $y$ is maximal, $\phi(x)=1$,
  and $\phi(y)=0$.
\end{lemma}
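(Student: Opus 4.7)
The plan is to verify each of (A)--(D) by unwinding the gluing order from Definition \ref{de:parglue}, recalling that $T_Q$ consists of maximal elements of $Q$ and $S_R$ of minimal elements of $R$. An element $x\in P$ with $\phi(x)=*$ corresponds to two representatives $(x,1)$ and $(x,2)$ of $Q\sqcup R$ identified by the quotient $t_Q(k)=s_R(k)$, whereas elements with $\phi(x)\in\{0,1\}$ have a single representative on the $Q$-side or $R$-side respectively. So each claim $x<_P y$ reduces to checking the corresponding clauses of the gluing order on (all pairs of) representatives.

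Part (A) is immediate: for $\phi(x)=1$ and $\phi(y)=0$, the unique representatives are $(x,1)$ and $(y,2)$ with $x\notin T_Q$ and $y\notin S_R$, so the second clause of the order in Definition \ref{de:parglue} applies and gives $x<_P y$. For (B), in each of the three cases the candidate relation $y<_P x$ would have to arise either from $y<_Q x$ with $x\in T_Q$ maximal in $Q$ (case $(*,0)$ uses $x\in S_R$ minimal in $R$), or from the second clause with $i>j$; each possibility fails. For (C) both $x,y\in T_Q\cap S_R$, and all four pairings of representatives fail: two by maximality in $Q$ or minimality in $R$, one by the constraint $i<j$, and the cross-case $(x,1)<(y,2)$ by the requirement $x\notin T_Q$. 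Part (D) is then the contrapositive reading of (B) and (C) combined with the observation above that $x<_P y$ with $\phi(x)=*$ forces the first clause on the $Q$-side, which in turn requires $\phi(y)\ne *$.

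The final statement is the substantive content. Non-triviality gives that $t_Q:[m]\to Q$ and $s_R:[m]\to R$ fail to be surjective (they are already injective), hence $Q\setminus T_Q$ and $R\setminus S_R$ are non-empty, providing some elements with $\phi$-values $1$ and $0$. The minor subtlety is that we want $x$ to be minimal in $P$ itself, not merely in $\phi^{-1}(1)$. I would take $x$ to be minimal in $\phi^{-1}(1)$ under the order inherited from $P$; if some $z<_P x$ existed, the second half of (D) applied to $z<x$ with $\phi(x)=1\ne 0$ would force $\phi(z)=1$, contradicting the choice of $x$. The symmetric argument, using the first half of (D), produces the required maximal $y$ with $\phi(y)=0$.

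I do not anticipate any real obstacle; the only mildly delicate step is bookkeeping the two representatives of $*$-valued elements in cases (B) and (C), and the use of (D) in the final claim to lift minimality within $\phi^{-1}(1)$ to minimality in $P$.
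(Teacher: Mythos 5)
Your proof is correct and follows essentially the same route as the paper's (which simply declares each item immediate from the definition of $\glue$, the minimality/maximality of interface elements, and the preceding items); you merely fill in the case analysis on representatives and the short argument, via item \ref{le:characteristic-functionD}, that a minimal element of $\phi^{-1}(1)$ is minimal in $P$. No gaps.
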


\begin{proof}
  \ref{le:characteristic-functionA} follows immediately from the
  definition of $\glue$. \ref{le:characteristic-functionB} holds
  because source (target) interfaces contain only minimal (maximal)
  elements and by \ref{le:characteristic-functionA}.
  \ref{le:characteristic-functionC} follows from incomparability of
  interface elements.  \ref{le:characteristic-functionD} follows from
  \ref{le:characteristic-functionB} and
  \ref{le:characteristic-functionC}.  The last statement is
  obvious. \qed
\end{proof}

Let $P$ be a poset, and let $P_a, P_b\subseteq P$ be the sets of
points $x$ for which the up-sets $x\up=\{y\mid x<y\}$ and down-sets
$x\down=\{y\mid y< x\}$ have maximal size, 
\begin{equation*}
  P_a = \big\{ x\in P\mid \forall y\in P.\ |y\up|\le|x\up| \big\},
  \qquad%
  P_b = \big\{ x\in P\mid \forall y\in P.\ |y\down|\le|x\down|
  \big\}.
\end{equation*}
Then $P_a\ne \emptyset\ne P_b$, every element of $P_a$ is minimal in
$P$ and every element of $P_b$ is maximal in $P$.

\begin{lemma}
  \label{le:criterion-for-gluing}
  Suppose $P$ admits a non-trivial gluing decomposition.  Then there
  is $\phi:P\to \{0,*,1\}$ that satisfies the conditions
  of Lemma \ref{le:characteristic-function} and, in addition,
  \begin{enumerate}[label=(\Alph*),start=5]
  \item\label{le:criterion-for-gluingE} $\phi(x)=1$ for $x\in P_a$ and
    $\phi(y)=0$ for $y\in P_b$.
  \end{enumerate}
\end{lemma}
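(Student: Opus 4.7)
Given a non-trivial gluing decomposition $P = Q \glue R$ with characteristic function $\phi_0 = \phi_{Q \glue R}$ (which satisfies (A)--(D) by Lemma \ref{le:characteristic-function}), the plan is to modify $\phi_0$ by forcing $P_a$-elements to value $1$ and $P_b$-elements to value $0$. Concretely, I would define
\begin{equation*}
  \phi(z) = \begin{cases} 1 & \text{if } \phi_0(z)=1 \text{ or } z \in P_a, \\ 0 & \text{if } \phi_0(z)=0 \text{ or } z \in P_b, \\ * & \text{otherwise.} \end{cases}
\end{equation*}
Condition (E) then holds by construction, and the existence of a minimal $x$ with $\phi(x) = 1$ and a maximal $y$ with $\phi(y) = 0$ is automatic since $P_a$ and $P_b$ are nonempty and consist of minimal and maximal elements respectively. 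The real work is to check that $\phi$ is well defined and still satisfies (A)--(D).

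For well-definedness, I first note that $P$ cannot be discrete: non-triviality forces both $Q \setminus T_Q$ and $R \setminus S_R$ to be nonempty, and the cross-gluing rule then produces at least one strict comparison. Consequently $P_a \cap P_b = \emptyset$, since any element lying in both would be isolated in $P$ and would force all elements to be isolated. A short use of (A) further shows $P_a \cap \phi_0^{-1}(0) = \emptyset$: an $x \in P_a$ with $\phi_0(x)=0$ is minimal, so by (A) no $z$ could satisfy $\phi_0(z)=1$, which would make $Q$ a starter. Dually, $P_b \cap \phi_0^{-1}(1) = \emptyset$, so $\phi$ is unambiguously defined. Verifying (B), (C) and (D) for the new $\phi$ is then routine: $P_a$-elements are minimal and $P_b$-elements are maximal in $P$, so the relevant hypotheses either become vacuous or reduce directly to the corresponding conditions for $\phi_0$.

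The crux is (A), and it rests on a counting lemma: if $x \in P_a$ with $\phi_0(x)=*$, then $x\up = R \setminus S_R$; dually, if $y \in P_b$ with $\phi_0(y)=*$, then $y\down = Q \setminus T_Q$ and every element of $R\setminus S_R$ is minimal in $R$. Indeed, $x \in T_Q = S_R$ is maximal in $Q$ and minimal in $R$, so a case analysis of the gluing order yields $x\up \subseteq R \setminus S_R$; on the other hand any $z \in Q \setminus T_Q$ satisfies $z\up \supseteq R \setminus S_R$, so $|z\up| \ge |R\setminus S_R| \ge |x\up|$, and maximality of $|x\up|$ forces equality throughout; the symmetric counting argument yields the dual claim together with the minimality assertion. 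With this lemma, three of the four subcases of (A) for the new $\phi$ reduce either to (A) of $\phi_0$ or to the lemma. The main obstacle is the fourth subcase, where both $x \in P_a$ and $y \in P_b$ have $\phi_0(x)=\phi_0(y)=*$: both sit in the interface $T_Q = S_R$ and need not be comparable. I would rule this out rather than try to prove $x < y$: pick any $w \in R \setminus S_R$, which is nonempty by non-triviality; the dual claim forces $w$ to be minimal in $R$, whereas the main claim applied to $x$ gives $x <_R w$, a contradiction.
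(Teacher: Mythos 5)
Your proof is correct, and its engine is the same counting argument the paper uses: for $x\in P_a$ with $\phi_0(x)=*$ one has $x\up\subseteq R\setminus S_R\subseteq y\up$ for any $y$ with $\phi_0(y)=1$, and maximality of $|x\up|$ then forces $x\up=R\setminus S_R$ (dually for $P_b$). The difference lies in what is done with this fact. The paper uses it to modify the \emph{decomposition}: since $x<w$ for every $w\in R\setminus S_R$, the point $x$ can be moved out of the gluing interface into $Q'\setminus T_{Q'}$, producing a new non-trivial decomposition whose characteristic function sends $x$ to $1$; properties (A)--(D) then hold for free by Lemma \ref{le:characteristic-function} applied to the new decomposition. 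You instead modify the \emph{function} over the fixed decomposition and re-verify (A)--(D) by hand. That costs you the routine case checks, but it buys something real: you are forced to notice, and correctly dispose of, the only delicate configuration, namely $x\in P_a$ and $y\in P_b$ both sitting in the interface $T_Q=S_R$. Your contradiction --- every $w\in R\setminus S_R$ would have to be minimal in $R$ by the dual count, yet non-minimal because $x<_R w$, while $R\setminus S_R\ne\emptyset$ by non-triviality --- is sound. The paper's argument relies on the same fact only implicitly: if both $P_a$ and $P_b$ met the interface, the two decomposition modifications would be jointly inconsistent, as the moved points would need to become comparable yet lie in a common antichain. So your version is longer but arguably more scrupulous on that point, and your well-definedness checks ($P$ non-discrete, $P_a\cap P_b=\emptyset$, $P_a\cap\phi_0^{-1}(0)=\emptyset$, $P_b\cap\phi_0^{-1}(1)=\emptyset$) are all correct.
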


\begin{proof}
  We only verify \ref{le:criterion-for-gluingE}.  Suppose $P=Q\glue R$
  is non-trivial. Then there must be $y,z\in P$ such that
  $\phi_{Q\glue R}(y)=1$, $\phi_{Q\glue R}(z)=0$ and therefore $y<z$.

  Let $x\in P_a$. If $\phi_{Q\glue R}(x)=0$, then $x>y$ by
  \ref{le:characteristic-functionA}, which contradicts minimality of
  $x$.  Suppose $\phi_{Q\glue R}(x)=*$, then
  $x\up\subseteq R\setminus S_R\subseteq y\up$.  But
  $|x\up|\ge|y\up|$, forcing $x\up=y\up$. In other words, $x<w$ for
  any $w\in R\setminus S_R$.  Hence there is a non-trivial gluing
  decomposition $P=Q'\glue R'$ in which $x$ has been moved from the
  gluing interface into $Q'\setminus T_{Q'}$ and consequently
  $\phi_{Q'\glue R'}(x)=1$.

  The proof of $\phi_{Q\glue R}(x)=0$ for $x\in P_b$ follows by
  opposition. \qed
\end{proof}

The following is now immediate from \ref{le:characteristic-functionA}
and \ref{le:criterion-for-gluingE}.

\begin{corollary}
  \label{co:criterion-simple}
  If poset $P$ admits a non-trivial gluing decomposition, then $x<y$
  for all $x\in P_a$ and $y\in P_b$. \qed
\end{corollary}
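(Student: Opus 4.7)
The plan is to derive the corollary by simply chaining two facts already supplied by Lemma \ref{le:criterion-for-gluing}. Given a non-trivial gluing decomposition of $P$, that lemma delivers a labeling $\phi \colon P \to \{0, *, 1\}$ satisfying all of conditions \ref{le:characteristic-functionA}--\ref{le:criterion-for-gluingE}. So all I need to do is pick arbitrary $x \in P_a$ and $y \in P_b$ and read off from the lemma that $\phi(x) = 1$ and $\phi(y) = 0$ by \ref{le:criterion-for-gluingE}. The pair $(\phi(x), \phi(y)) = (1, 0)$ is exactly the hypothesis of \ref{le:characteristic-functionA}, which then yields $x < y$.

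There is no real obstacle, since Lemma \ref{le:criterion-for-gluing} has already done the nontrivial work: the point is that elements with maximum up-set size cannot sit in the ``present'' or ``future'' part of any gluing decomposition without contradicting either minimality or the maximality of $|x{\up}|$, and dually for $P_b$. Once \ref{le:criterion-for-gluingE} locks these extremal elements into the past and the future respectively, property \ref{le:characteristic-functionA} of the characteristic function immediately forces an order relation between them.

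Thus the proof would occupy essentially a single line: invoke Lemma \ref{le:criterion-for-gluing} to produce $\phi$, apply \ref{le:criterion-for-gluingE} to $x \in P_a$ and $y \in P_b$, then apply \ref{le:characteristic-functionA}. The only thing worth flagging is that this corollary is a genuinely useful consequence even though its proof is trivial, because it gives a purely order-theoretic necessary condition (comparability of all $P_a$--$P_b$ pairs) for decomposability that does not mention interfaces, $\phi$, or starters and terminators at all.
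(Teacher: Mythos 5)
Your proof is correct and is exactly the paper's argument: the paper states that the corollary "is now immediate from \ref{le:characteristic-functionA} and \ref{le:criterion-for-gluingE}," which is precisely the chain you describe (Lemma \ref{le:criterion-for-gluing} gives $\phi$ with $\phi(x)=1$ and $\phi(y)=0$, and condition \ref{le:characteristic-functionA} then forces $x<y$). Nothing further is needed.
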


\section{Gluing-Parallel Iposets}
\label{S:gp}

\begin{figure}[tbp]
  \centering
  \begin{tikzpicture}
    \begin{scope}
      \node at (0,0) {$[0]\to [1]\from [0]$};
      \node at (3,0) {$[0]\to [1]\from [1]$};
      \node at (6,0) {$[1]\to [1]\from [0]$};
      \node at (9,0) {$[1]\to [1]\from [1]$};
    \end{scope}
    \begin{scope}[shift={(0,-1)}]
      \node at (0,0) {\intpt};
      \node [label=right:{\tiny $1$}] at (3,0) {\outpt};
      \node [label=left:{\tiny $1$}] at (6,0) {\inpt};
      \node [label=left:{\tiny $1$}] at (8.94,0) {\inpt};
      \node [label=right:{\tiny $1$}] at (9.06,0) {\outpt};
    \end{scope}
  \end{tikzpicture}
  \caption{The four singletons, structurally and graphically.}
  \label{fi:singletons}
\end{figure}
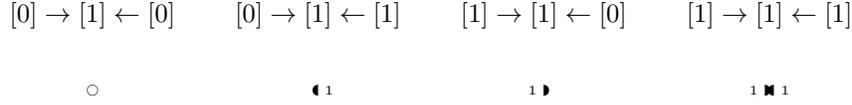

In this section, we start our study of iposets generated from
singletons using $\glue$ and $\otimes$.
A \emph{singleton} is an iposet whose underlying poset has one single
point.  There are four of them: $\rintpt$, $\routpt\! \rtone$,
$\rtone\! \rinpt$ and $\rtone\! \riopt\! \rtone$.  In particular,
$\id_1= \rtone\!\riopt\! \rtone$.  Figure \ref{fi:singletons} shows
their structural definitions and graphical representations. We write
$\single4=\{\rintpt, \routpt\! \rtone, \rtone\! \rinpt, \rtone\!
\riopt\! \rtone\}$ for the set of singleton iposets.  For
$i,j\in \{0,1\}$, we write more generally $\rti\rintpt\rtj$ instead of
$(f_i,[1],f_j):n\to m$, where $f_k: [k]\to [1]$ is either empty or the
identity.

\begin{definition}
  The set of \emph{gluing-parallel} iposets (\emph{gp-iposets}) is the
  smallest set that contains the empty iposet $\id_0$ and all elements
  of $\single4$ and  is closed under gluing and parallel
  composition.
\end{definition}

\begin{lemma}
  \label{le:gp-cons}
  Every gp-iposet is interface consistent.
\end{lemma}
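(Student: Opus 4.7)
The plan is to proceed by structural induction on the construction of gp-iposets from the generators $\id_0$ and $\single4$ under $\glue$ and $\otimes$.

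For the base cases, $\id_0$ and the singletons $\rintpt$, $\routpt\!\rtone$, $\rtone\!\rinpt$ are interface consistent vacuously because $S_P \cap T_P$ is empty or has at most one element. The singleton $\rtone\!\riopt\!\rtone = \id_1$ has $|S_P \cap T_P|=1$, so Definition \ref{D:int-cons}'s condition holds vacuously as well.

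For the parallel inductive step, assume $P_1:n_1\to m_1$ and $P_2:n_2\to m_2$ are interface consistent and consider $P_1\otimes P_2$. A point $x\in S_{P_1\otimes P_2}\cap T_{P_1\otimes P_2}$ must come from either $P_1$ or $P_2$ (since the underlying set is a disjoint union). If $x,y$ come from the same $P_i$, the relative source and target orderings are inherited from $P_i$ and coincide by the induction hypothesis. If $x\in P_1$ and $y\in P_2$, then by the definition of $s$ and $t$ via $\phi_{n_1,n_2}$ and $\phi_{m_1,m_2}$, we have $s^{-1}(x)\le n_1 < s^{-1}(y)$ and $t^{-1}(x)\le m_1 < t^{-1}(y)$, so both inequalities hold simultaneously. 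This covers all cases.

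For the gluing inductive step, assume $P_1:n\to m$ and $P_2:m\to k$ are interface consistent. The source of $P_1\glue P_2$ is $s_1$ and the target is $t_2$. The key observation is that an element $x\in S_{P_1\glue P_2}\cap T_{P_1\glue P_2}$ corresponds to an equivalence class containing some $p\in s_1([n])\subseteq P_1$ and some $q\in t_2([k])\subseteq P_2$. Since the only identifications in the quotient are $t_1(i)\sim s_2(i)$, such a class is nontrivial only if $p=t_1(i)$ and $q=s_2(i)$ for some common $i\in[m]$. Hence for each such $x$ there are indices $a_x=s_1^{-1}(p)$, $i_x$, and $b_x=t_2^{-1}(q)$ with $t_1^{-1}(p)=i_x=s_2^{-1}(q)$. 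Applying interface consistency of $P_1$ to $p_x,p_y\in S_{P_1}\cap T_{P_1}$ gives $a_x<_\Nat a_y \Leftrightarrow i_x<_\Nat i_y$, and interface consistency of $P_2$ applied to $q_x,q_y\in S_{P_2}\cap T_{P_2}$ gives $i_x<_\Nat i_y \Leftrightarrow b_x<_\Nat b_y$. Composing the two equivalences yields $s_1^{-1}(x)<_\Nat s_1^{-1}(y) \Leftrightarrow t_2^{-1}(x)<_\Nat t_2^{-1}(y)$, as required.

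The main obstacle is the careful bookkeeping in the gluing case: one must recognise that a point in $S\cap T$ of a gluing must have been obtained by identifying an element of $S_{P_1}\cap T_{P_1}$ with one of $S_{P_2}\cap T_{P_2}$ through a shared gluing index, and then chain the two interface-consistency equivalences through this index. The parallel case and the base cases are essentially bookkeeping.
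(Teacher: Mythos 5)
Your proof is correct and follows essentially the same route as the paper, which disposes of the lemma in one line by noting that the empty iposet and the singletons are interface consistent and that gluing and parallel composition preserve this property. You have simply carried out in full the structural induction the paper leaves implicit, including the key observation in the gluing case that a point of $S\cap T$ in $P_1\glue P_2$ must arise from identifying a point of $S_{P_1}\cap T_{P_1}$ with one of $S_{P_2}\cap T_{P_2}$ via a common gluing index.
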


\begin{proof}
  The empty iposet and all singletons are interface consistent; gluing
  and parallel compositions preserve this property. \qed
\end{proof}

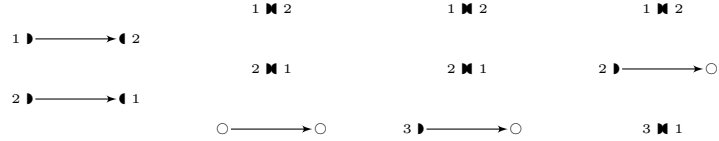
\begin{figure}[tbp]
  \centering
  \begin{tikzpicture}[x=1.3cm, y=.8cm]
    \begin{scope}[shift={(0,-.5)}]
      \node [label=left:{\tiny $1$}] (1) at (0,0) {\inpt};
      \node [label=left:{\tiny $2$}] (2) at (0,-1) {\inpt};
      \node [label=right:{\tiny $1$}] (3) at (1,-1) {\outpt};
      \node [label=right:{\tiny $2$}] (4) at (1,0) {\outpt};
      \path (1) edge (4);
      \path (2) edge (3);
    \end{scope}
    \begin{scope}[shift={(2,0)}]
      \node [label=left:{\tiny $1$}] at (.45,0) {\inpt};
      \node [label=right:{\tiny $2$}] at (.55,0) {\outpt};
      \node [label=left:{\tiny $2$}] at (.45,-1) {\inpt};
      \node [label=right:{\tiny $1$}] at (.55,-1) {\outpt};
      \node (1) at (0,-2) {\intpt};
      \node (2) at (1,-2) {\intpt};
      \path (1) edge (2);
    \end{scope}
    \begin{scope}[shift={(4,0)}]
      \node [label=left:{\tiny $1$}] at (.45,0) {\inpt};
      \node [label=right:{\tiny $2$}] at (.55,0) {\outpt};
      \node [label=left:{\tiny $2$}] at (.45,-1) {\inpt};
      \node [label=right:{\tiny $1$}] at (.55,-1) {\outpt};
      \node [label=left:{\tiny $3$}] (1) at (0,-2) {\inpt};
      \node (2) at (1,-2) {\intpt};
      \path (1) edge (2);
    \end{scope}
    \begin{scope}[shift={(6,0)}]
      \node [label=left:{\tiny $1$}] at (.45,0) {\inpt};
      \node [label=right:{\tiny $2$}] at (.55,0) {\outpt};
      \node [label=left:{\tiny $2$}] (1) at (0,-1) {\inpt};
      \node (2) at (1,-1) {\intpt};
      \node [label=left:{\tiny $3$}] at (.45,-2) {\inpt};
      \node [label=right:{\tiny $1$}] at (.55,-2) {\outpt};
      \path (1) edge (2);
    \end{scope}
  \end{tikzpicture}
  \caption{Four iposets on four points that are not gluing-parallel.}
  \label{fi:ipng4}
\end{figure}

\begin{example}
  \label{ex:non-gp-posets-parallel}
  Interface consistency characterises gp-iposets on two and three
  points: on two points, the only iposet that is not in this class is
  the non-trivial symmetry on $[2]$; on three points, twelve iposets
  are not in it. All of them are discrete and not interface
  consistent.

  On four points, there are $113$ non-gp iposets; $96$ of them are
  discrete (and not interface consistent).  Of the seventeen others,
  sixteen are parallel products of the non-trivial symmetry on $[2]$
  with an arrow.  The last one is the $\twotwo$ with interfaces
  swapped, which is the only interface consistent iposet on four
  points which is not gluing-parallel.  In Figure \ref{fi:ipng4}, the
  latter is displayed on the left, followed by three examples of the
  former.
\end{example}

Recall that the \emph{series-parallel} posets are freely generated
from the empty poset and the singleton $\rintpt$ by finitary serial
and parallel compositions and up to unicity and associativity of these
compositions as well as commutativity of parallel product
\cite{DBLP:journals/tcs/BloomE96a}.  There remain several obstructions
to showing a similar result for gp-iposets.

First, the lax interchange law from Proposition
\ref{P:lax-interchange} becomes strict if two of the components
involved are singletons with matching interfaces.

\begin{proposition}[Singleton interchange]
  \label{P:interchange-eq}
  For all iposets $P$, $Q$ with $P\glue Q$ defined and
  $i,j\in \{ 0, 1\}$,
  \begin{align*}
    ({\rti\rintpt\rtone} \otimes P) \glue ({\rtone\rintpt\rtj} \otimes Q) &\cong (
    {\rti\rintpt\rtone} \glue {\rtone\rintpt\rtj} ) \otimes (P\glue Q), \\
    (P\otimes {\rti\rintpt\rtone} ) \glue (Q\otimes {\rtone\rintpt\rtj}) &\cong (P\glue
    Q) \otimes ({\rti\rintpt\rtone} \glue {\rtone\rintpt\rtj}).
  \end{align*}
\end{proposition}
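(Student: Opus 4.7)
The plan is to invoke Proposition~\ref{P:lax-interchange} for the ``$\subsu$'' direction and then upgrade it to an isomorphism by observing that the singletons in the statement leave no room for extra order arrows on the left-hand side.

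For the first identity, apply the lax interchange law with $A = {\rti\rintpt\rtone}$, $B = P$, $C = {\rtone\rintpt\rtj}$, $D = Q$. The proof of Proposition~\ref{P:lax-interchange} already shows that the underlying carrier sets of $L = (A \otimes B) \glue (C \otimes D)$ and $R = (A \glue C) \otimes (B \glue D)$ coincide (both arise from the same quadruple pushout of $A,B,C,D$ modulo the interface identifications $t_A(i){=}s_C(i)$, $t_B(j){=}s_D(j)$), and that the identity map $L \to R$ is a subsumption. It also plainly preserves interfaces. To promote it to an iposet isomorphism it therefore suffices to show that the identity is order-preserving as well, i.e., that every arrow of $L$ is also an arrow of $R$.

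Within-component arrows (strictly inside $A$, $B$, $C$, or $D$) agree on both sides. Any other arrow of $L$ stems from the outer gluing and runs from some $x \in (A \sqcup B) \setminus T_{A \otimes B}$ to some $y \in (C \sqcup D) \setminus S_{C \otimes D}$. Since $T_{A \otimes B} = T_A \sqcup T_B$ and $S_{C \otimes D} = S_C \sqcup S_D$, this splits into four subcases. The diagonal cases $(x \in A,\, y \in C)$ and $(x \in B,\, y \in D)$ produce precisely the arrows that the inner gluings $A \glue C$ and $B \glue D$ contribute to $R$. The off-diagonal cases $(x \in A,\, y \in D)$ and $(x \in B,\, y \in C)$ are the only potential obstruction to the orders of $L$ and $R$ coinciding.

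The decisive observation is that both off-diagonal cases are vacuous. The singleton $A = {\rti\rintpt\rtone}$ has a target interface of size $1$ whose image already exhausts $A$, so $A \setminus T_A = \emptyset$; dually, $C \setminus S_C = \emptyset$ for $C = {\rtone\rintpt\rtj}$. Hence no off-diagonal arrow exists in $L$, so $L$ and $R$ have the same order and the identity is the desired isomorphism. The second identity follows by the mirror argument, using $B \setminus T_B = \emptyset$ and $D \setminus S_D = \emptyset$ for the singletons appearing on the right of each parallel product. I do not anticipate any real obstacle here; the only slightly delicate point is the bookkeeping of interface identifications, and that is already taken care of by the identity-on-sets viewpoint inherited from the proof of Proposition~\ref{P:lax-interchange}.
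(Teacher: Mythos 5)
Your proposal is correct and follows essentially the same route as the paper's own proof: invoke Proposition~\ref{P:lax-interchange} to get the identity-on-carriers subsumption, then show it is also order-preserving by observing that the extra cross-arrows of the outer gluing can only run from $P\setminus T_P$ to $Q\setminus S_Q$ because the singletons ${\rti\rintpt\rtone}$ and ${\rtone\rintpt\rtj}$ are entirely contained in their target resp.\ source interfaces. Your four-subcase bookkeeping is just a slightly more explicit spelling-out of the paper's ``otherwise'' case.
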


\begin{proof}
  As $\rti\rintpt\rtone$ and $\rtone\rintpt\rtj$ are both
  single points in an interface, the result follows by definition of
  $\glue$: the singletons glue separately in the left-hand sides of
  the identities precisely in the way described in their right-hand
  sides.

  For a detailed proof, first note that
  $\rti\rintpt\rtone \;\glue\; \rtone\rintpt\rtj = \rti\rintpt\rtj$. By Proposition
  \ref{P:lax-interchange}, there is a subsumption
  $f:({\rti\rintpt\rtone}\otimes P)\glue({\rtone\rintpt\rtj}\otimes Q)\to
  {\rti\rintpt\rtj}\otimes( P\glue Q) = L\to R$, which is the identity
  on the two underlying posets.  It remains to show that it preserves
  the order.  So let $x, y\in L$ with $x<_L y$.  If
  $x, y\in {\rti\rintpt\rtone}\otimes P$, then $x<_P y$, hence
  $x<_{P\glue Q} y$ and $x<_R y$, and likewise for
  $x, y\in {\rtone\rintpt\rtj}\otimes Q$.  Otherwise, if
  $x\in {\rti\rintpt\rtone} \otimes P\setminus T_{{\rti\rintpt\rtone} \otimes P}$ and
  $y\in {\rtone\rintpt\rtj}\otimes Q\setminus S_{{\rtone\rintpt\rtj}\otimes
    Q}$, then $x\in P\setminus T_P$ and $y\in Q\setminus S_Q$, thus
  $x<_{P\glue Q} y$ and again $x<_R y$. \qed
\end{proof}

Second, parallel composition becomes commutative when some components
have no interfaces.

\begin{proposition}
  \label{P:otimes-comm}
  Let $P_1:n_1\to m_1$ and $P_2:n_2\to m_2$ be iposets and assume that
  $n_1=0$ or $n_2=0$, and $m_1=0$ or $m_2=0$.  Then
  $P_1\otimes P_2\cong P_2\otimes P_1$.
\end{proposition}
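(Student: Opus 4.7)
The plan is to reduce the statement directly to Lemma \ref{le:symmetries}, which already provides symmetries $\sigma$ on $n_1+n_2$ and $\tau$ on $m_1+m_2$ such that
\begin{equation*}
P_1\otimes P_2 \;\cong\; \sigma \glue (P_2\otimes P_1) \glue \tau.
\end{equation*}
It then suffices to show that, under the hypotheses of the proposition, both $\sigma$ and $\tau$ reduce to identity morphisms in $\iPos$, from which the isomorphism $P_1\otimes P_2\cong P_2\otimes P_1$ follows by the unit law of gluing composition (Proposition \ref{pr:ipos-comp}).

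The key verification is to plug the hypothesis into the explicit formula for $\sigma$ established in the proof of Lemma \ref{le:symmetries}, namely $\sigma(i)=i+n_2$ for $i\le n_1$ and $\sigma(i)=i-n_1$ for $i>n_1$. If $n_1=0$, the first clause is vacuous and the second yields $\sigma(i)=i$ for all $i\in[n_2]$; if instead $n_2=0$, the second clause is vacuous (every $i\in[n_1+n_2]=[n_1]$ satisfies $i\le n_1$) and the first yields $\sigma(i)=i$. In either case $\sigma=\id_{n_1+n_2}$. An entirely analogous case split on $m_1=0$ versus $m_2=0$ shows $\tau=\id_{m_1+m_2}$. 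Combining, $P_1\otimes P_2\cong \id\glue(P_2\otimes P_1)\glue\id\cong P_2\otimes P_1$.

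I do not expect any serious obstacle here: the content is essentially bookkeeping about the indexing bijections $\phi_{n_1,n_2}$ and $\phi_{n_2,n_1}$ from Definition \ref{de:parglue}, and the only thing to check is that the block-swap permutation used to identify $P_1\otimes P_2$ with $P_2\otimes P_1$ is trivial whenever one of the blocks is empty. If one preferred to avoid Lemma \ref{le:symmetries}, an equivalent direct argument would define the map $f\colon P_1\otimes P_2\to P_2\otimes P_1$ by $(x,1)\mapsto(x,2)$ and $(x,2)\mapsto(x,1)$, observe that it is a poset isomorphism since no order relates the two components, and check that $f\circ((s_1\otimes s_2)\circ\phi_{n_1,n_2})=(s_2\otimes s_1)\circ\phi_{n_2,n_1}$ (and similarly on targets), which again is immediate once one of the summands on each side is empty.
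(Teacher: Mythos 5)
Your proposal is correct and follows the paper's own argument exactly: the paper also deduces the result from Lemma \ref{le:symmetries} by observing that the hypotheses force both $\sigma$ and $\tau$ to be identities. Your explicit case check on the formulas for $\sigma$ and $\tau$ just spells out what the paper leaves implicit.
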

\begin{proof}
  Using Lemma \ref{le:symmetries} there are symmetries $\sigma$ and
  $\tau$ such that
  $P_1\otimes P_2\cong \sigma\glue(P_2\otimes P_1)\glue \tau$; but the
  assumptions make both $\sigma$ and $\tau$ identities. \qed
\end{proof}

Propositions~\ref{P:interchange-eq} and \ref{P:otimes-comm} have
converses, which tell us precisely when strict interchange and
commutativity hold.  First, by the next lemma, Proposition
\ref{P:interchange-eq} covers all cases of strict interchange.

\begin{lemma}
  \label{le:decomp}
  Let $P_1\otimes P_2\cong Q_1\glue Q_2$ such that the gluing
  composition is non-trivial.  Then $P_1$ or $P_2$ is discrete.
\end{lemma}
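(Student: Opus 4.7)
The plan is to exploit the characteristic function $\phi$ of Definition \ref{D:char-fun} for the given non-trivial decomposition $Q_1\glue Q_2$, transported through the isomorphism to a function $\phi:P_1\otimes P_2\to\{0,*,1\}$. The key structural fact about parallel composition is that no element of $P_1$ is comparable with any element of $P_2$ in $P_1\otimes P_2$, and I will use this to force one of the two summands to carry only the value $*$, which by property \ref{le:characteristic-functionC} will make it discrete.

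First I would invoke the last sentence of Lemma \ref{le:characteristic-function} to obtain a minimal $x$ with $\phi(x)=1$ and a maximal $y$ with $\phi(y)=0$; by \ref{le:characteristic-functionA} we then have $x<y$ in $P_1\otimes P_2$. Since $x<y$, both elements must lie in the same parallel summand; without loss of generality, $x,y\in P_1$. I will then prove that every $z\in P_2$ satisfies $\phi(z)=*$ by eliminating the two other possibilities. If $\phi(z)=1$, then \ref{le:characteristic-functionA} applied to the pair $(z,y)$ yields $z<y$, contradicting the incomparability of $z\in P_2$ with $y\in P_1$; if $\phi(z)=0$, the same property applied to $(x,z)$ yields $x<z$, contradicting again.

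Once $\phi(z)=*$ is established for all $z\in P_2$, I would finish by applying \ref{le:characteristic-functionC}: any two elements of $\phi^{-1}(*)$ are pairwise incomparable in $P_1\otimes P_2$. Since the ordering on $P_2$ as a subposet of $P_1\otimes P_2$ coincides with the original ordering of $P_2$, this forces $P_2$ to be discrete.

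I do not anticipate a real obstacle: all the heavy lifting is packaged in Lemma \ref{le:characteristic-function}, and the only conceptual choice is deciding which of $P_1,P_2$ ends up discrete, which is determined by the summand containing the witness pair $(x,y)$. The main thing to get right is that the absence of cross-comparisons in $\otimes$ is strong enough to rule out $\phi(z)\in\{0,1\}$ for the entire opposite summand rather than for a single element.
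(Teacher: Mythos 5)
Your proof is correct and is essentially the paper's argument repackaged in the language of the characteristic function: the paper likewise extracts a cross pair $p<q$ with $p\in Q_1\setminus T_{Q_1}$ and $q\in Q_2\setminus S_{Q_2}$ from non-triviality, places both in the same parallel summand using the $<$-disconnectedness of $P_1\otimes P_2$, and concludes that the other summand lies entirely in the gluing interface $T_{Q_1}=S_{Q_2}$ (your $\phi^{-1}(*)$) and is hence discrete. No gap.
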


\begin{figure}[tbp]
  \centering
  \begin{tikzpicture}[-, shorten <=0pt, shorten >=0pt, x=1.7cm]
    \draw (0,0) -- (2,0) -- (2,2) -- (0,2) -- (0,0);
    \draw (1,0) -- (1,2) (0,1) -- (2,1);
    \node at (-.3,.5) {$P_2$};
    \node at (-.3,1.5) {$P_1$};
    \node at (.5,-.4) {$Q_1$};
    \node at (1.5,-.4) {$Q_2$};
    \node at (.5,.5) {$P_{21}$};
    \node at (.5,1.5) {$P_{11}$};
    \node at (1.5,.5) {$P_{22}$};
    \node at (1.5,1.5) {$P_{12}$};
  \end{tikzpicture}
  \caption{Partition of $P_1\otimes P_2\cong Q_1\glue Q_2$ in the
    proof of Lemma \ref{le:decomp}.}
  \label{fi:pqinterchange}
\end{figure}

\begin{proof}
  Partition each $P_i$ into $P_{ij}$ containing the points in $P_i$ that
  are also in $Q_j$, as indicated in Figure \ref{fi:pqinterchange}.
  By hypothesis, there is a non-target point $p\in Q_1\setminus T_{Q_1}$ and a
  non-source point $q\in Q_2\setminus S_{Q_2}$, hence $p<q$ in $Q_1\glue Q_2$.
  More precisely, we must have $p\in P_{i1}$ and $q\in P_{i2}$ for $i$
  either $1$ or $2$, because $P_1$ and $P_2$ are $<$-disconnected.
  Now any non-terminating point $r\in P_{j1}$ for $j\ne i$ would force
  $r<q$ in $Q_1\glue Q_2$, which is inconsistent with this
  disconnectivity.  By opposition, any non-starting point
  $r\in P_{j2}$ would force $p<r$ in $Q_1\glue Q_2$.  Hence $P_{j1}$
  must be a starter and $P_{j2}$ a terminator, making $P_j$
  discrete. \qed
\end{proof}

Let $\sim$ denote the equivalence relation generated by $<$ in a poset
$P$.  Equivalence classes of $\sim$ are \emph{connected components} of
$P$.  $P$ is \emph{connected} if it has exactly one connected
component.

\begin{lemma}
  \label{le:parcom-nointf}
  Let $P_1:n_1\to m_1$ and $P_2:n_2\to m_2$ be iposets such that $P_1$
  and $P_2$ are both connected, $P_1\not\cong P_2$, and
  $P_1\otimes P_2\cong P_2\otimes P_1$.  Then $n_1=0$ or $n_2=0$, and
  $m_1=0$ or $m_2=0$.
\end{lemma}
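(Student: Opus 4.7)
The plan is to exploit connectedness to say that any iposet isomorphism $f:P_1\otimes P_2\to P_2\otimes P_1$ must either preserve or swap the two underlying components of the coproduct, and then use interface-preservation of $f$ to extract information about $n_1,n_2,m_1,m_2$ in each case.

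First, if either $P_1$ or $P_2$ is empty the conclusion is immediate, so assume both are non-empty. Fix an iposet isomorphism $f:P_1\otimes P_2\to P_2\otimes P_1$; it is in particular an order-preserving and order-reflecting bijection of the underlying posets, hence maps connected components bijectively to connected components. Since $P_1$ and $P_2$ are connected and non-empty, the sets $P_1,P_2\subseteq P_1\otimes P_2$ (and likewise inside $P_2\otimes P_1$) are precisely the connected components of each side, so there are only two cases: either (Case 1) $f(P_1)=P_1$ and $f(P_2)=P_2$, or (Case 2) $f(P_1)=P_2$ and $f(P_2)=P_1$.

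In Case 1 I will use that the source map of $P_1\otimes P_2$ sends the indices $1,\dots,n_1$ into $P_1$ and the indices $n_1+1,\dots,n_1+n_2$ into $P_2$, whereas the source map of $P_2\otimes P_1$ sends $1,\dots,n_2$ into $P_2$ and $n_2+1,\dots,n_2+n_1$ into $P_1$. If both $n_1\ge 1$ and $n_2\ge 1$, then the identity $f\circ s=s'$ applied at index $1$ gives $f(s_1(1))=s_2(1)$, which is impossible because $f$ preserves the component $P_1$ whereas $s_1(1)\in P_1$ and $s_2(1)\in P_2$. Hence $n_1=0$ or $n_2=0$; the target argument is symmetric and yields $m_1=0$ or $m_2=0$.

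In Case 2, applying the same identity $f\circ s=s'$ index by index, and using that $f$ sends $P_1$ to $P_2$, gives $s(i)\in P_1\Leftrightarrow s'(i)\in P_2$, i.e.\ $i\le n_1\Leftrightarrow i\le n_2$; this forces $n_1=n_2$. A symmetric argument with targets gives $m_1=m_2$. But then $f$ restricted to $P_1$ is a bijection $P_1\to P_2$, order-preserving and order-reflecting because $f$ is, and satisfying $f|_{P_1}\circ s_1=s_2$ and $f|_{P_1}\circ t_1=t_2$ by the above. Thus $f|_{P_1}$ is an iposet isomorphism $P_1\to P_2$, contradicting the hypothesis $P_1\not\cong P_2$. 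So Case~2 cannot occur, and only Case~1 remains, completing the proof.

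The only step that requires real attention is Case~2, where one must carefully read off the interface conditions to build an actual iposet isomorphism $P_1\to P_2$ from $f$; the remainder is a bookkeeping exercise using the explicit description of source/target interfaces of $P_1\otimes P_2$ via $\phi_{n_1,n_2}$ and $\phi_{m_1,m_2}$.
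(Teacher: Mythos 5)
Your proof is correct and follows essentially the same route as the paper's: both hinge on the facts that an iposet isomorphism $f\colon P_1\otimes P_2\to P_2\otimes P_1$ must respect the two connected components and satisfy $f\circ s_{12}=s_{21}$, and that a component-swapping $f$ would restrict to an iposet isomorphism $P_1\to P_2$, contradicting $P_1\not\cong P_2$. The only difference is organisational: the paper assumes $n_1\neq 0\neq n_2$ and deduces the swap directly, whereas you case-split on preserve-versus-swap first, which amounts to the same argument.
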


\begin{proof}
  Let $f: P_1\otimes P_2\to P_2\otimes P_1$ be any iposet
  isomorphism and suppose $n_1\ne 0\ne n_2$. We show that
  $f|_{P_1}:P_1\to P_2$ is an isomorphism, too.  The proof for
  $m_1\neq 0\neq m_2$ is then symmetric.

  Both $P_1\otimes P_2$ and $P_2\otimes P_1$ have two connected
  components: $[P_1]$ and $[P_2]$.  The condition $n_1\neq 0\neq n_2$
  guarantees that
  $f(s_{P_1\otimes P_2}(1)) = s_{P_2\otimes P_1}(1) \in P_2$ and
  $s_{P_1\otimes P_2}(1) \in P_1$.  Thus, $f$ sends the connected
  component $[P_1]\subseteq P_1\otimes P_2$ to
  $[P_2]\subseteq P_2\otimes P_1$.  Since $f$ is an isomorphism, its
  restriction $f|_{P_1}:P_1\to P_2$ is also an isomorphisms of posets.
  
  Let $s_{12}:[n_1+n_2]\to P_1\otimes P_2$,
  $s_{21}:[n_1+n_2]\to P_2\otimes P_1$ denote the respective source
  interfaces for short. By assumption, $f\circ s_{12}=s_{21}$.  Thus,
  for $i\in[n_1]$,
  \[
  	f(s_1(i))
  	=f((s_1\otimes s_2)\circ \varphi_{n_1,n_2}(i))
  	= f(s_{12}(i))
  	=s_{21}(i)
  	=(s_2\otimes s_1)\circ \varphi_{n_2,n_1}(i).
  \]
  We have $s_1(i)\in [P_1]\subseteq P_1\otimes P_2$
  and, therefore, $f(s_1(i))\in P_2\subseteq [P_2]$.
  Hence, $\varphi_{n_2,n_1}(i)\leq n_2$, \ie $i\leq n_2$.
  Finally,
  \[
  	f(s_1(i))=(s_2\otimes s_1)\circ \varphi_{n_2,n_1}(i) = s_2(i).
  \]
  Similarly we show that $f\circ t_1=t_2$.  This implies that
  $f|_{P_1}:P_1\to P_2$ preserves interfaces and is an isomorphism of
  iposets. \qed
\end{proof}

The connectedness assumption above is needed for rather trivial
reasons: if $P\ne \id_0$, then $P\ne P\otimes P$, but
$P\otimes(P\otimes P)= (P\otimes P)\otimes P$.

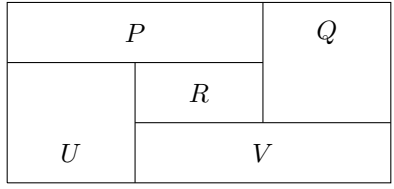
\begin{figure}[tbp]
  \centering
  \begin{tikzpicture}[-, shorten <=0pt, shorten >=0pt, x=1.7cm,
    y=.8cm]
    \draw (0,0) -- (3,0) -- (3,3) -- (0,3) -- (0,0);
    \draw (1,1) -- (3,1) (0,2) -- (2,2);
    \draw (1,0) -- (1,2) (2,1) -- (2,3);
    \node at (.5,.5) {$U$};
    \node at (2,.5) {$V$};
    \node at (1.5,1.5) {$R$};
    \node at (1,2.5) {$P$};
    \node at (2.5,2.5) {$Q$};
  \end{tikzpicture}
  \caption{One of the two situations in Levi's lemma.}
  \label{fi:levi}
\end{figure}

Whether gluing-parallel iposets form the free algebra in some variety
remains open.  Such a result is claimed in \cite[Thm.\
19]{DBLP:conf/RelMiCS/FahrenbergJST20}, but its proof depends on a
lemma, \cite[Lemma 16]{DBLP:conf/RelMiCS/FahrenbergJST20}, that does
not hold.  It states that if $P\glue Q=U\glue V$ for some iposets,
then these gluing decompositions have a common refinement, that is,
there exists an iposet $R$ such that either $P= U\glue R$ and
$R\glue Q= V$ or $U= P\glue R$ and $R\glue V= Q$. See Figure
\ref{fi:levi} for an illustration.  This claim is refuted by the
following simple counterexample.

\begin{example}
  \label{ex:antilevi}
  Consider the iposets
  \begin{equation*}
    P =\! \vcenter{\hbox{%
        \begin{tikzpicture}[y=.5cm]
          \node (0) at (0,0) {\intpt};
          \node (2) at (1,0) {\intpt};
          \node [label=right:{\tiny $1$}] (3) at (1,-1) {\outpt};
          \path (0) edge (2);
        \end{tikzpicture}}}, \qquad
    U =\! \vcenter{\hbox{%
        \begin{tikzpicture}[y=.5cm]
          \node (0) at (0,0) {\intpt};
          \node [label=right:{\tiny $1$}] (2) at (1,0) {\outpt};
          \node (3) at (1,-1) {\intpt};
          \path (0) edge (2);
        \end{tikzpicture}}}, \qquad
    V =\! \vcenter{\hbox{%
        \begin{tikzpicture}[y=.5cm]
          \node [label=left:{\tiny $1$}] (0) at (0,0) {\inpt};
          \node (2) at (1,0) {\intpt};
          \path (0) edge (2);
        \end{tikzpicture}}}.
  \end{equation*}
  Then
  \begin{equation*}
    P\glue V=U\glue V =\! \vcenter{\hbox{%
        \begin{tikzpicture}[y=.5cm]
          \node (0) at (0,0) {\intpt};
          \node (1) at (1,0) {\intpt};
          \node (2) at (2,0) {\intpt};
          \node (3) at (1,-1) {\intpt};
          \path (0) edge (1);
          \path (1) edge (2);
          \path (3) edge (2);
        \end{tikzpicture}}},
  \end{equation*}
  but neither $P\glue R=U$ nor $U\glue R=P$ for any iposet $R$.
\end{example}

\section{A Hierarchy of Gluing-Parallel Iposets}
\label{se:hierarchy}

In order to refine our study of gluing-parallel iposets, and in
particular to work towards a result on free generation as mentioned
above, we now define a hierarchy of gp-iposets generated from the
empty iposet and the singletons in $\single4$ by successive iterations
of parallel and gluing compositions.  We show that this hierarchy is
infinite, which is a necessary, but not a sufficient condition for
gp-iposets being freely generated.

For any $\mcal Q\subseteq \iPos$ and
$\mathord{\Box} \in \{\mathord{\otimes}, \mathord{\glue}\}$ let
\begin{equation*}
  \closure{\Box}{\mcal Q} = \{P_1\mathop{\Box}\dotsm\mathop{\Box}
  P_n\mid n\ge 1, P_1,\dotsc,P_n\in \mcal Q\}
\end{equation*}
be the $\Box$-closure of $\mcal Q$. In particular, therefore,
$\mcal Q\subseteq \closure{\Box}{\mcal Q}$.  This defines operations
$\closureop{\glue},\closureop{\otimes}: 2^\iPos\to 2^\iPos$.  We are
interested in their iterative application and write
\begin{equation*}
  \single4_0= \{\id_0\}\cup \single4,\qquad \single4_{i+1} =
  \closureop{\glue\otimes} \single4_i,\qquad \single4_\ast = \bigcup_{i\ge 0} \single4_i.
\end{equation*}
Then $\single4_\ast$ is the class of gp-iposets.

Further, we are interested in gluing-parallel \emph{posets}, which we
regard as iposets with empty interfaces.  To this end, let
\begin{equation*}
  \mcal G= \single4\cap \Pos,\qquad \mcal G_n=\mcal
  S_n\cap \Pos, \qquad {\mcal G}_\ast= {\mcal S}_\ast\cap \Pos.
  \end{equation*}

\begin{lemma}
\label{le:ICDiscrete}
  An iposet is in
  $\closure{\otimes}{\single4}$ if and only if it is discrete and
  interface consistent.
\end{lemma}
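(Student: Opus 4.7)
The plan is to verify both implications: a parallel composition of singletons is always discrete and interface consistent, and conversely every non-empty discrete interface consistent iposet admits such a parallel decomposition.

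The forward direction is routine by induction on the number of factors. Each element of $\single4$ has a single point and empty order, so is discrete; and it is trivially interface consistent since $|S_P\cap T_P|\le 1$ for a singleton, making the defining condition vacuous. Parallel composition preserves discreteness because $(p,i)<(q,j)$ in $P_1\otimes P_2$ forces $i=j$, so no comparisons cross components; and preservation of interface consistency by $\otimes$ is noted in Section~\ref{se:variants}.

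For the backward direction, let $P\!:n\to m$ be discrete and interface consistent with $k\ge 1$ points. By interface consistency the relation $\intord = \intord_s\cup\intord_t$ is a partial order on $P$ (taken to be empty on points outside $S_P\cup T_P$), so it admits a linear extension; choose one and enumerate the points as $p_1,\dotsc,p_k$ accordingly. For each $i$, let $Q_i\in\single4$ be the singleton whose source interface has size $|\{p_i\}\cap S_P|$ and whose target interface has size $|\{p_i\}\cap T_P|$. I claim $P\cong Q_1\otimes\dotsm\otimes Q_k$: the underlying posets both consist of $k$ incomparable points, matched by the enumeration, and the interface maps agree because, unfolding Definition~\ref{de:parglue}, the source interface of $Q_1\otimes\dotsm\otimes Q_k$ lists precisely those $p_i$ with $p_i\in S_P$ in the enumeration order; since this order extends $\intord_s$, the resulting list coincides with $s_P(1),\dotsc,s_P(n)$. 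The argument for target interfaces is symmetric.

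The main obstacle is justifying that $\intord$ really is a partial order, which is where interface consistency is essential. Each of $\intord_s$ and $\intord_t$ is individually a strict total order on its domain (the pull-back of $<_\Nat$ along an injection), so any potential cycle in $\intord_s\cup\intord_t$ must pass through points in $S_P\cap T_P$ on which the two orderings disagree; Definition~\ref{D:int-cons} is designed precisely to rule this out. Once this is in place, the rest of the argument is a direct bookkeeping check against the definition of $\otimes$.
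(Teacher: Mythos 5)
Your proposal is correct and follows essentially the same route as the paper: both directions are handled identically, with the converse obtained by linearly extending the order $\intord=\intord_s\cup\intord_t$ (whose well-definedness the paper already asserts after Definition~\ref{D:int-cons}) and reading off the singleton factors in that order. You merely spell out in more detail the bookkeeping for the interface maps and the acyclicity of $\intord$, which the paper leaves implicit.
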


\begin{proof}
  Any iposet in $\closure{\otimes}{\single4}$ is obviously discrete
  and interface consistent.  Conversely, let $(s,P,t):n\to m$ be
  discrete and interface consistent.  We extend the partial order
  $\intord$ defined after Definition~\ref{D:int-cons} to a linear
  order, also denoted $\intord$, on $P$.  If the components of $P$ are
  ordered so that $P_1\intord\dotsm\intord P_k$, then
  $P= P_1\otimes\dotsm\otimes P_k\in \closure{\otimes}{\single4}$,
  with $P_i \in\single4$ for $1\leq i \leq k$. \qed
\end{proof}

\begin{lemma}
  If $(s,P,t):n\to m$ is in ${\mcal S}_k$, then so is
  $(\emptyset,P,\emptyset):0\to 0$, and $\mcal G_k$ is the set of
  underlying posets of iposets in $\mcal S_k$.
\end{lemma}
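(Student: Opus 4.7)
The second assertion will be immediate once the first is established: since $\mcal G_k = \mcal S_k \cap \Pos$ by definition and an iposet with empty interfaces is the same data as a poset, the first assertion says that the underlying poset of every iposet in $\mcal S_k$ lies in $\mcal G_k$, while the reverse inclusion is trivial because each poset in $\mcal G_k$ is its own underlying poset.

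For the first assertion, a direct induction on $k$ stalls at the gluing step: in a decomposition $P = A \glue B$ the interior interfaces (the ones being identified) are not outer interfaces of $P$ and must not be erased, while only the outer source of $A$ and the outer target of $B$ remain as outer interfaces of $P$. I would therefore prove by induction on $k$ the strengthened claim that for every $(s,P,t) \in \mcal S_k$, all three variants $(\emptyset, P, t)$, $(s, P, \emptyset)$, and $(\emptyset, P, \emptyset)$ lie in $\mcal S_k$. The base case $k = 0$ is by inspection of $\mcal S_0 = \{\id_0\} \cup \single4$: erasing any subset of the interfaces of a singleton $\rti\rintpt\rtj$ yields another element of $\{\id_0\} \cup \single4$.

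For the inductive step, given $(s,P,t) \in \mcal S_{k+1}$, I would fix an expression for $P$ in $\glue$ and $\otimes$ with leaves in $\mcal S_k$ and proceed by sub-induction on this expression. A leaf falls under the outer inductive hypothesis. If the outermost operation is $\otimes$, say $P = L \otimes R$, then both the source and the target of $P$ decompose as concatenations of those of $L$ and $R$, so erasing commutes with $\otimes$ and the sub-IH applies to each factor. If $P = L \glue R$, then the source of $P$ coincides with the source of $L$ and the target of $P$ coincides with the target of $R$; hence $(\emptyset, P, t_P) = (\emptyset, L, t_L) \glue R$, and we only need the sub-IH on $L$, the target side being dual.

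The hard part will be the bookkeeping around levels: each tree rewrite must still produce an element of $\mcal S_{k+1}$. Because we only replace leaves in $\mcal S_k$ by other elements of $\mcal S_k$ (via the outer inductive hypothesis) and leave the expression structure intact, the result still evaluates in $\closureop{\glue\otimes}\mcal S_k = \mcal S_{k+1}$, and the strengthening of the claim is precisely what is needed to make this go through.
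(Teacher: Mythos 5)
Your argument is correct, but the paper gets there by a much shorter route. Where you run a double induction (on $k$, and inside that on the $\glue$/$\otimes$ expression tree) with the strengthened claim that each of $(\emptyset,P,t)$, $(s,P,\emptyset)$ and $(\emptyset,P,\emptyset)$ stays in $\mcal S_k$, the paper simply writes $(\emptyset,P,\emptyset)=S\glue(s,P,t)\glue T$ for a starter $S:0\to n$ and a terminator $T:m\to 0$: gluing with a starter on the left and a terminator on the right leaves the underlying poset untouched (Lemma \ref{le:trivial-deglue}) while switching the interfaces off, and since $S,T\in\closure{\otimes}{\single4}\subseteq\mcal S_1\subseteq\mcal S_k$ and $\mcal S_k$ is closed under $\glue$ for $k\ge 1$, the composite is again in $\mcal S_k$; the case $k=0$ is handled by inspection, exactly as in your base case. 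Your observation that a naive induction "stalls at the gluing step" because interior interfaces must not be erased is what the starter/terminator trick is designed to sidestep: the erasure is performed externally, by composition, rather than being pushed through the expression. Your approach does buy a marginally stronger statement for free (one-sided erasure preserves the level), but this also falls out of the paper's argument by gluing with only $S$ or only $T$, so on balance the external composition is the cleaner proof; your version is a valid, self-contained alternative that does not rely on the material of Section \ref{se:variants}. Your reduction of the second assertion to the first is the same as what the paper leaves implicit.
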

\begin{proof}
  This is straightforward for $k=0$.  For $k>0$, there is a
  presentation $(\emptyset,P,\emptyset)=S\glue (s,P,t)\glue T$, where
  $S:0\to n$ is a starter and $T:m\to 0$ is a terminator.  Since
  $S,T\in{\otimes}{\single4}\subseteq{\mcal S}_1$, the claim
  follows.\qed
\end{proof}

Next we show that the interval orders are precisely the posets in
$\mcal G_1$.  We use the following technical lemma, which uses a
suitable finite linear order as the image of an interval
representation instead of $\Real$.

\begin{lemma}
  \label{le:IIRepresentation}
  Every iposet $(s,P,t)$ with $P$ an interval order has an interval
  representation $(b,e):P\to Q$ into a finite linear order $Q$ such
  that $x\in S_P$ if and only if $b(x)$ is the least element $\bot$ of
  $Q$ and $x\in T_P$ if and only if $e(x)$ is the greatest element
  $\top$ of $Q$.  Moreover, we may assume that $|Q|\geq 3$ and
  $b(p)<\top$, $e(p)>\bot$ for all $p\in P$.
\end{lemma}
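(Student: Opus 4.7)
The plan is to start from Fishburn's representation theorem, which supplies an interval representation $(b',e')\colon P\to \Real$ for $P$ as an interval order, and then surgically modify it so that source-interface elements begin at the global minimum of $Q$ and target-interface elements end at the global maximum. Since $P$ is finite, the image $F=b'(P)\cup e'(P)\subset \Real$ is finite, so an interval representation into a \emph{finite} linear order is already within reach; the only real work is to pin down the endpoints correctly for interface elements.

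Concretely, I would let $Q=\{\bot\}\cup F\cup\{m\}\cup\{\top\}$, linearly ordered by extending the order of $F$ with $\bot$ as a new minimum, $\top$ as a new maximum, and $m$ as a fresh element strictly between them. The extra $m$ is inserted purely to guarantee $|Q|\ge 3$ in degenerate cases such as $P=\emptyset$ or $P$ consisting of a single point lying in $S_P\cap T_P$. Then define
\begin{equation*}
  b(x)=\begin{cases}\bot & \text{if } x\in S_P,\\ b'(x) & \text{otherwise,}\end{cases}
  \qquad
  e(x)=\begin{cases}\top & \text{if } x\in T_P,\\ e'(x) & \text{otherwise.}\end{cases}
\end{equation*}
By construction every $b'(x)$ and $e'(x)$ lies in $F$, hence strictly between $\bot$ and $\top$, so the equivalences $b(x)=\bot\Leftrightarrow x\in S_P$ and $e(x)=\top\Leftrightarrow x\in T_P$ hold, and the ``moreover'' inequalities $b(p)<\top$ and $e(p)>\bot$ follow at once.

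What remains is to check that $(b,e)$ is still an interval representation, i.e., $b(x)\le e(x)$ for every $x$ and $y<_P z$ iff $e(y)<b(z)$. The former is routine in all four combinations of $S_P/T_P$ membership. For the latter the case analysis splits on whether $z\in S_P$ or $y\in T_P$: if $z\in S_P$ then $b(z)=\bot$, so $e(y)<b(z)$ is impossible, matching the fact that $z$ is minimal in $P$ and thus has no predecessors; symmetrically if $y\in T_P$. In the remaining case $b(z)=b'(z)$ and $e(y)=e'(y)$, so the equivalence reduces to Fishburn's original representation. There is no substantial obstacle here; the heart of the argument is that minimality of $S_P$ and maximality of $T_P$ built into the iposet definition are exactly what is needed for the endpoint-pinning to be consistent with $<_P$---had interface elements not been required to be minimal/maximal, shrinking $b$ to $\bot$ or stretching $e$ to $\top$ could have destroyed or invented $<_P$-relations.
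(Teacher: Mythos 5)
Your proposal is correct and follows essentially the same route as the paper: take an existing interval representation, adjoin a fresh minimum $\bot$ and maximum $\top$, and redefine $b$ on $S_P$ and $e$ on $T_P$ to hit these new extremes, using minimality of $S_P$ and maximality of $T_P$ to verify the representation property is preserved. The only difference is that you spell out the finiteness reduction and the padding element needed for $|Q|\ge 3$ in degenerate cases, details the paper leaves implicit.
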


\begin{proof}
  For any interval presentation $(b',e'):P\to Q'$, let $Q$ be the linear
  order obtained by adjoining a new minimal element $\bot$ and a new
  maximal element $\top$ to $Q'$. Then
	\[
		b(x)=
		\begin{cases}
			b'(x) & \text{for }x\not\in S_P,\\
			\bot & \text{for }x\in S_P
		\end{cases}
		\qquad\text{ and }\qquad
		e(x)=
		\begin{cases}
			e'(x) & \text{for }x\not\in T_P,\\
			\top & \text{for }x\in T_P
		\end{cases}
	\]
	define an interval presentation $(b,e):P\to Q$ satisfying the
        claim.\qed
\end{proof}

\begin{proposition}
  \label{P:iorder}
  The class $\mcal S_1$ is equal to the class of interface consistent
  interval orders.  The class $\mcal G_1$ is equal to the class of
  interval orders.
\end{proposition}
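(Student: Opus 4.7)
The plan is to first establish the characterization of $\mcal S_1$, from which the statement about $\mcal G_1$ follows immediately: any poset regarded as an iposet with empty interfaces is vacuously interface consistent, so $\mcal G_1 = \mcal S_1 \cap \Pos$ coincides with the class of interval orders. Unfolding the definition and invoking Lemma \ref{le:ICDiscrete}, the class $\mcal S_1$ consists exactly of those iposets expressible as $D_1 \glue \cdots \glue D_N$ where each $D_i$ is discrete and interface consistent.

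For the inclusion of $\mcal S_1$ in the interface consistent interval orders, interface consistency is preserved by both $\otimes$ and $\glue$ and holds for every element of $\single4_0$. For the interval order property, given $P = D_1 \glue \cdots \glue D_N$ with each $D_i$ discrete, define $a(x) = \min\{i : x \in D_i\}$ and $c(x) = \max\{i : x \in D_i\}$. Unfolding Definition \ref{de:parglue}(2) together with discreteness of each $D_i$ then shows $x <_P y$ iff $c(x) < a(y)$, which is an interval representation of $P$ into the finite linear order $[\po N]$.

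For the reverse inclusion, suppose $(s, P, t) : n \to m$ is an interface consistent interval order and invoke Lemma \ref{le:IIRepresentation} to obtain an interval representation $(b, e) : P \to Q$ with $Q$ a finite linear order of least element $\bot$ and greatest element $\top$, satisfying $b(x) = \bot \Leftrightarrow x \in S_P$ and $e(x) = \top \Leftrightarrow x \in T_P$, and additionally $\bot < e(x)$ and $b(x) < \top$ for all $x$. Interface consistency ensures that the $s$-order on $S_P$ and the $t$-order on $T_P$ agree on $S_P \cap T_P$, so we can fix a linear order $\triangleleft$ on $P$ that simultaneously extends both. For each $q \in Q$ let $P_q = \{x : b(x) \le q \le e(x)\}$; this is an antichain because $x <_P y$ would force $e(x) < b(y) \le q \le e(x)$. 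Build a discrete iposet $D_q$ with underlying set $P_q$, whose source interface is $\{x \in P_q : b(x) < q\}$ for $q \ne \bot$ and $P_\bot = S_P$ for $q = \bot$, and whose target interface is $\{x \in P_q : e(x) > q\}$ for $q \ne \top$ and $P_\top = T_P$ for $q = \top$, with all interfaces numbered by $\triangleleft$. Since both numberings derive from the same linear order, each $D_q$ is interface consistent, and by Lemma \ref{le:ICDiscrete}, $D_q \in \closureop{\otimes}\single4_0$.

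Enumerating $Q$ as $q_1 < \cdots < q_N$, it remains to check that $P = D_{q_1} \glue \cdots \glue D_{q_N}$: for adjacent indices, $T_{D_{q_i}}$ and $S_{D_{q_{i+1}}}$ both equal $\{x \in P : b(x) \le q_i \text{ and } e(x) \ge q_{i+1}\}$ with the same $\triangleleft$-numbering, so the consecutive gluings are defined and identify the copies of each $x \in P$ across the slices $D_q$ for $q \in [b(x), e(x)]$; the resulting order matches $x <_P y \Leftrightarrow e(x) < b(y)$; and the two external interfaces agree with $s$ and $t$ by the choice of $\triangleleft$ on $S_P$ and $T_P$. The hard part is precisely this simultaneous bookkeeping: interface consistency is exactly the combinatorial hypothesis that permits $\triangleleft$ to be chosen so that every internal gluing between consecutive slices and both external interface matchings with $s$ and $t$ line up at once.
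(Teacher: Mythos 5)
Your proof is correct and follows essentially the same route as the paper: both directions implement the correspondence between interval representations (via Lemma \ref{le:IIRepresentation}) and decompositions into discrete interface consistent slices (via Lemma \ref{le:ICDiscrete}), with the linear extension of $\intord$ handling the interface numberings. The only differences are organizational --- you extract the interval representation directly from the slice indices $a(x),c(x)$ rather than proving closure of interval orders under binary gluing, and you build all slices $P_q$ in one step where the paper peels them off by induction on the length of the representation --- but the underlying construction is identical.
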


\begin{figure}[tbp]
  \centering
  \begin{tikzpicture}
    \begin{scope}[xscale=.4, yscale=.75, shorten <=-2pt,
      shorten >=-2pt, label distance=-.2cm]
      \begin{scope}
        \node (1) at (1.5,0) {$\vphantom{bf}a$};
        \node [label=right:{\tiny $1$}] (2) at (4.5,0) {$\vphantom{bf}b$};
        \node (3) at (1.5,-1) {$\vphantom{bf}c$};
        \node (4) at (4.5,-1) {$\vphantom{bf}d$};
        \node [label=right:{\tiny $2$}] (5) at (4.5,-2) {$\vphantom{bf}e$};
        \foreach \i/\j in {1/2,3/2,3/4,3/5} \path (\i) edge (\j);
        \node at (6.5,-1) {$\glue$};
      \end{scope}
      \begin{scope}[xshift=7.5cm]
        \node [label=left:{\tiny $1$}] (6) at (.5,0) {$\vphantom{bf}f$};
        \node (7) at (3.5,0) {$\vphantom{bf}g$};
        \node [label=left:{\tiny $2$}] (8) at (.5,-2) {$\vphantom{bf}h$};
        \node (9) at (3.5,-2) {$\vphantom{bf}i$};
        \foreach \i/\j in {6/7,8/7,8/9} \path (\i) edge (\j);
        \node at  (5.7,-1) {$=$};
      \end{scope}
      \begin{scope}[xshift=16cm]
        \node (1) at (-.5,0) {$\vphantom{bf}a$};
        \node (2) at (4.5,0) {$\vphantom{bf}bf$};
        \node (3) at (-.5,-1) {$\vphantom{bf}c$};
        \node (4) at (2,-1) {$\vphantom{bf}d$};
        \node (5) at (4.5,-2) {$\vphantom{bf}eh$};
      \end{scope}
      \begin{scope}[xshift=20cm]
        \node (6) at (.5,0) {\phantom{$\vphantom{bf}bf$}};
        \node (7) at (5.5,0) {$\vphantom{bf}g$};
        \node (8) at (.5,-2) {\phantom{$\vphantom{bf}eh$}};
        \node (9) at (5.5,-2) {$\vphantom{bf}i$};
      \end{scope}
      \foreach \i/\j in {1/2,3/2,3/4,3/5} \path (\i) edge (\j);
      \foreach \i/\j in {6/7,8/7,8/9} \path (\i) edge (\j);
      \foreach \i/\j in {1/9,4/7,4/9} \path (\i) edge (\j);
    \end{scope}
    \begin{scope}[-, xscale=.4, yshift=-2.7cm, shorten <=-4.5pt, shorten >=-4.5pt]]
      \begin{scope}
        \node (1l) at (0,0) {{\tiny $|$}};
        \node (1r) at (3,0) {{\tiny $|$}};
        \node (2l) at (4,0) {{\tiny $|$}};
        \node (2r) at (5,0) {{\tiny $|$}};
        \node (3l) at (0,-1) {{\tiny $|$}};
        \node (3r) at (1,-1) {{\tiny $|$}};
        \node (4l) at (2,-1) {{\tiny $|$}};
        \node (4r) at (4.5,-1) {{\tiny $|$}};
        \node (5l) at (2,-2) {{\tiny $|$}};
        \node (5r) at (5,-2) {{\tiny $|$}};
        \path (1l) edge node[above] {\small $\vphantom{bf}I(a)$}  (1r);
        \path (2l) edge node[above] {\small $\vphantom{bf}I(b)$}  (2r);
        \path (3l) edge node[above] {\small $\vphantom{bf}I(c)$}  (3r);
        \path (4l) edge node[above] {\small $\vphantom{bf}I(d)$}  (4r);
        \path (5l) edge node[above] {\small $\vphantom{bf}I(e)$}  (5r);
        \node at (6.2,-1) {$\glue$};
      \end{scope}
      \begin{scope}[xshift=7.4cm]
        \node (6l) at (0,0) {{\tiny $|$}};
        \node (6r) at (3,0) {{\tiny $|$}};
        \node (7l) at (4,0) {{\tiny $|$}};
        \node (7r) at (5,0) {{\tiny $|$}};
        \node (8l) at (0,-2) {{\tiny $|$}};
        \node (8r) at (1,-2) {{\tiny $|$}};
        \node (9l) at (2,-2) {{\tiny $|$}};
        \node (9r) at (5,-2) {{\tiny $|$}};
        \path (6l) edge node[above] {\small $\vphantom{bf}I(f)$}  (6r);
        \path (7l) edge node[above] {\small $\vphantom{bf}I(g)$}  (7r);
        \path (8l) edge node[above, pos=2.5] {\small $\vphantom{bf}I(h)$}  (8r);
        \path (9l) edge node[above] {\small $\vphantom{bf}I(i)$}  (9r);
        \node at  (6.2,-1) {$=$};
      \end{scope}
      \begin{scope}[xshift=14.8cm]
        \node (1l') at (0,0) {{\tiny $|$}};
        \node (1r') at (3,0) {{\tiny $|$}};
        \node (2/6l') at (4,0) {{\tiny $|$}};
        \node (3l') at (0,-1) {{\tiny $|$}};
        \node (3r') at (1,-1) {{\tiny $|$}};
        \node (4l') at (2,-1) {{\tiny $|$}};
        \node (4r') at (4.5,-1) {{\tiny $|$}};
        \node (5/8l') at (2,-2) {{\tiny $|$}};
      \end{scope}
      \begin{scope}[xshift=21cm]
        \node (2/6r') at (3,0) {{\tiny $|$}};
        \node (7l') at (4,0) {{\tiny $|$}};
        \node (7r') at (5,0) {{\tiny $|$}};
        \node (5/8r') at (1,-2) {{\tiny $|$}};
        \node (9l') at (2,-2) {{\tiny $|$}};
        \node (9r') at (5,-2) {{\tiny $|$}};
      \end{scope}
      \path (1l') edge node[above] {\small $\vphantom{bf}I(a)$}  (1r');
      \path (2/6l') edge node[above] {\small $\vphantom{bf}I(bf)$}  (2/6r');
      \path (3l') edge node[above] {\small $\vphantom{bf}I(c)$}  (3r');
      \path (4l') edge node[above] {\small $\vphantom{bf}I(d)$}  (4r');
      \path (5/8l') edge node[above, pos=.3] {\small $\vphantom{bf}I(eh)$}  (5/8r');
      \path (7l') edge node[above] {\small $\vphantom{bf}I(g)$}  (7r');
      \path (9l') edge node[above] {\small $\vphantom{bf}I(i)$}  (9r');
      \path [dashed] (5,-0.2) edge (5,-2);
      \path [dashed] (7.4,-0.2) edge (7.4,-2);
      \path [dashed] (19.8,-0.2) edge (19.8,-2);
    \end{scope}
  \end{tikzpicture}
  \caption{Two interval orders and their concatenation: above as
    iposets, below using their interval representations.  (Labels
    added for convenience.)}
  \label{fi:intcomp}
\end{figure}
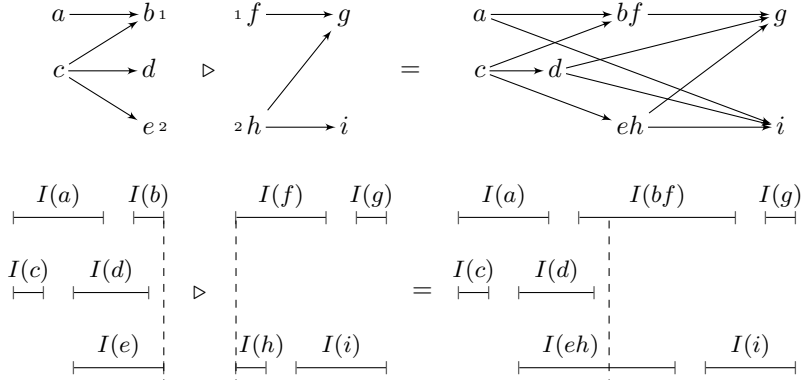

\begin{proof}
  It suffices to prove the first claim; the second one follows.  The
  underlying posets of all elements of $\closure{\otimes}{\single4}$
  are interval orders. To prove the forward implication, it thus
  suffices to show that the gluing of two interval orders yields an
  interval order.  Suppose $(b_P,e_P):P\to R$, $(b_Q,e_Q):Q\to S$ are
  interval representations satisfying the conditions of Lemma
  \ref{le:IIRepresentation}.  Then
  \begin{equation*}
    b(x)=
    \begin{cases}
      b_P(x) & \text{for }x\in P,\\
      b_Q(x) & \text{for }x\not\in P
    \end{cases}
    \qquad\text{ and }\qquad
    e(x)=
    \begin{cases}
      e_P(x) & \text{for }x\not\in Q,\\
      e_Q(x) & \text{for }x\in Q
    \end{cases}
  \end{equation*}
  define an interval representation
  $P\glue Q\to (R\sqcup S)/_{\top\!_R=\bot_S}$.  Figure \ref{fi:intcomp}
  shows an example.

  Conversely, suppose iposet $(s,P,t)$ is an interface consistent
  interval order with an interval representation
  $b,e:P\to Q=\{\bot<q_1<\dotsm<q_n<\top\}$ satisfying the conditions
  of Lemma \ref{le:IIRepresentation}.  We prove $P\in \mcal{S}_1$ by
  induction on $n$.  If $n=1$, then $b(p)\leq q_1\leq e(p')$ for all
  $p,p'\in P$.  Thus, $P$ is discrete and interface consistent and by
  Lemma \ref{le:ICDiscrete} belongs to
  ${\otimes}\single4\subseteq\single4_1$.
 
  Assume $n>1$.  Extend the order $\intord$ introduced below
  Definition \ref{D:int-cons} to an arbitrary linear order, also
  denoted $\intord$.
  Let $P'=\{p\in P\;|\; b(p)\leq q_1\}$,
  $P''=\{p\in P\;|\; e(p)\geq q_2\}$ be induced subposets of $P$.
  The intersection $P'\cap P''$ is a discrete poset.
  Let $m=|P'\cap P''|$ and $u:[m]\to (P'\cap P'',{\intord})$
  be the unique order-preserving function.
  Both iposets $(s,P',u)$ and $(u,P'',t)$ are interface consistent interval iposets,
  and both have shorter interval representations than $P$,
  namely,
  $b',e':P'\to\{\bot<q_1<\top\}$ and $b'',e'':P''\to\{\bot<q_2<\dots<q_n<\top\}$
  given by $b'(p)=b(p)$, $e''(p)=e(p)$ and
  \[
  	e'(p)=
  	\begin{cases}
  		e(p) & \text{for $e(p)\leq q_1$}\\
  		\top & \text{for $e(p)> q_1$}
  	\end{cases},
  	\qquad
  	b''(p)=
  	\begin{cases}
  		b(p) & \text{for $b(p)\leq q_2$}\\
  		\bot & \text{for $b(p)> q_2$}
  	\end{cases}.
  \]
  Hence, by inductive hypothesis, $P',P''\in \mcal{S}_1$.
  It is elementary to verify that  $(s,P,t)=(s,P',u)\glue (u,P'',t)$,
  which implies that $(s,P,t)\in \mcal{S}_1$. \qed

\end{proof}

To compare gluing-parallel posets with series-parallel ones we
construct a similar hierarchy for them.  Let
$\single1_0=\mcal G_0=\{\id_0,\rintpt\}\subset \single4_0$ be the set
containing the empty poset and the unique singleton without interfaces
and, like for the $\single4$ hierarchy above,
$\single1_{n+1}=\closureop{\glue\otimes} \single1_n$, and
$\single1_\ast =\bigcup_{n\ge 0} \single1_n$.

\begin{lemma}
  A poset is series-parallel if and only if it is in
  $\single1_\ast$.
\end{lemma}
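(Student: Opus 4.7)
The plan is to reduce the statement to the observation that on posets, viewed as iposets with empty interfaces, gluing composition coincides with serial composition. Once this is established, both directions follow by straightforward induction.

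First I would verify stability: every element of $\single1_0$ is a poset with empty interfaces, and both $\otimes$ and $\glue$ preserve this property. For $\otimes$, parallel composition of $(0\to 0)$-iposets yields a $(0\to 0)$-iposet. For $\glue$, the gluing of two $(0\to 0)$-iposets is always defined (since $m_1=0=n_2$), and the resulting iposet is again of type $0\to 0$. Hence $\single1_\ast\subseteq\iPos(0,0)\cong\Pos$.

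Next I would observe that on such iposets, $\glue$ agrees with the serial composition $\pomser$ from Section~\ref{se:posets}. Indeed, in Definition~\ref{de:parglue}(2), when $T_{P_1}=\emptyset=S_{P_2}$, the quotient relation is empty so the carrier is $P_1\sqcup P_2$, and the side condition $p\notin T_{P_1}\land q\notin S_{P_2}$ in the order clause becomes vacuous. The order thus reduces to $(p,i)<(q,j)\Leftrightarrow (i=j\land p<_iq)\lor i<j$, which is exactly the ordinal sum defining $P_1\pomser P_2$.

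With these two observations in hand, both directions are routine inductions. For the forward implication, I would induct on the construction of a series-parallel poset: the base cases $\id_0$ and $\rintpt$ lie in $\single1_0$, and if $P_1,P_2$ are series-parallel and lie in some $\single1_n$ by the inductive hypothesis, then $P_1\pomser P_2=P_1\glue P_2$ and $P_1\otimes P_2$ both lie in $\single1_{n+1}$ by definition of the closure. For the reverse implication, I would induct on the least $n$ with $P\in\single1_n$: the base case $n=0$ is immediate, and in the inductive step $P$ is either a $\glue$-product or an $\otimes$-product of elements of $\single1_{n-1}$, which are series-parallel by the inductive hypothesis; since $\glue=\pomser$ on posets, $P$ is built from series-parallel posets by serial or parallel composition and is therefore itself series-parallel.

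There is essentially no obstacle here: the only content is the identification $\glue=\pomser$ on $\iPos(0,0)$, which is visible directly from Definition~\ref{de:parglue}(2). Everything else is structural induction mirroring the definitions of $\single1_\ast$ and of sp-posets.
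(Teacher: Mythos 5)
Your proposal is correct and follows essentially the same route as the paper's own proof: establish that all elements of $\single1_\ast$ have empty interfaces, note that $\glue$ coincides with serial composition on such iposets, and conclude by induction in both directions. You merely spell out the two structural inductions that the paper compresses into ``The claim then follows.''
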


\begin{proof}
  The elements of $\single1_0$ are the empty poset and the singleton
  with empty interfaces, and if every element of any
  $\mcal Q\subseteq \iPos$ has empty interfaces, then the same holds
  for $\closure{\otimes}{\mcal Q}$ and $\closure{\glue}{\mcal Q}$.
  Thus any element of any $\mcal T_n$ has empty interfaces.  Further,
  if $P$ and $Q$ have empty interfaces, then $P\glue Q$ is a serial
  composition.  The claim then follows. \qed
\end{proof}

\begin{lemma}
  $\single1_n\subset \mcal G_n$ for all $n\ge 1$, and
  $\single1_\ast\subset {\mcal G}_\ast$.
\end{lemma}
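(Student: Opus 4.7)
My plan is to prove both inclusion and strictness by combining a routine induction with one concrete witnessing example.

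For the inclusion $\single1_n \subseteq \mcal G_n$, I would proceed by induction on $n$. The base case $n=0$ is immediate since $\single1_0 = \{\id_0, \rintpt\} \subseteq \single4_0 \cap \Pos = \mcal G_0$. For the inductive step, first observe that $\glue$ and $\otimes$ both preserve the property of having empty interfaces: if $P, Q \in \Pos$ (regarded as iposets $0 \to 0$), then both $P \otimes Q$ and $P \glue Q$ again have empty interfaces, so $\closureop{\glue\otimes}(\single1_n) \subseteq \Pos$ whenever $\single1_n \subseteq \Pos$. Combined with the evident monotonicity of $\closureop{\glue\otimes}$ and the inductive hypothesis $\single1_n \subseteq \mcal G_n = \single4_n \cap \Pos$, this yields $\single1_{n+1} = \closureop{\glue\otimes}(\single1_n) \subseteq \closureop{\glue\otimes}(\single4_n) \cap \Pos = \single4_{n+1} \cap \Pos = \mcal G_{n+1}$. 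Taking the union over $n$ gives $\single1_\ast \subseteq {\mcal G}_\ast$.

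For strictness, I would exhibit the poset $\N$ as a witness for every level at once. By Proposition~\ref{P:iorder}, $\mcal G_1$ is precisely the class of interval orders; since $\N$ does not contain an induced $\twotwo$ (the only pair of disjoint comparabilities in $\N$ has one endpoint in common with a third comparability), $\N$ is an interval order, hence $\N \in \mcal G_1 \subseteq \mcal G_n$ for every $n \geq 1$. On the other hand, as recalled in Section~\ref{se:posets}, the series-parallel posets are precisely those not containing $\N$ as an induced subposet, so $\N \notin \single1_\ast$ and a fortiori $\N \notin \single1_n$ for any $n$. This simultaneously refutes equality at every level and for the unions, giving $\single1_n \subsetneq \mcal G_n$ and $\single1_\ast \subsetneq {\mcal G}_\ast$.

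No step here is particularly difficult: the only point that needs care is the bookkeeping observation that empty interfaces are preserved by both operations, which is immediate from Definition~\ref{de:parglue}. The witness $\N$ does all the heavy lifting for strictness, and it works uniformly at every level because interval orders already appear at level $1$ of the gluing-parallel hierarchy while they cannot appear at any level of the series-parallel hierarchy.
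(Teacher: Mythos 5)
Your proposal is correct and follows essentially the same route as the paper: inclusion via monotonicity of the closure operators together with the fact that all elements of $\single1_n$ have empty interfaces, and strictness via the single witness $\N\in\mcal G_1\setminus\single1_\ast$. The only cosmetic difference is that you certify $\N\in\mcal G_1$ through the interval-order characterisation of Proposition~\ref{P:iorder}, where the paper relies on the explicit gluing decomposition of $\N$ given earlier; both are valid.
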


\begin{proof}
  $\single1_0\subset \single4_0$ implies
  $\single1_n\subseteq \single4_n$. Hence
  $\single1_n=\single1_n\cap \Pos\subseteq \single4_n\cap \Pos=\mcal
  G_n$, for all $n\ge 0$.  Thus also
  $\single1_\ast\subseteq \mcal G_\ast$, and the inequalities follow from
  $\single1_\ast\not\ni \N\in \mcal G_1$. \qed
\end{proof}

For an analogue of Proposition \ref{P:iorder} for sp-posets, recall
\cite{DBLP:books/sp/Vogler92} that a poset $P$ is a \emph{step
  sequence} if its incomparability relation $\incomp$ defined by
$x\incomp y \Leftrightarrow x\not<y \land y\not<x$ is transitive, so
that every point belongs to a unique $\incomp$-equivalence class.

\begin{proposition}
  A poset is in $\single1_1$ if and only if it is a step sequence.
\end{proposition}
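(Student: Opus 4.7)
The plan is to first unfold the definition: $\single1_1 = \closureop{\glue}\closureop{\otimes}\single1_0$, parsed as in Proposition~\ref{P:iorder}. Since every element of $\single1_0$ is a poset with empty interfaces, and both $\otimes$ and $\glue$ preserve this property, every element of $\single1_1$ is a poset with empty interfaces; moreover, on such posets $\glue$ coincides with $\pomser$, because the clause ``$p\notin T_{P_1}\land q\notin S_{P_2}$'' in Definition~\ref{de:parglue}(2) becomes vacuous. Consequently $\closureop{\otimes}\single1_0$ is exactly the class of finite antichains $[n]$, and $\single1_1$ is exactly the class of posets of the form $A_1\pomser\dotsm\pomser A_k$ with each $A_i$ an antichain. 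The proposition then reduces to the claim that a poset decomposes as an ordinal sum of antichains if and only if its incomparability relation $\incomp$ is transitive.

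The forward implication is immediate from the definition of $\pomser$: in $A_1\pomser\dotsm\pomser A_k$ two points are incomparable iff they belong to a common $A_i$, so $\incomp$ is the ``same level'' equivalence and in particular transitive. For the converse, assume $P$ is a step sequence; then $\incomp$ is an equivalence relation whose classes $A_1,\dotsc,A_n$ are antichains. I will define a binary relation $\prec$ on these classes by $A\prec B$ iff $x<y$ for some (equivalently, every) $x\in A$, $y\in B$, verify that $\prec$ is a strict linear order, and enumerate the classes so that $A_1\prec\dotsm\prec A_n$; then $P=A_1\pomser\dotsm\pomser A_n$ and hence $P\in\single1_1$.

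The crux is the ``some implies every'' step: given $x<y$ with $x\in A$, $y\in B$, $A\ne B$, and any $x'\in A$, $y'\in B$, one must show $x'<y'$. Since $A\ne B$, $x'$ and $y'$ cannot be incomparable (transitivity of $\incomp$ would place them in a common class), so they are comparable; it therefore suffices to rule out $y'<x'$. I will do this by a short case analysis on the relation between $x'$ and $y$: the subcases $x'<y$, $y<x'$, and $x'=y$ yield $y'<y$, $x<x'$, or $y'<y$, respectively, each contradicting $x\incomp x'$ or $y\incomp y'$ via transitivity of $<$; the remaining subcase $x'\incomp y$, combined with $y\incomp y'$, forces $x'\incomp y'$ by transitivity of $\incomp$, contradicting $y'<x'$. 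This case analysis is the only nontrivial ingredient; transitivity and irreflexivity of $\prec$ on classes are inherited from $<$, totality follows from comparability of elements in distinct classes, and the resulting ordinal-sum decomposition is then a routine consequence.
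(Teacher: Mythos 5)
Your proof is correct and follows essentially the same route as the paper's: identify $\closureop{\otimes}\single1_0$ with the antichains, so that $\single1_1$ consists of ordinal sums of antichains, and match these levels with the $\incomp$-equivalence classes in both directions. You merely spell out in detail the ``some implies every'' comparability step that the paper's two-sentence proof leaves implicit when it asserts that the $\incomp$-partition can be totally ordered.
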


\begin{proof}
  If $P=P_1\glue\dotsm\glue P_n$ for
  $P_1,\dotsc,P_n\in \closure{\otimes}{\single1_0}$, then each $P_i$
  is discrete, hence $\{P_1,\dotsc,P_n\}$ is the $\incomp$-partition
  of $P$.  Conversely, if $\incomp$ is transitive, then the
  $\incomp$-partition of $P$ can be totally ordered as
  $P_1<\dotsm<P_n$, and then $P=P_1\glue\dotsm\glue P_n$. \qed
\end{proof}

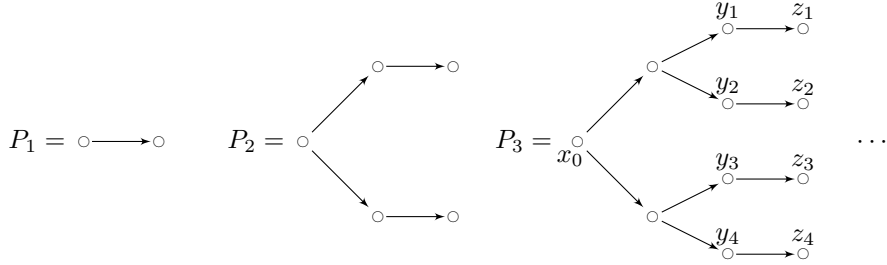
\begin{figure}[tbp]
  \begin{equation*}
    P_1=\! \vcenter{\hbox{%
        \begin{tikzpicture}
          \node (0) at (0,0) {\intpt};
          \node (1) at (1,0) {\intpt};
          \path (0) edge (1);
        \end{tikzpicture}}}
    \qquad%
    P_2=\! \vcenter{\hbox{%
        \begin{tikzpicture}
          \begin{scope}[shift={(1,1)}]
            \node (0a) at (0,0) {\intpt};
            \node (1a) at (1,0) {\intpt};
            \path (0a) edge (1a);
          \end{scope}
          \begin{scope}[shift={(1,-1)}]
            \node (0b) at (0,0) {\intpt};
            \node (1b) at (1,0) {\intpt};
            \path (0b) edge (1b);
          \end{scope}
          \node (0) at (0,0) {\intpt};
          \path (0) edge (0a);
          \path (0) edge (0b);
        \end{tikzpicture}}}
    \quad%
    P_3=\! \vcenter{\hbox{%
        \begin{tikzpicture}
          \path[use as bounding box] (-.3,0) to (3.3,0);
          \begin{scope}[yscale=.5, shift={(1,2)}]
            \begin{scope}[shift={(1,1)}]
              \node (0a) at (0,0) {\intpt};
              \node (1a) at (1,0) {\intpt};
              \path (0a) edge (1a);
              \node[above] at (1a) {$z_1$};
              \node[above] at (0a) {$y_1$};
            \end{scope}
            \begin{scope}[shift={(1,-1)}]
              \node (0b) at (0,0) {\intpt};
              \node (1b) at (1,0) {\intpt};
              \path (0b) edge (1b);
              \node[above] at (1b) {$z_2$};
              \node[above] at (0b) {$y_2$};
            \end{scope}
            \node (00a) at (0,0) {\intpt};
            \path (00a) edge (0a);
            \path (00a) edge (0b);
          \end{scope}
          \begin{scope}[yscale=.5, shift={(1,-2)}]
            \begin{scope}[shift={(1,1)}]
              \node (0a) at (0,0) {\intpt};
              \node (1a) at (1,0) {\intpt};
              \path (0a) edge (1a);
              \node[above] at (1a) {$z_3$};
              \node[above] at (0a) {$y_3$};
            \end{scope}
            \begin{scope}[shift={(1,-1)}]
              \node (0b) at (0,0) {\intpt};
              \node (1b) at (1,0) {\intpt};
              \path (0b) edge (1b);
              \node[above] at (1b) {$z_4$};
              \node[above] at (0b) {$y_4$};
            \end{scope}
            \node (00b) at (0,0) {\intpt};
            \path (00b) edge (0a);
            \path (00b) edge (0b);
          \end{scope}
          \node (0) at (0,0) {\intpt};
          \path (0) edge (00a);
          \path (0) edge (00b);
          \node[below] at (0) {$x_0$\;\;};
        \end{tikzpicture}}}
    \quad \dotsm
  \end{equation*}
  \caption{Sequence of separators for $\mcal T_n$ and
    $\mcal S_{n-1}$.}
  \label{fig:separators}
\end{figure}

Next we show that our three hierarchies are infinite, presenting a
sequence of witnesses for the strictness of inclusions.
Let $P_1=\rintpt\glue \rintpt$, and for $n\ge 1$,
$P_{n+1}=\rintpt\glue(P_n\otimes P_n)$.  See Figure
\ref{fig:separators} for a graphical representation. Note that all of
them are series-parallel posets.

\begin{lemma}
  \label{le:bintreedecomp}
  Let $P_{n+1}=Q\glue R$ be a non-trivial gluing decomposition for
  $n\geq 1$.  Then $R= P_n\otimes P_n$ as posets.
\end{lemma}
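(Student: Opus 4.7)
The plan is to apply Lemma \ref{le:characteristic-function} to the characteristic function $\phi=\phi_{Q\glue R}$ of the given decomposition and exploit the parallel structure of $P_{n+1}=\rintpt\glue(P_n\otimes P_n)$ to force $\phi^{-1}(1)=\{x_0\}$, where $x_0$ is the unique minimum of $P_{n+1}$. Write $A$ and $B$ for the two copies of $P_n$ inside $P_{n+1}\setminus\{x_0\}$; elements of $A$ are incomparable to elements of $B$. A short induction on the recursive definition $P_n=\rintpt\glue(P_{n-1}\otimes P_{n-1})$ shows that each of $A$, $B$ has a unique minimum ($m_A$, resp.\ $m_B$) lying strictly below every other element of that copy, and that the maximal elements of $B$ (resp.\ $A$) are also maximal in $P_{n+1}$. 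Non-triviality together with the last statement of Lemma \ref{le:characteristic-function} gives $\phi(x_0)=1$ (as $x_0$ is the only minimal element) and the existence of some $w\in\phi^{-1}(0)$.

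The main step is to show $\phi^{-1}(1)=\{x_0\}$. Suppose for contradiction that $\phi(y)=1$ for some $y\in A$ (the case $y\in B$ is symmetric). If any $w\in\phi^{-1}(0)$ were to lie in $B$, condition (A) would require $y<w$ in $P_{n+1}$, contradicting the incomparability of $A$ and $B$; hence $\phi^{-1}(0)\subseteq A$, so every element of $B$ has $\phi$-value $*$ or $1$. Now pick any maximal element $z'$ of $B$: since $z'$ is maximal in $P_{n+1}$, condition (A) forbids $\phi(z')=1$ (otherwise $z'<w$ would be required for some $w\in\phi^{-1}(0)\ne\emptyset$), so $\phi(z')=*$. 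For $n\ge 1$ we have $|B|\ge 2$, hence $m_B\ne z'$ and $m_B<z'$; condition (D) then forces $\phi(m_B)=1$. But applying (A) to $m_B$ and any $w\in\phi^{-1}(0)\subseteq A$ yields $m_B<w$, again contradicting incomparability of $A$ and $B$. Thus $\phi^{-1}(1)=\{x_0\}$.

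Consequently the underlying set of $R$ is $\phi^{-1}(*)\cup\phi^{-1}(0)=P_{n+1}\setminus\{x_0\}=A\cup B$, and by Definition \ref{de:parglue}(2) the order on $R$ coincides with the order of $P_{n+1}$ restricted to $A\cup B$, which by construction of $P_{n+1}=\rintpt\glue(P_n\otimes P_n)$ is exactly the order of $P_n\otimes P_n$. I expect the main subtle point to be the step propagating $\phi=1$ via (D) from a maximal $z'$ of $B$ down to $m_B$: this uses that $m_B$ is strictly below \emph{every} other element of $B$, a fact that follows directly from the recursive shape of $P_n$ but is what ties the incomparability argument to the specific structure of the sequence.
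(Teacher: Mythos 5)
Your proof is correct and follows essentially the same strategy as the paper's: both apply Lemma \ref{le:characteristic-function} to $\phi_{Q\glue R}$, use non-triviality to get $\phi(x_0)=1$ and a $0$-labelled point, and then exploit the recursive shape of $P_{n+1}$ via conditions \ref{le:characteristic-functionA} and \ref{le:characteristic-functionD} to force $\phi^{-1}(1)=\{x_0\}$. The only difference is in the combinatorial bookkeeping: the paper propagates the value $0$ outward to \emph{all} maximal elements $z_j$ through their predecessors $y_j$ and concludes that a $1$-labelled point must lie below every $z_j$, whereas you localize to one parallel copy and propagate the value $1$ down to its unique minimum to contradict incomparability of the two copies --- both routes are valid and of comparable length.
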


\begin{proof}
  Set $P=P_{n+1}$ and let $\phi=\phi_{Q\glue R}$ be the
  characteristic function of $Q\glue R$ (see
  Definition~\ref{D:char-fun}).  Then $P_a=\{x_0\}$ is the set of
  points for which the up-sets have maximal size because $x_0$ is the
  only minimal element of $P$, and $P_b=\{ z_i\mid i\in[2^n]\}$ is the
  set of points for which the down-sets have maximal size, because
  $|z_i\down|=n+1$ for all $i$.

  Lemma \ref{le:characteristic-function} implies that $\phi(x_0)=1$
  and $\phi(z_i)=0$ for some $i$.  For any
  $j\in[2^{n}]\setminus\{i\}$, $y_j\not< z_i$ and therefore
  $\phi(y_j)\ne 1$ by Lemma \ref{le:characteristic-function}\ref{le:characteristic-functionA}.  Then $\phi(z_j)=0$ because $y_j< z_j$, using
  Lemma
  \ref{le:characteristic-function}\ref{le:characteristic-functionD}.
  Now if $\phi(p)=1$, then $p<z_j$ for all $j$, which implies that
  $p=x_0$.  Finally,
  \begin{equation*}
    R=\phi^{-1}(\{*,0\})=P_{n+1}\setminus\{x_0\}=P_{n}\otimes P_{n}.
  \end{equation*}

  \vspace{-8ex}%
  \qed
\end{proof}

\begin{lemma}
  \label{le:bintree}
  $P_n\in \single1_n\setminus \mcal G_{n-1}$ for all $n\ge 1$.
\end{lemma}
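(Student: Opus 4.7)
My plan is to prove both parts of the claim by induction on $n$, starting with the easy membership.

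For $P_n\in\single1_n$, the base case $P_1=\rintpt\glue\rintpt$ lies in $\closureop{\glue}(\closureop{\otimes}\single1_0)=\single1_1$ since $\rintpt\in\single1_0$. For the step, monotonicity of the $\single1_\bullet$ hierarchy gives $\rintpt\in\single1_n$, so both $\rintpt$ and $P_n\otimes P_n$ lie in $\closureop{\otimes}\single1_n$, hence $P_{n+1}=\rintpt\glue(P_n\otimes P_n)\in\single1_{n+1}$.

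For $P_n\notin\mcal G_{n-1}$ I also induct on $n$. The base case $P_1\notin\mcal G_0$ holds because $\mcal G_0=\{\id_0,\rintpt\}$ has only iposets of size at most $1$, while $|P_1|=2$. For the step, assume $P_{n-1}\notin\mcal G_{n-2}$ and suppose for contradiction that $P_n\in\mcal G_{n-1}$. Let $m$ be minimal with $P_n\in\mcal G_m$; since $|P_n|\ge 2$ we have $1\le m\le n-1$. Write $P_n=Q_1\glue\cdots\glue Q_k$ with each $Q_i\in\closureop{\otimes}\single4_{m-1}$. If $k=1$, the connectedness and empty interfaces of $P_n$ force the parallel product to have a single non-empty factor in $\single4_{m-1}\cap\Pos=\mcal G_{m-1}$, contradicting the minimality of $m$. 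If $k\ge 2$, since $P_n$ is not discrete, at least one partial split is non-trivial, yielding $P_n=Q\glue R$ with $R\in\single4_m$; Lemma~\ref{le:bintreedecomp} then identifies the underlying poset of $R$ with $P_{n-1}\otimes P_{n-1}$.

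The key step is to descend through the $\single4_m$-construction of $R$ and extract $P_{n-1}$ at a low level. First, $P_{n-1}\otimes P_{n-1}$ itself admits \emph{no} non-trivial gluing decomposition: one root of each copy of $P_{n-1}$ belongs to $P_a$ and each leaf belongs to $P_b$, yet elements of distinct components are incomparable, violating Corollary~\ref{co:criterion-simple}. Writing $R=V_1\glue\cdots\glue V_\ell$, every partial split of $R$ is therefore trivial, and iterating Lemma~\ref{le:trivial-deglue} shows that some $V_{i_0}\in\closureop{\otimes}\single4_{m-1}$ has underlying poset $P_{n-1}\otimes P_{n-1}$. Decomposing $V_{i_0}=U_1\otimes\cdots\otimes U_p$ with $U_a\in\single4_{m-1}$, the two connected components $C_1,C_2\cong P_{n-1}$ distribute among the $U_a$: either (i)~some single $U_{a_0}$ has underlying poset $P_{n-1}\otimes P_{n-1}$, and I iterate the analysis on $U_{a_0}$ at level $m-1$; or (ii)~two distinct $U_{a_1},U_{a_2}$ have underlying posets $C_1,C_2\cong P_{n-1}$ respectively, in which case I pre- and post-glue $U_{a_1}$ with a discrete starter and terminator (both in $\closureop{\otimes}\single4\subseteq\single4_1\subseteq\single4_{m-1}$, by monotonicity and $\glue$-closure of $\single4_{m-1}$) to switch off its interfaces, producing an element of $\mcal G_{m-1}\subseteq\mcal G_{n-2}$ with underlying poset $P_{n-1}$ and contradicting the inductive hypothesis. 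Case~(i) can iterate at most down to level $\mu=1$, where each $U_a\in\single4_0$ has at most one point so that $V_{i_0}$ would have to be discrete, contradicting the non-discreteness of $P_{n-1}\otimes P_{n-1}$ for $n\ge 2$.

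The main obstacle is orchestrating this descent cleanly: the gluing-indecomposability of $P_{n-1}\otimes P_{n-1}$, the extraction of a $V_{i_0}$ with the right underlying poset, the component-counting in its parallel decomposition, and the interface-elimination via starters and terminators must all mesh so that the iteration terminates either with the inductive-hypothesis contradiction of case~(ii) or with the discreteness contradiction of case~(i) at level $1$.
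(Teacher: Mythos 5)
Your proof is correct and follows essentially the same route as the paper's: induction on $n$, Lemma \ref{le:bintreedecomp} to identify the tail of any non-trivial gluing decomposition of $P_n$ with $P_{n-1}\otimes P_{n-1}$, Corollary \ref{co:criterion-simple} to rule out non-trivial gluing decompositions of the latter, and the forced parallel decomposition to push a copy of $P_{n-1}$ down a level of the hierarchy, contradicting the inductive hypothesis --- you merely spell out the descent through the levels (minimal $m$, stripping of trivial gluings, termination at level $1$) that the paper leaves implicit. One small slip: ``since $P_n$ is not discrete, at least one partial split is non-trivial'' is not literally true, since every split of $Q_1\glue\dotsm\glue Q_k$ can be trivial even when the composite is non-discrete (e.g.\ when $Q_1$ is a starter); but in that situation the same stripping argument you later apply to $R$ reduces matters to your already-handled case $k=1$, so the gap is cosmetic.
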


\begin{proof}
  A simple induction shows that $P_n\in \mcal T_n$ for $n>1$. It
  remains to show that $P_n\notin \mcal G_{n-1}$ for all $n>1$. We
  proceed again by induction. In the base case, $P_1\notin \mcal
  G_0$. For the induction step, suppose $P_n\notin \mcal G_{n-1}$. We
  need to show that $P_{n+1}\notin \mcal G_n$. As $P_{n +1}$ is
  connected, it is of the form $Q\glue R$, and non-trivially
  so. Hence, by Lemma \ref{le:bintreedecomp}, $R=P_n\otimes P_n$. We
  know that $P_n\notin \mcal G_{n-1}$. It remains to show that
  $P_n\otimes P_n\notin \mcal G_{n-1}$. This follows from
  Corollary~\ref{co:criterion-simple}, as $P_a$ consists of the two
  minimal elements of $P_n\otimes P_n$ and $P_b$ of all its maximal
  elements. Nevertheless there are no arrows from one copy of $P_n$
  into the other. Hence $P_n\otimes P_n$ can only be written as a
  parallel composition of $P_n$ with itself. \qed
\end{proof}

\begin{corollary}
  $\mcal G_n\subset \mcal G_{n+1}$,
  $\single1_n\subset \single1_{n+1}$ and
  $\single1_\ast\not\subseteq \mcal G_n$ for any $n\ge 0$.
\end{corollary}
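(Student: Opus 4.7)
The plan is to derive all three claims as immediate consequences of the separating sequence $(P_n)_{n\ge 1}$ constructed above, using Lemma \ref{le:bintree} which guarantees $P_n\in \single1_n\setminus \mcal G_{n-1}$.

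First, I would dispatch the inclusions $\mcal G_n\subseteq \mcal G_{n+1}$ and $\single1_n\subseteq \single1_{n+1}$. Both hold trivially from the general fact that $\mcal Q\subseteq \closureop{\glue\otimes}\mcal Q$ for any $\mcal Q\subseteq \iPos$, since an iposet $P\in \mcal Q$ can be written as the unary composition $P$ itself. Intersecting with $\Pos$ preserves this inclusion, giving $\mcal G_n\subseteq \mcal G_{n+1}$.

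For strictness of $\mcal G_n\subset \mcal G_{n+1}$, I would exhibit $P_{n+1}$ as a witness: by Lemma \ref{le:bintree}, $P_{n+1}\in \single1_{n+1}$, and since $\single1_{n+1}\subseteq \mcal G_{n+1}$ (by the earlier lemma comparing $\single1$ and $\mcal G$ hierarchies), $P_{n+1}\in \mcal G_{n+1}$; but $P_{n+1}\notin \mcal G_n$, again by Lemma \ref{le:bintree}. For strictness of $\single1_n\subset \single1_{n+1}$, the same $P_{n+1}$ works: it lies in $\single1_{n+1}$, and it cannot lie in $\single1_n$, for otherwise it would lie in $\mcal G_n$ via $\single1_n\subseteq \mcal G_n$, contradicting $P_{n+1}\notin \mcal G_n$.

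Finally, $\single1_\ast\not\subseteq \mcal G_n$ follows by the same witness: $P_{n+1}\in \single1_{n+1}\subseteq \single1_\ast$, but $P_{n+1}\notin \mcal G_n$. There is no genuine obstacle here; the whole content lies in Lemma \ref{le:bintree}, and the corollary is merely the packaging of that lemma into statements about the hierarchies. The only subtlety I would want to double check is the boundary case $n=0$: here $\single1_0=\mcal G_0=\{\id_0,\rintpt\}$, and $P_1=\rintpt\glue \rintpt$ witnesses both $\single1_0\subsetneq \single1_1$ and $\mcal G_0\subsetneq \mcal G_1$, which matches the pattern above.
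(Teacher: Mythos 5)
Your proposal is correct and follows essentially the same route as the paper: the inclusions hold by definition of the closure operators, and Lemma \ref{le:bintree} supplies $P_{n+1}\in\single1_{n+1}\setminus\mcal G_n$ as the single witness for strictness of both hierarchies (via $\single1_{k}\subseteq\mcal G_{k}$) and for $\single1_\ast\not\subseteq\mcal G_n$. The check of the base case $n=0$ is a welcome addition but, as you note, already covered by the general pattern.
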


\begin{proof}
  $\mcal G_n\subseteq \mcal G_{n+1}$ and
  $\single1_n\subseteq \single1_{n+1}$ by definition, and by Lemma
  \ref{le:bintree},
  \begin{equation*}
    \mcal G_{n+1}\setminus \mcal G_n \cap \single1_{n+1}\setminus \mcal
    T_n = \single1_{n+1}\setminus \mcal G_n \supseteq \{P_{n+1}\}\ne
    \emptyset
  \end{equation*}
  for all $n$.  The last claim follows from the fact that
  $P_n\in \single1_\ast$ for all $n$. \qed
\end{proof}

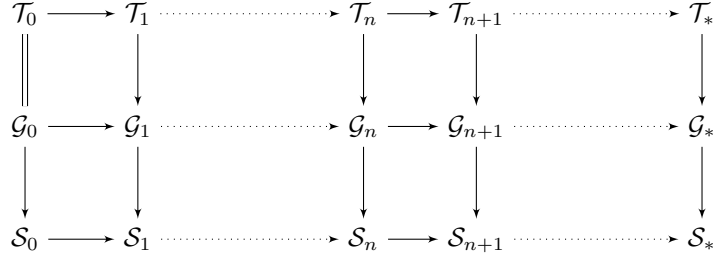
\begin{figure}[tbp]
  \hfill%
  \begin{tikzpicture}[x=1.5cm, y=1.5cm, shorten <=0pt, shorten >=0pt]
    \node (T0) at (0,0) {$\single1_0$};
    \node (T1) at (1,0) {$\single1_1$};
    \node (G0) at (0,-1) {$\mcal G_0$};
    \node (G1) at (1,-1) {$\mcal G_1$};
    \node (S0) at (0,-2) {$\single4_0$};
    \node (S1) at (1,-2) {$\single4_1$};
    \node (Tn) at (3,0) {$\single1_n$};
    \node (Tn+1) at (4,0) {$\vphantom{\single1_n}\smash{\single1_{n+1}}$};
    \node (Gn) at (3,-1) {$\mcal G_n$};
    \node (Gn+1) at (4,-1) {$\vphantom{\mcal G_n}\smash{\mcal G_{n+1}}$};
    \node (Sn) at (3,-2) {$\single4_n$};
    \node (Sn+1) at (4,-2) {$\vphantom{\single4_n}\smash{\single4_{n+1}}$};
    \node (Tinf) at (6,0) {$\vphantom{\single1_n}\smash{\single1_\ast}$};
    \node (Ginf) at (6,-1) {$\vphantom{\mcal G_n}\smash{\mcal G_\ast}$};
    \node (Sinf) at (6,-2) {$\vphantom{\single4_n}\smash{\single4_\ast}$};
    \path (T0) edge (T1);
    \path (G0) edge (G1);
    \path (S0) edge (S1);
    \path (Tn) edge (Tn+1);
    \path (Gn) edge (Gn+1);
    \path (Sn) edge (Sn+1);
    \path (T1) edge (G1);
    \path (G1) edge (S1);
    \path (Tn) edge (Gn);
    \path (Gn) edge (Sn);
    \path (Tn+1) edge (Gn+1);
    \path (Gn+1) edge (Sn+1);
    \path (Tinf) edge (Ginf);
    \path (Ginf) edge (Sinf);
    \path (T0) edge[-, double distance=1.5pt] (G0);
    \path (G0) edge (S0);
    \path (T1) edge[dotted] (Tn);
    \path (G1) edge[dotted] (Gn);
    \path (S1) edge[dotted] (Sn);
    \path (Tn+1) edge[dotted] (Tinf);
    \path (Gn+1) edge[dotted] (Ginf);
    \path (Sn+1) edge[dotted] (Sinf);
  \end{tikzpicture}
  \hfill\mbox{}
  \caption{Hasse diagram of equalities and proper inclusions between
    sp-posets (top row), gp-posets (middle) and gp-iposets (bottom
    row).}
  \label{fig:hierarchies}
\end{figure}

We summarise the relationships between the different sets in the next
proposition, see Figure \ref{fig:hierarchies} for an illustration.

\begin{proposition}
  \label{pr:sp-gp-gpi}
  For all $n\ge 0$, $\single1_n\subset \single1_{n+1}$,
  $\mcal G_n\subset \mcal G_{n+1}$ and
  $\single4_n\subset \single4_{n+1}$.  For all $n\ge 1$,
  $\single1_n\subset \mcal G_n\subset \single4_n$ and
  $\single1_0= \mcal G_0\subset \single4_0$.  For all $n,m\ge 1$,
  $\mcal G_n\not\subseteq \single1_m$ and
  $\single4_n\not\subseteq \mcal G_m$.
\end{proposition}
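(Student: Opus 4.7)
The plan is to assemble three families of witnesses, combine them with the identity $\mcal G_n = \single4_n \cap \Pos$ (essentially the content of the second lemma above), and then derive each clause as straightforward bookkeeping; most of the real work has already been done in the preceding lemmas and corollary.

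First I would dispose of the trivial containments. The non-strict inclusions $\single1_n \subseteq \single1_{n+1}$, $\mcal G_n \subseteq \mcal G_{n+1}$ and $\single4_n \subseteq \single4_{n+1}$ hold because $\id_0$ and each base element belong to the respective bottom layer and $\closureop{\glue\otimes} \mcal Q \supseteq \mcal Q$ for any $\mcal Q$ containing $\id_0$. The containment $\single1_n \subseteq \mcal G_n$ is the preceding lemma, $\mcal G_n \subseteq \single4_n$ is by definition, and $\single1_0 = \mcal G_0 = \{\id_0, \rintpt\}$ by inspection.

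Next I would introduce three witnesses: (i) the sequence $P_n$ from Lemma \ref{le:bintree}, satisfying $P_n \in \single1_n \setminus \mcal G_{n-1}$ for every $n \geq 1$, with each $P_n$ a poset; (ii) the poset $\N$, which by Proposition \ref{P:iorder} lies in $\mcal G_1$ (it is an interval order) but lies outside $\single1_\ast$ since it is not series-parallel; and (iii) the singleton iposet $\rtone\rinpt$, which lies in $\single4_0$ but not in $\mcal G_\ast$, because its source interface is nonempty while every element of $\mcal G_\ast$ is a poset.

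Finally I would assemble the remaining clauses. The strictness $\single1_n \subset \single1_{n+1}$ and $\mcal G_n \subset \mcal G_{n+1}$ were already recorded in the preceding corollary, and $\single4_n \subset \single4_{n+1}$ follows from (i): if $P_{n+1}$ belonged to $\single4_n$, then being a poset it would lie in $\single4_n \cap \Pos = \mcal G_n$, contradicting (i). The witness $\N$ in (ii) simultaneously certifies $\single1_n \subset \mcal G_n$ (for $n \geq 1$) and $\mcal G_n \not\subseteq \single1_m$ (for all $n, m \geq 1$), because $\N \in \mcal G_1 \subseteq \mcal G_n$ but $\N \notin \single1_\ast$. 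The witness $\rtone\rinpt$ in (iii) simultaneously certifies $\mcal G_n \subset \single4_n$ (for all $n \geq 0$) and $\single4_n \not\subseteq \mcal G_m$ (for all $n, m \geq 0$). There is no serious obstacle: the only subtlety is to keep track of which witness separates which pair of classes and to invoke the identity $\mcal G_n = \single4_n \cap \Pos$ at the right moment to transfer the $P_n$-witness from the middle hierarchy to the bottom one.
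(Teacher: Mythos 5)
Your proposal is correct and follows essentially the same route as the paper: it reduces everything to the previously established lemmas and corollary, uses $\N\in\mcal G_1\setminus\single1_\ast$ to separate the $\single1$ and $\mcal G$ hierarchies, uses an interfaced singleton together with $\mcal G_k=\single4_k\cap\Pos$ to separate $\mcal G$ from $\single4$, and transfers the witnesses $P_{n+1}$ via that same identity to get strictness of $\single4_n\subset\single4_{n+1}$. The only difference is that you spell out the bookkeeping that the paper leaves implicit.
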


\begin{proof}
  We have already shown all but the third and the last two claims, and
  $\single4_n\subset \single4_{n+1}$ follows from the definition
  $\mcal G_n= \single4_n\cap \Pos$.
  Now if $\mcal G_n\subseteq \single1_m$ for some $n,m\ge 1$, then
  also $\mcal G_n\subseteq \single1_*$ in contradiction to
  $\single1_*\not\ni \N\in \mcal G_1$.  This also shows that
  $\mcal T_n\subset \mcal G_n$ for $n\geq 1$.  The last claim is clear
  as $\single4_0$ contains iposets with nonempty interfaces. \qed
\end{proof}

\section{Forbidden Substructures}

In this section we collect some combinatorial properties of
gluing-parallel posets and iposets and expose eleven forbidden
substructures.  We do not know whether our list of forbidden
substructures is comprehensive.

It is well-known, and follows directly from their characterisation as
being $\N$-free, that series-parallel posets are closed under induced
subposets.  We show that the same is true for gp-posets.
  
\begin{proposition}
  \label{pr:gp-closed-induced}
  All induced subposets of gp-posets are gluing-parallel.
\end{proposition}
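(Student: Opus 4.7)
The plan is to strengthen the statement so that it admits a direct structural induction on the gp-iposet construction. I would prove: \emph{if $(s,P,t)$ is a gp-iposet and $R\subseteq P$ is an induced subposet, then $R$, equipped with source interface $S_P\cap R$ (in the order inherited from $s$) and target interface $T_P\cap R$ (inherited from $t$), is again a gp-iposet.} The proposition then follows at once, since a gp-poset is a gp-iposet with $S_P=T_P=\emptyset$, so every induced subposet inherits empty interfaces and is itself a gp-poset.

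The induction is on the definition of $\single4_\ast$ as the smallest set containing $\{\id_0\}\cup\single4$ and closed under $\glue$ and $\otimes$. In the base cases $P$ has at most one point, so the only induced subposets are $\id_0$ and $P$ itself, and both trivially satisfy the claim. In the parallel step $P=P_1\otimes P_2$, setting $R_i=R\cap P_i$ gives induced subposets of the factors; the inductive hypothesis makes each $R_i$ a gp-iposet, and unfolding Definition \ref{de:parglue}(1) yields $R\cong R_1\otimes R_2$ with exactly the claimed inherited interfaces.

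The gluing step $P=P_1\glue P_2$ is where the real work sits, and it is the main obstacle. Setting $R_i=R\cap P_i$ and applying the hypothesis makes each $R_i$ a gp-iposet; what must be verified is that $R_1\glue R_2$ is defined and equals $R$ as an iposet. Definedness follows because the gluing in $P$ identifies $t_{P_1}(i)$ with $s_{P_2}(i)$, so the target interface of $R_1$ inherited from $t_{P_1}$ and the source interface of $R_2$ inherited from $s_{P_2}$ coincide as enumerated sequences---both list the points $t_{P_1}(j)\in R$ in increasing order of $j$. Set-theoretically, this identification then gives $R_1\glue R_2 = R_1\cup R_2 = R$ inside $P$. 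The nontrivial check is on the order: for $x\in R_1$, $y\in R_2$, Definition \ref{de:parglue}(2) puts an arrow $x<y$ in $R_1\glue R_2$ iff $x\notin T_{R_1}$ and $y\notin S_{R_2}$; since $T_{R_1}=T_{P_1}\cap R$ and $S_{R_2}=S_{P_2}\cap R$, this reduces to $x\notin T_{P_1}$ and $y\notin S_{P_2}$, which is precisely the condition for $x<_P y$ restricted to $R$. Pairs inside a single $R_i$ match up because each $R_i$ is an induced subposet of $P_i$, and the inherited source $S_{P_1}\cap R=S_P\cap R$ and target $T_{P_2}\cap R=T_P\cap R$ of the composite are exactly the interfaces required by the strengthened claim.
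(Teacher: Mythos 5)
Your proof is correct and follows essentially the same route as the paper's: a structural induction over the gp-construction, intersecting the induced subposet with each factor of the top-level parallel or gluing decomposition and recomposing, with the glued-interface points handled in both factors simultaneously. The paper removes one point at a time (and then iterates) and leaves the strengthening to iposets with inherited interfaces implicit, whereas you treat an arbitrary induced subposet in one pass and make that strengthening explicit --- a difference of bookkeeping only, though your explicit tracking of the inherited interfaces in the gluing case is a welcome precision that the paper's proof glosses over.
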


\begin{proof}
  We prove by induction that removing one point from a gp-poset
  yields a gp-poset. The claim then follows by another induction.

  Our property is obviously true for the singleton poset.  Now let $P$
  be gluing-parallel and $x\in P$.  If $P$ has a non-trivial parallel
  decomposition $P=Q\otimes R$, then either $x\in Q$ or $x\in R$, and
  these cases are symmetric.  So suppose $x\in Q$. Then
  $Q\setminus\{x\}$ is gluing-parallel by the inductive hypothesis and
  $P\setminus\{x\}=(Q\setminus\{x\})\otimes R$.

  If $P$ has a non-trivial gluing decomposition $P=Q\glue R$, then
  either $x\in Q\setminus T_Q$, $x\in R\setminus S_R$, or
  $x\in T_Q=S_R$.  The first two cases can be handled just as above,
  given that $P\setminus\{x\}=(Q\setminus\{x\})\glue R$ respectively
  $P\setminus\{x\}=Q\glue(R\setminus\{x\})$.  For the last case,
  $P\setminus\{x\}=(Q\setminus\{x\})\glue(R\setminus\{x\})$. \qed
\end{proof}

By collecting all posets that are not gluing-parallel and weeding out
induced subposets, we obtain the following consequence of Proposition
\ref{pr:gp-closed-induced}.

\begin{corollary}
  There exists a set $\mcal F$ of posets such that for any
  $P, Q\in \mcal F$ with $P\ne Q$, neither $P$ nor $Q$ is an induced
  subposet of the other, and such that any poset is gluing-parallel if
  and only if it does not contain any element of $\mcal F$ as an
  induced subposet.
\end{corollary}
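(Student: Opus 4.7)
The plan is to take $\mcal F$ to be the collection of minimal elements, with respect to the induced-subposet relation, of the class of non-gluing-parallel posets. Since a proper induced subposet strictly decreases the cardinality of the carrier, the induced-subposet relation descends to a well-founded partial order on isomorphism classes of finite posets; consequently the class of non-gp posets (if nonempty) admits minimal elements, and $\mcal F$ is well defined. If every poset happened to be gp, one would simply take $\mcal F = \emptyset$.

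First I would verify the antichain property. Suppose $P, Q \in \mcal F$ with $P \neq Q$ as isomorphism classes, and that $P$ is an induced subposet of $Q$. A poset monomorphism $P \hookrightarrow Q$, being injective and order-reflecting, forces $|P| \leq |Q|$; equality would make the monomorphism a bijective monomorphism and hence an isomorphism, contradicting $P \neq Q$. So $P$ must be a strictly smaller non-gp induced subposet of $Q$, contradicting minimality of $Q$ among non-gp posets. Hence neither of $P, Q$ is an induced subposet of the other.

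For the characterisation, the forward direction is immediate from Proposition~\ref{pr:gp-closed-induced}: if $P$ were gp and contained some $F \in \mcal F$ as an induced subposet, then by that proposition $F$ would be gp, contradicting $F \in \mcal F$. For the converse, supposing $P$ is not gp, I would consider the nonempty finite set of non-gp induced subposets of $P$ (which contains $P$ itself), pick any $F$ of minimal cardinality in this set, and argue that $F \in \mcal F$: any strictly smaller non-gp induced subposet $F' \hookrightarrow F$ would, by transitivity of the induced-subposet relation, also embed as an induced subposet of $P$, contradicting the minimality of $F$ inside $P$; hence $F$ is globally minimal among non-gp posets.

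The proof is short; the only delicate point is upgrading local minimality of $F$ within $P$ to global minimality among all non-gp posets, which follows directly from transitivity of the induced-subposet relation. No substantial obstacle arises.
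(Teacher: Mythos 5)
Your proposal is correct and matches the paper's intent: the paper gives no formal proof beyond the remark that $\mcal F$ is obtained ``by collecting all posets that are not gluing-parallel and weeding out induced subposets,'' which is precisely your construction of the minimal non-gp posets under the induced-subposet relation, with Proposition~\ref{pr:gp-closed-induced} supplying the forward direction. Your explicit verification of the antichain property and of the passage from local to global minimality fills in details the paper leaves implicit, but the approach is the same.
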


Such a set $\mcal F$ is said to consist of \emph{forbidden
  substructures}.  We have already mentioned that sp-posets have
precisely one forbidden substructure, the poset $\N$; also interval orders
have precisely one forbidden substructure, the poset $\twotwo$.  The
question is now whether gp-posets admit a \emph{finite} set of
forbidden substructures.

\begin{proposition}
  \label{pr:forbidden1}
  The following five posets are forbidden substructures for gp-posets:
  \begin{gather*}
    \NN =\! \vcenter{\hbox{%
        \begin{tikzpicture}[y=.5cm]
          \node (0) at (0,0) {\intpt};
          \node (1) at (0,-1) {\intpt};
          \node (2) at (0,-2) {\intpt};
          \node (3) at (1,0) {\intpt};
          \node (4) at (1,-1) {\intpt};
          \node (5) at (1,-2) {\intpt};
          \path (0) edge (3) (1) edge (3) (1) edge (4) (2) edge (4)
          (2) edge (5);
        \end{tikzpicture}}}
    \qquad%
    \NPLUS =\! \vcenter{\hbox{%
        \begin{tikzpicture}[y=.5cm]
          \node (1) at (0,-1) {\intpt};
          \node (2) at (0,-2) {\intpt};
          \node (3) at (1,0) {\intpt};
          \node (4) at (1,-1) {\intpt};
          \node (5) at (1,-2) {\intpt};
          \node (6) at (2,0) {\intpt};
          \path (3) edge (6) (1) edge (3) (1) edge (4) (2) edge (4)
          (2) edge (5);
        \end{tikzpicture}}}
    \qquad%
    \NMINUS =\! \vcenter{\hbox{%
        \begin{tikzpicture}[y=.5cm]
          \node (0) at (0,0) {\intpt};
          \node (1) at (0,-1) {\intpt};
          \node (2) at (0,-2) {\intpt};
          \node (3) at (1,0) {\intpt};
          \node (4) at (1,-1) {\intpt};
          \node (-1) at (-1,-2) {\intpt};
          \path (0) edge (3) (1) edge (3) (1) edge (4) (2) edge (4)
          (-1) edge (2);
        \end{tikzpicture}}}
    \\
    \TC =\! \vcenter{\hbox{%
        \begin{tikzpicture}[y=.5cm]
          \node (0) at (0,0) {\intpt};
          \node (1) at (0,-1) {\intpt};
          \node (2) at (0,-2) {\intpt};
          \node (3) at (1,0) {\intpt};
          \node (4) at (1,-1) {\intpt};
          \node (5) at (1,-2) {\intpt};
          \path (0) edge (3) (1) edge (3) (1) edge (5) (2) edge (4)
          (2) edge (5) (0) edge (4);
        \end{tikzpicture}}}
    \qquad%
    \LN =\! \vcenter{\hbox{%
        \begin{tikzpicture}[y=.7cm]
          \node (0) at (0,0) {\intpt};
          \node (1) at (0,-1) {\intpt};
          \node (2) at (1,0) {\intpt};
          \node (3) at (1,-1) {\intpt};
          \node (4) at (2,0) {\intpt};
          \node (5) at (2,-1) {\intpt};
          \path (0) edge (2) (2) edge (4) (1) edge (4) (1) edge (3)
          (3) edge (5);
        \end{tikzpicture}}}
  \end{gather*}
\end{proposition}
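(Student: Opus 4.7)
The plan is to verify (a) none of the five posets is gluing-parallel, and (b) no one is an induced subposet of another; for full membership in some minimal $\mcal F$ as in the preceding corollary one would additionally need to check that every proper induced subposet of each is gluing-parallel, which is a routine but tedious finite enumeration using Proposition \ref{pr:gp-closed-induced}. Claim (b) is immediate: all five posets have exactly six points and are pairwise non-isomorphic ($\NPLUS$ has two minimal and three maximal elements whereas $\NMINUS$ has three and two, and the remaining three are distinguished by the multiset of down-set sizes among their maximal elements), and equal cardinality rules out strict induced containment.

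For (a), I use the decomposition principle: a nonempty gp-poset is either a singleton or a non-trivial parallel or gluing composition. All five posets are connected with more than one element, so no non-trivial parallel decomposition is possible, and it remains to exclude non-trivial gluing decompositions via Lemma \ref{le:criterion-for-gluing}. For $\NN$ and $\TC$ the simpler Corollary \ref{co:criterion-simple} suffices: in $\NN$, $P_a = \{a_2, a_3\}$ and $P_b = \{b_1, b_2\}$ with $a_3 \not< b_1$; in $\TC$, $P_a = \{a_1, a_2, a_3\}$ and $P_b = \{b_1, b_2, b_3\}$ with, for instance, $a_1 \not< b_3$.

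For $\NPLUS$ and $\LN$ the corollary is satisfied and one invokes Lemma \ref{le:characteristic-function} in full. In $\NPLUS$, $P_a = \{b_1\}$ and $P_b = \{c_1, a_2\}$ force $\phi(b_1) = 1$ and $\phi(c_1) = \phi(a_2) = 0$, whereupon the failed relations $b_1 \not< a_3$, $a_3 \not< a_2$, $a_3 \not< c_1$, $b_2 \not< c_1$ combined with conditions (A) and (D) force $\phi(a_3) = \phi(b_2) = *$, and then $b_2 < a_3$ violates (C). Analogously, in $\LN$ one obtains $\phi(p_2) = 1$, $\phi(p_5) = 0$, then $\phi(p_1) = 1$ (the case $\phi(p_1) = *$ forces $\phi(p_3) = 0$ and then violates (A) at $(p_2, p_3)$), and finally $\phi(p_4) = \phi(p_6) = *$, so $p_4 < p_6$ violates (C). Finally, $\NMINUS \cong \NPLUS^{\op}$ by matching Hasse diagrams; since $(P \glue Q)^{\op} = Q^{\op} \glue P^{\op}$ and $(P \otimes Q)^{\op} = P^{\op} \otimes Q^{\op}$, the class of gp-posets is closed under opposition, and $\NMINUS$ is not gp either.

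The main obstacle is the bookkeeping case analysis for $\NPLUS$ and $\LN$: one has to rule out all ternary assignments of $\phi$ on a few central elements, though satisfyingly the contradiction in both cases arises from condition (C) applied to a covering pair whose endpoints are forced into the $*$-fibre.
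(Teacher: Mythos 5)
Your proof is correct and follows essentially the same route as the paper's: dispose of $\NMINUS$ by opposition, rule out non-trivial gluings of $\NN$ and $\TC$ directly via Corollary \ref{co:criterion-simple}, and run the characteristic-function case analysis of Lemmas \ref{le:characteristic-function} and \ref{le:criterion-for-gluing} on $\NPLUS$ and $\LN$, with minimality of the substructures left as routine verification. The only immaterial differences are that you make the connectedness (no parallel decomposition) step and the pairwise non-containment explicit, and your $\LN$ contradiction lands on condition \ref{le:characteristic-functionC} where the paper's lands on \ref{le:characteristic-functionA}.
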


\begin{proof}
  $\NMINUS$ is the opposite of $\NPLUS$, so we can ignore it for this
  proof.  We use the notation introduced below Lemma
  \ref{le:characteristic-function} and label the vertices of the
  remaining four posets so that elements of $P_a$ are labelled $a_i$,
  elements of $P_b$ are labelled $b_i$, and the remaining elements are
  labelled $p_i$:
  \begin{gather*}
    \NN =\! \vcenter{\hbox{%
        \begin{tikzpicture}[y=.7cm, x=1.2cm]
          \node (0) at (0,0) {$\vphantom{bp}p_1$};
          \node (1) at (0,-1) {$\vphantom{bp}a_1$};
          \node (2) at (0,-2) {$\vphantom{bp}a_2$};
          \node (3) at (1,0) {$\vphantom{bp}b_2$};
          \node (4) at (1,-1) {$\vphantom{bp}b_1$};
          \node (5) at (1,-2) {$\vphantom{bp}p_2$};
          \path (0) edge (3) (1) edge (3) (1) edge (4) (2) edge (4)
          (2) edge (5);
        \end{tikzpicture}}}
    \qquad\quad%
    \NPLUS =\! \vcenter{\hbox{%
        \begin{tikzpicture}[y=.7cm, x=1.2cm]
          \node (1) at (0,-1) {$\vphantom{bp}a$};
          \node (2) at (0,-2) {$\vphantom{bp}p_1$};
          \node (3) at (1,0) {$\vphantom{bp}p_2$};
          \node (4) at (1,-1) {$\vphantom{bp}b_2$};
          \node (5) at (1,-2) {$\vphantom{bp}p_3$};
          \node (6) at (2,0) {$\vphantom{bp}b_1$};
          \path (3) edge (6) (1) edge (3) (1) edge (4) (2) edge (4)
          (2) edge (5);
        \end{tikzpicture}}}
    \qquad%
    \TC =\! \vcenter{\hbox{%
        \begin{tikzpicture}[y=.7cm, x=1.2cm]
          \node (0) at (0,0) {$\vphantom{bp}a_1$};
          \node (1) at (0,-1) {$\vphantom{bp}a_2$};
          \node (2) at (0,-2) {$\vphantom{bp}a_3$};
          \node (3) at (1,0) {$\vphantom{bp}b_1$};
          \node (4) at (1,-1) {$\vphantom{bp}b_2$};
          \node (5) at (1,-2) {$\vphantom{bp}b_3$};
          \path (0) edge (3) (1) edge (3) (1) edge (5) (2) edge (4)
          (2) edge (5) (0) edge (4);
        \end{tikzpicture}}}
    \\[1ex]
    \LN =\! \vcenter{\hbox{%
        \begin{tikzpicture}[y=.8cm, x=1.0cm]
          \node (0) at (0,0) {$\vphantom{bp}p_1$};
          \node (1) at (0,-1) {$\vphantom{bp}a$};
          \node (2) at (1,0) {$\vphantom{bp}p_2$};
          \node (3) at (1,-1) {$\vphantom{bp}p_3$};
          \node (4) at (2,0) {$\vphantom{bp}b$};
          \node (5) at (2,-1) {$\vphantom{bp}p_4$};
          \path (0) edge (2) (2) edge (4) (1) edge (4) (1) edge (3)
          (3) edge (5);
        \end{tikzpicture}}}
  \end{gather*}

  Let $P\in\{\NN, \NPLUS, \TC, \LN\}$ and assume that $P$ has a
  non-trivial gluing decomposition.  Lemma
  \ref{le:criterion-for-gluing} implies that there exists a function
  $\phi:P\to\{0,*,1\}$ satisfying
  Lemma \ref{le:characteristic-function} such that $\phi(a_i)=1$ and
  $\phi(b_i)=0$.

  For $P=\NN$ and $P=\TC$ we have $a_2\not<b_2$ in contradiction to
  $\phi(a_2)=1$ and $\phi(b_2)=0$.  For $P=\NPLUS$, $a\not<p_1\not<b_1$
  implies $\phi(p_1)\notin\{0,1\}$ by the same argument, thus
  $\phi(p_1)=*$. Similarly, $a\not<p_3\not<b_1$ implies $\phi(p_3)=*$.
  This contradicts $p_1<p_3$ by Lemma
  \ref{le:characteristic-function}\ref{le:characteristic-functionC}.

  For $P=\LN$, $a\not<p_2$ implies $\phi(p_2)\neq 0$, \ie
  $\phi(p_2)\in\{*, 1\}$, and then $p_1<p_2$ implies $\phi(p_1)=1$ by
  Lemma
  \ref{le:characteristic-function}\ref{le:characteristic-functionD}.
  Dually, $p_3\not<b$ implies $\phi(p_3)\neq 1$ and then
  $\phi(p_4)=0$.  But $p_1\not< p_4$, a contradiction.

  The proof that all proper subposets of these posets are
  gluing-parallel is trivial verification. \qed
\end{proof}

In addition to the above five posets, we have found six other
elements of $\mcal F$.  Whether $\mcal F$ is finite or infinite is
open.

\begin{figure}[tbp]
  \centering
  \begin{tikzpicture}[x=1.2cm, y=.7cm]
    \begin{scope}[shift={(0,0)}]
      \node (2) at (0,-1.3) {\intpt};
      \node (0) at (-.8,-2) {\intpt};
      \node (1) at (0,-2.7) {\intpt};
      \node (4) at (.5,0)  {\intpt};
      \node (6) at (1,-1.3) {\intpt};
      \node (3) at (1.3,-2) {\intpt};
      \node (7) at (1,-2.7) {\intpt};
      \node (5) at (.5,-4) {\intpt};
      \node (8) at (2,0)  {\intpt};
      \node (9) at (2,-4) {\intpt};
      \path (0) edge (4) (0) edge (5) (1) edge (3) (1) edge (6) (1) edge
      (9) (2) edge (3) (2) edge (7) (2) edge (8) (4) edge (8) (5) edge
      (9) (6) edge (8) (7) edge (9);
    \end{scope}
    \begin{scope}[shift={(3,-.5)}]
      \node (0) at (0,0)  {\intpt};
      \node (1) at (0,-3) {\intpt};
      \node (2) at (1,0)  {\intpt};
      \node (3) at (1,-1) {\intpt};
      \node (4) at (1,-2) {\intpt};
      \node (5) at (1,-3) {\intpt};
      \node (6) at (2,0)  {\intpt};
      \node (7) at (2,-3) {\intpt};
      \path (0) edge (2) (0) edge (3) (1) edge (4) (1) edge (5) (2)
      edge (6) (3) edge (7) (4) edge (6) (5) edge (7);
    \end{scope}
    \begin{scope}[shift={(6,0)}]
      \node (2) at (0,0)  {\intpt};
      \node (1) at (0,-4) {\intpt};
      \node (0) at (.7,-2) {\intpt};
      \node (5) at (1,-1.3) {\intpt};
      \node (4) at (1,-2.7) {\intpt};
      \node (6) at (1.5,0)  {\intpt};
      \node (3) at (1.5,-4) {\intpt};
      \node (7) at (2,-1.3) {\intpt};
      \node (8) at (2,-2.7) {\intpt};
      \node (9) at (2.8,-2) {\intpt};
      \path (0) edge (7) (0) edge (8) (1) edge (3) (1) edge (4) (1) edge
      (8) (2) edge (5) (2) edge (6) (2) edge (7) (3) edge (9) (4) edge
      (7) (5) edge (8) (6) edge (9);
    \end{scope}
    \begin{scope}[shift={(-.5,-5)}]
      \node (0) at (0,-1) {\intpt};
      \node (1) at (0,-3) {\intpt};
      \node (2) at (1,-2) {\intpt};
      \node (3) at (1,-1) {\intpt};
      \node (4) at (1,0)  {\intpt};
      \node (5) at (1,-4) {\intpt};
      \node (6) at (1,-3) {\intpt};
      \node (7) at (2,-1) {\intpt};
      \node (8) at (2,-3) {\intpt};
      \node (9) at (2,-2) {\intpt};
      \path (0) edge (2) (0) edge (3) (0) edge (4) (1) edge (2) (1)
      edge (5) (1) edge (6) (2) edge (7) (2) edge (8) (3) edge (9) (4)
      edge (7) (5) edge (8) (6) edge (9);
    \end{scope}
    \begin{scope}[shift={(2.5,-4.5)}]
      \node (0) at (0,-1) {\intpt};
      \node (1) at (0,-4) {\intpt};
      \node (2) at (1.5,-4) {\intpt};
      \node (3) at (1.5,-1) {\intpt};
      \node (4) at (1,-2) {\intpt};
      \node (5) at (1,-3) {\intpt};
      \node (6) at (2,-3) {\intpt};
      \node (7) at (2,-2) {\intpt};
      \node (8) at (3,-4) {\intpt};
      \node (9) at (3,-1) {\intpt};
      \path (0) edge (3) (0) edge (4) (0) edge (7) (1) edge (2) (1) edge
      (5) (1) edge (6) (2) edge (8) (3) edge (9) (4) edge (6) (4) edge
      (9) (5) edge (7) (5) edge (8) (6) edge (8) (7) edge (9);
    \end{scope}
    \begin{scope}[shift={(6.5,-5)}]
      \node (0) at (0,-1) {\intpt};
      \node (1) at (0,-3) {\intpt};
      \node (2) at (0,-2) {\intpt};
      \node (3) at (1,-2) {\intpt};
      \node (4) at (1,0)  {\intpt};
      \node (5) at (1,-4) {\intpt};
      \node (6) at (1,-1) {\intpt};
      \node (7) at (1,-3) {\intpt};
      \node (8) at (2,-1) {\intpt};
      \node (9) at (2,-3) {\intpt};
      \path (0) edge (3) (0) edge (4) (1) edge (3) (1) edge (5) (2) edge
      (6) (2) edge (7) (3) edge (8) (3) edge (9) (4) edge (8) (5) edge
      (9) (6) edge (8) (7) edge (9);
    \end{scope}
  \end{tikzpicture}
  \caption{Additional forbidden substructures for gp-posets.}
  \label{fi:forbidden2}
\end{figure}
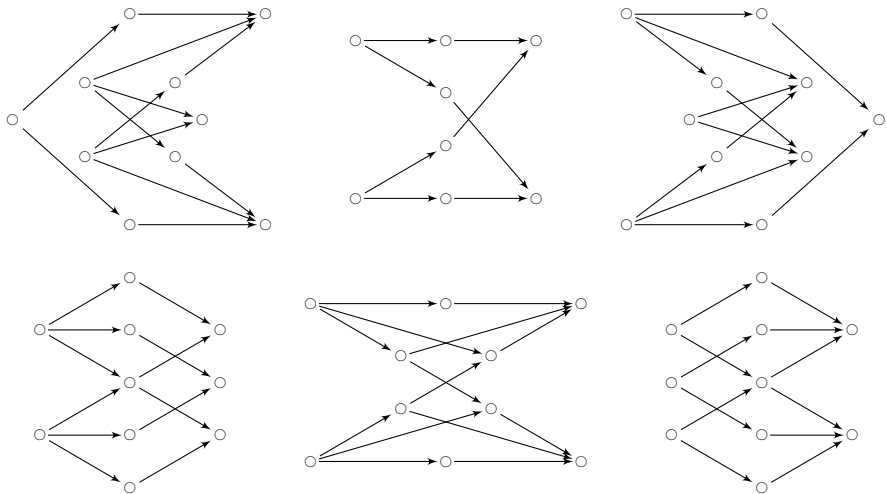

\begin{proposition}
  \label{pr:forbidden2}
  The six posets in Figure \ref{fi:forbidden2} are forbidden
  substructures for gp-posets.
\end{proposition}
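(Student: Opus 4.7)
The plan is to mimic the strategy used in the proof of Proposition \ref{pr:forbidden1}. For each of the six posets $P$ in Figure \ref{fi:forbidden2} I would verify two things: (i) $P$ is not gluing-parallel, and (ii) every proper induced subposet of $P$ is gluing-parallel. Once both are established, Proposition \ref{pr:gp-closed-induced} ensures that no other forbidden substructure can lie strictly between a proper induced subposet of $P$ and $P$ itself, so $P$ indeed belongs to a set $\mcal F$ of forbidden substructures.

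For (i), I would first observe that each of the six posets is connected (this is clear from Figure \ref{fi:forbidden2}), so any non-trivial decomposition in the gluing-parallel hierarchy must be a gluing $P = Q\glue R$. By Lemma \ref{le:criterion-for-gluing}, such a decomposition would yield a function $\phi: P\to\{0,*,1\}$ satisfying conditions \ref{le:characteristic-functionA}--\ref{le:criterion-for-gluingE}. The concrete work is to compute the sets $P_a$ and $P_b$ for each poset from the diagram; condition \ref{le:criterion-for-gluingE} then forces $\phi(P_a)=\{1\}$ and $\phi(P_b)=\{0\}$, and conditions \ref{le:characteristic-functionA}--\ref{le:characteristic-functionD} constrain the values of $\phi$ on the remaining points. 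For each poset I would run the resulting deductions to a contradiction, of the same three flavours encountered in Proposition \ref{pr:forbidden1}: a pair $x\in P_a$, $y\in P_b$ with $x\not< y$ (ruling out $\phi(x)=1$, $\phi(y)=0$ via \ref{le:characteristic-functionA}); a chain whose endpoints are simultaneously forced by \ref{le:characteristic-functionD} and \ref{le:criterion-for-gluingE} into an impossible configuration; or two points forced to take the value $*$ but comparable, violating \ref{le:characteristic-functionC}. Exploiting the opposite symmetry (several of the posets in Figure \ref{fi:forbidden2} are either self-opposite or come in opposite pairs) will approximately halve the case analysis.

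For (ii), I would argue by case analysis on the deleted point. Because each of the six posets has at most ten points, there are at most ten one-point deletions to check per poset; deeper deletions are then handled inductively using Proposition \ref{pr:gp-closed-induced}. For each one-point deletion I would exhibit an explicit gluing-parallel expression of the residue, typically by identifying a cut that was previously blocked by the removed vertex: after the deletion, some layer of the poset becomes a valid gluing interface or two components become independent and admit a parallel decomposition. In most cases the $\phi$-obstruction identified in (i) is resolved precisely by the removal of a single vertex, which suggests the decomposition to write down.

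The main obstacle is the sheer volume of bookkeeping rather than any single hard step: six posets, each requiring an enumeration of $P_a$ and $P_b$, a case analysis on $\phi$, and then roughly ten explicit gluing-parallel decompositions of induced subposets. I would organise the writeup by first stating the $P_a,P_b$ data and contradiction for each poset compactly in a table, then giving for each poset a list of gluing-parallel presentations of its one-point deletions, and finally invoking opposite-symmetry to avoid redundancy.
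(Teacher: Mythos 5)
Your overall framework (show each poset is not gluing-parallel, show every proper induced subposet is, and invoke Proposition \ref{pr:gp-closed-induced}) is the right one, but the core of your step (i) fails for these six posets, and this is precisely why the paper proves Proposition \ref{pr:forbidden2} by computer rather than by hand. The characteristic-function criterion of Lemmas \ref{le:characteristic-function} and \ref{le:criterion-for-gluing} is only a \emph{necessary} condition for the existence of a non-trivial gluing decomposition, and all six posets in Figure \ref{fi:forbidden2} actually satisfy it: as the remark immediately following the proposition and the conclusion section both point out, each of them \emph{does} decompose non-trivially along a maximal antichain (the $8$-point example is worked out explicitly as $P=Q\glue R$ with a four-point gluing interface). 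So there is no contradiction of the three flavours you describe to be found; admissible functions $\phi$ exist for every one of these posets. The genuine obstruction lies one level deeper: in every such decomposition at least one component is an iposet whose interfaces are ``swapped'' relative to any parallel decomposition --- the same phenomenon as the $\twotwo$ with swapped interfaces in Figure \ref{fi:ipng4} and Example \ref{ex:non-gp-posets-parallel} --- and is therefore not a gp-\emph{iposet}, even though the poset-level criterion is met.

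A correct manual proof would therefore have to enumerate, for each of the six posets, \emph{all} candidate gluing cuts (all admissible characteristic functions) and all candidate parallel splittings, and for each one recursively decide whether the resulting \emph{iposets} are gluing-parallel, which requires tracking interface numberings and interface consistency rather than just the underlying order. The contradictions arise from interface permutations, not from Lemma \ref{le:characteristic-function}. Your part (ii) is unproblematic in principle but is likewise pure enumeration. This is exactly the computation the authors delegate to their Julia program; a human-readable argument would need at minimum a new lemma classifying the non-gp iposets that can occur as components (extending Example \ref{ex:non-gp-posets-parallel} beyond four points), which the paper does not provide.
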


\begin{proof}
  By computer (see below).
\end{proof}

\begin{remark}
  The 8-point forbidden substructure in the centre of the top row of
  Figure \ref{fi:forbidden2} has a gluing decomposition along a maximal
  antichain:
  \begin{equation*}
    \begin{tikzpicture}[y=.5cm]
      \begin{scope}
        \node (0) at (0,0)  {\intpt};
        \node (1) at (0,-3) {\intpt};
        \node (2) at (1,0)  {\intpt};
        \node (3) at (1,-1) {\intpt};
        \node (4) at (1,-2) {\intpt};
        \node (5) at (1,-3) {\intpt};
        \node (6) at (2,0)  {\intpt};
        \node (7) at (2,-3) {\intpt};
        \path (0) edge (2) (0) edge (3) (1) edge (4) (1) edge (5) (2)
        edge (6) (3) edge (7) (4) edge (6) (5) edge (7);
        \node at (2.5,-1.5) {$=$};
      \end{scope}
      \begin{scope}[shift={(3,0)}]
        \node (0) at (0,0)  {\intpt};
        \node (1) at (0,-3) {\intpt};
        \node [label=right:{\tiny $1$}] (2) at (1,0)  {\outpt};
        \node [label=right:{\tiny $2$}] (3) at (1,-1) {\outpt};
        \node [label=right:{\tiny $3$}] (4) at (1,-2) {\outpt};
        \node [label=right:{\tiny $4$}] (5) at (1,-3) {\outpt};
        \path (0) edge (2) (0) edge (3) (1) edge (4) (1) edge (5);
        \node at (1.7,-1.5) {$\glue$};
      \end{scope}
      \begin{scope}[shift={(4.4,0)}]
        \node [label=left:{\tiny $1$}] (2) at (1,0)  {\inpt};
        \node [label=left:{\tiny $2$}] (3) at (1,-1) {\inpt};
        \node [label=left:{\tiny $3$}] (4) at (1,-2) {\inpt};
        \node [label=left:{\tiny $4$}] (5) at (1,-3) {\inpt};
        \node (6) at (2,0)  {\intpt};
        \node (7) at (2,-3) {\intpt};
        \path (2) edge (6) (3) edge (7) (4) edge (6) (5) edge (7);
      \end{scope}
    \end{tikzpicture}
  \end{equation*}
  Now the first of the iposets on the right-hand side is easily seen
  to be gluing-parallel, given that it is the parallel product of two
  three-point iposets.  The second iposet, however, has its interfaces
  swapped so that it is \emph{not} a parallel product; it is not
  gluing-parallel.
\end{remark}

We have written a Julia program, using the LightGraphs package
\cite{BrombergerF17}, to generate gluing-parallel iposets and analyse
their properties.\footnote{Our Julia code is available at
  \url{https://github.com/ulifahrenberg/pomsetproject/tree/main/code/20210618/},
  and the data at
  \url{https://github.com/ulifahrenberg/pomsetproject/tree/main/data/}.}
The next proposition is a result of these calculations.

\begin{proposition}
  The eleven posets of Propositions \ref{pr:forbidden1} and
  \ref{pr:forbidden2} are the only posets in $\mcal F$ with at most 10
  points. \qed
\end{proposition}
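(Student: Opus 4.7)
The proof is computational. The plan is to enumerate all isomorphism classes of finite posets on at most ten points and test each one for gluing-parallelness; from the non-gp posets, those that are minimal under the induced-subposet order form $\mcal F_{\le 10}$, which is then compared against the eleven listed posets.

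First I would generate a complete list of isomorphism classes of posets on $n$ points for $n = 1, \dots, 10$. The counts (OEIS A000112) reach roughly $2.5 \cdot 10^6$ at $n = 10$, which is tractable using a canonical-form generator such as the one available through the LightGraphs package, applied to Hasse or comparability graphs.

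Next I would decide, for each such poset $P$, whether it is gluing-parallel by a recursive algorithm operating on iposets. An iposet $(s, P, t)$ is gp iff it is one of the four singletons, admits a non-trivial parallel decomposition $Q \otimes R$ with both factors gp, or admits a non-trivial gluing decomposition $Q \glue R$ with both factors gp. Parallel decompositions correspond to partitioning $P$ into two non-empty $<$-disconnected induced subposets with compatibly distributed source and target interfaces, which is easy to enumerate. Gluing decompositions correspond to ternary labellings $\phi : P \to \{0, *, 1\}$ satisfying conditions (A)--(D) of Lemma \ref{le:characteristic-function}, together with a compatible linear order on $\phi^{-1}(*)$; Lemma \ref{le:criterion-for-gluing} restricts the search dramatically by pinning $\phi \equiv 1$ on $P_a$ and $\phi \equiv 0$ on $P_b$. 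Intermediate results are memoised by iposet canonical form. A poset $P$ is then gp iff the iposet $(\emptyset, P, \emptyset)$ is gp.

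Finally, from the resulting classification one extracts the non-gp posets and keeps only those that contain no non-gp proper induced subposet; a direct comparison with the eleven posets of Propositions \ref{pr:forbidden1} and \ref{pr:forbidden2} completes the verification. The main obstacle is computational tractability at $n = 10$: the naive search over ternary labellings is exponential, so aggressive pruning using Lemma \ref{le:characteristic-function} and Lemma \ref{le:criterion-for-gluing}, early rejection of any poset known to contain an already-identified forbidden substructure, and memoisation of gp-status by iposet isomorphism class are all essential. With these optimisations the computation is substantial but well within range of our Julia implementation.
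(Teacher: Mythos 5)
Your proposal is correct and follows essentially the same route as the paper: the paper offers no proof beyond the preceding remark that the claim ``is a result of these calculations'' performed by the authors' Julia/LightGraphs program, i.e., an exhaustive machine check over all posets on at most $10$ points. Your description — enumerate isomorphism classes, decide gp-ness by recursive (non-trivial) parallel and gluing decomposition with the pruning supplied by Lemmas \ref{le:characteristic-function} and \ref{le:criterion-for-gluing}, then keep the minimal non-gp posets under induced subposets and compare with the eleven listed ones — is a sound and adequately detailed account of exactly such a computation.
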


\begin{table}[tbp]
  \centering
  \caption{Different types of posets and iposets on $n$ points: all
    posets; sp-posets; interval orders; gp-posets; iposets;
    gp-iposets.  The last line refers to the column sequence's index
    in Neil Sloane's Encyclopedia of Integer Sequences,
    \url{http://oeis.org/}, if available.}
  \label{ta:numposets}
  \smallskip
  \begin{tabular}{r|rrrrrrrr}
    $n$ & $\textsf{P}(n)$ & $\textsf{SP}(n)$ & $\textsf{IO}(n)$ &
    $\textsf{GP}(n)$ & $\textsf{IP}(n)$ & $\textsf{GPI}(n)$ \\\hline
    0 & 1 & 1 & 1 & 1 & 1 & 1 \\
    1 & 1 & 1 & 1 & 1 & 4 & 4 \\
    2 & 2 & 2 & 2 & 2 & 17 & 16 \\
    3 & 5 & 5 & 5 & 5 & 86 & 74 \\
    4 & 16 & 15 & 15 & 16 & 532 & 419 \\
    5 & 63 & 48 & 53 & 63 & 4068 & 2980 \\
    6 & 318 & 167 & 217 & 313 & 38.933 & 26.566 \\
    7 & 2045 & 602 & 1014 & 1903 & 474.822 &
    289.279 \\
    8 & 16.999 & 2256 & 5335 & 13.943 & 7.558.620 &
    3.726.311 \\
    9 & 183.231 & 8660 & 31.240 & 120.442 & & \\
    10 & 2.567.284 & 33.958 & 201.608 & 1.206.459 & & \\
    11 & 46.749.427 & 135.292 & 1.422.074 \\[1ex]
    EIS & 112 & 3430 & 22493 & & 
  \end{tabular}
\end{table}

In other words, any further forbidden substructures must have at least
11 points.  Generating posets is notoriously difficult
\cite{DBLP:journals/order/BrinkmannM02}, so any improvements to the
above results are left for future work.  We have also used our
software to count non-isomorphic posets and iposets of different
types, see Table \ref{ta:numposets}.  As a refutation of a conjecture
in \cite{DBLP:conf/RelMiCS/FahrenbergJST20}, the sequence
$\textsf{GP}(n)$ is not equal to EIS sequence 79566, and there appears
to be no relation between gp-posets and $C_4$-free connected graphs.

We comment on some details in Table~\ref{ta:numposets}. The
difference $\textsf{P}(4) - \textsf{SP}(4) = 1$ is witnessed by the
$\N$ poset that is not series-parallel, whereas 
$\textsf{P}(4) - \textsf{IO}(4) = 1$ is witnessed by the $\twotwo$
poset. The difference $\textsf{P}(6) - \textsf{GP}(6) = 5$ is
witnessed by the structures in Proposition \ref{pr:forbidden1}. The
differences between $\textsf{IP}(n)$ and $\textsf{GPI}(n)$ for $n\le
4$
have been discussed in Example \ref{ex:non-gp-posets-parallel}.

\section{Conclusion}

We have introduced posets with interfaces (iposets) with a new gluing
operation that generalises the standard serial composition of posets,
but identifies some of the maximal points of the first poset with some
of the minimal points of the second.  In the interpretation of posets
as components of concurrent events, such interfaces allow events to
continue across components or, alternatively, decompositions of posets
with respect to synchronic cuts in time.

The idea of equipping posets or pomsets with interfaces in concurrency
theory can be traced back to Winkowski
\cite{DBLP:journals/ipl/Winkowski77}, who introduces a gluing
composition for pomsets without autoconcurrency where all maximal
elements of the first pomset are merged with the corresponding minimal
elements of the second.  Our operation generalises this to interfaces
consisting of subsets of maximal or minimal elements and gluing
composition guided by interface identification rather than labels.
Similar interface-based compositions of graphs and posets have been
explored by Courcelle and Engelfriet \cite{DBLP:books/daglib/0030804},
Fiore and Devesas Campos \cite{DBLP:conf/birthday/FioreC13}, and
Mimram \cite{DBLP:journals/corr/Mimram15}, but these do not introduce
precedence when gluing and give rise to simpler algebraic structures,
monoidal categories and PROPs, with strict interchange laws.  Further,
\cite{DBLP:conf/birthday/FioreC13, DBLP:journals/corr/Mimram15} remove
interfaces when gluing; whereas in our case, thinking of points as
events in concurrent systems, we should rather keep them.

We have shown that iposets under gluing and parallel composition and
with a suitable notion of subsumption form a 2-category with lax
tensor.  This generalises the concurrent monoids used in freeness
results for series-parallel pomsets.  Whether the iposets generated
from singletons by finitary gluing and parallel compositions form the
free 2-category with some adaptations remains open.

We have also shown that the hierarchy of gluing-parallel iposets
defined by the alternation levels of the two compositions is infinite.
This supports the idea that the algebra of these gp-iposets should be
freely generated in one way or another.  Further, our hierarchy lies
above that for series-parallel posets and captures interval orders at
its second alternation level.

Using a computer program, we have found five posets on 6 points, one
on 8, and another five on 10 points, which are forbidden induced
substructures for gp-posets.  The five substructures on 6 points
cannot be decomposed along interfaces, whereas the other six posets
above are decomposable, but their components have their interfaces
swapped.  If such swapping were permitted for gp-iposets, then the six
bigger forbidden substructures would disappear.  Exploring the
resulting \emph{gluing-parallel-symmetric} iposets is left for future
work.

We have recently defined languages of higher-dimensional automata as
sets of ipomsets whose underlying posets are interval orders
\cite{Hdalang}.  We conjecture that such automata are the machine
model for ipomset languages, but leave this exploration to future
work. A final avenue for future work is the development of a higher
dimensional Kleene algebra for the languages for these automata and
more general for languages of iposets that arise from lifting their
2-category to the powerset level.

\section*{Acknowledgement}

We are grateful to James Cranch and Samuel Mimram for numerous
discussions and clarifications on the subjects of this work, and to
the anonymous reviewers for helping us to greatly improve this paper.
The first author acknowledges the support of the \textit{Chaire ISC~:
  Engineering Complex Systems} and École polytechnique, where most of
this work was carried out.  The third author acknowledges financial
support from EPSRC, grant EP/R032351/1, an invited professorship at
\emph{Labex DigiCosme}, École polytechnique and a fellowship at
\emph{Collegium de Lyon} during the final stages of this work.

\newcommand{\Afirst}[1]{#1} \newcommand{\afirst}[1]{#1}

\end{document}